\newtheorem{theorem}{Theorem}[section]
\newtheorem{corollary}[theorem]{Corollary}
\newtheorem{definition}[theorem]{Definition}
\newtheorem{lemma}[theorem]{Lemma}
\newtheorem{notation}[theorem]{Notation}
\newtheorem{proposition}[theorem]{Proposition}
\newtheorem{remark}[theorem]{Remark}
\newenvironment{proof}[1][Proof]{\textbf{#1.} }{\ \rule{0.5em}{0.5em}}
\def \<{\langle}
\def \>{\rangle}
\def \l{\lambda}
\def \kf{\frak k}
\def \p{\partial}
\def \beq{\begin{equation}}
\def \eeq{\end{equation}}
\def \eref{\eqref}
\def \n{\nabla}
\def \lrc{\lrcorner}
\def \w{\omega}
\def \A{{\mathcal A}}
\def \B{{\mathcal B}}
\def \C{{\mathcal C}}
\def \D{{\mathcal D}}
\def \E{{\mathcal E}}
\def \F{{\mathcal F}}
\def \G{{\cal G}}
\def \H{{\cal H}}
\def \H^0{{\cal H}^0 or}
\def \L{\Lambda}
\def \R{\mathbb R}
\def \V{{\mathcal V}}
\def \Y{{\cal Y}}
\numberwithin{equation}{section}
\begin{document}

\title{Initial behavior of solutions to the Yang-Mills heat equation.
\footnote{\emph{Key words and phrases.} Yang-Mills, heat equation, weakly parabolic,
    gauge groups,  Gaffney-Friedrichs inequality, infinite covariant differentiability.  \newline
 \indent
\emph{2010 Mathematics Subject Classification.}
 Primary; 35K58, 35K65,  Secondary; 70S15, 35K51, 58J35.}
      }

\author{Nelia Charalambous\thanks{The first author was partially supported by a University of Cyprus Start-Up grant.}
  \and
 Leonard Gross
}

\newcommand{\Addresses}{{
  \bigskip
  \footnotesize

 N.~Charalambous, \textsc{Department of Mathematics and Statistics, University of Cyprus, Nicosia, 1678, Cyprus}\par\nopagebreak
  \textit{E-mail address}: \texttt{nelia@ucy.ac.cy}

  \medskip

  L.~Gross, \textsc{Department of Mathematics, Cornell University,  Ithaca, NY 14853-4201, USA}\par\nopagebreak
  \textit{E-mail address}: \texttt{gross@math.cornell.edu}

}}

\date{\today}

\maketitle

\abstract{We explore  the small-time behavior of  solutions to the Yang-Mills heat equation
with rough initial data.
We consider solutions $A(t)$  with initial value  $A_0\in H_{1/2}(M)$, where $M$ is
 a bounded convex region in $\R^3$ or all of $\R^3$. The behavior, as $t\downarrow 0$,
 of the $L^p(M)$ norms of the  time derivatives
 of $A(t)$ and its curvature $B(t)$ will be determined for $p=2$ and $6$, along with the $H_1(M)$
 norm of these derivatives.
}

\tableofcontents

\section{Introduction} \label{secIntro}

In this article we study the initial behavior of solutions to the Yang-Mills heat equation over
a region $M$ in $\R^3$.  Denote by $K$ a compact connected Lie group with Lie algebra $\kf$.
A $\kf$ valued 1-form over $M$ may be written as
\beq
A = \sum_{j=1}^3 A_j(x) dx^j,                         \label{I1}
\eeq
with coefficients  $A_j(x) \in \kf$.
The curvature of $A$ is the $\kf$ valued 2-form given by $B=dA + A\wedge A $.
The Yang-Mills heat equation  is the weakly parabolic equation for a time dependent
$\kf$ valued 1-form $A(t)$ over $M$  given by
\beq
\frac{\p}{\p t} A(x,t) =  -d_{A(t)}^* B(x,t),                 \label{I2}
\eeq
wherein $d_A^* =d^* +$[the interior product by $ad\, A(t)$], and $B(x,t)$ is the curvature of $A(t)$ at $x$.
We will always take $K$ to be a subgroup of the orthogonal, respectively unitary, group of a finite dimensional
real, respectively complex, inner product space $\V$.

The Yang-Mills heat equation is only weakly parabolic since the second
 order derivative terms on the right side  of \eref{I2} are $-d^*d A$, which are missing
  `half' of the Laplacian on 1-forms $-\Delta=d^*d + d d^*$.
  In \cite{CG1} we  proved the existence and uniqueness of solutions to this equation
  for initial data $A_0$ in $H_1(M)$. In \cite{G70} the existence and uniqueness was proven
  for initial data in $H_{1/2}(M)$. The Sobolev index 1/2 is the critical index for the Yang-Mills
  heat equation in spatial dimension three. We will be concerned with solutions to \eref{I2}
  for which the initial value $A_0$ is in $H_{1/2}(M)$. In this case the curvature
  $B(t)$ can be expected to blow up in the $L^2(M)$ sense as $t\downarrow 0$ since,
  informally, $B(t)$    can be expected
  to converge to its initial value $B_0$
   only in the sense of the negative    Sobolev space $H_{-1/2}(M)$.
  Higher derivatives of  $A(t)$ can be expected to blow up more quickly as $t\downarrow 0$. Our study is motivated by a desire to understand the nature of the singularities of gauge
 covariant derivatives of a solution to the Yang-Mills heat equation as time decreases to zero.
  In this article we will study the $L^p(M)$ behavior   of various gauge covariant derivatives
   of $A(t)$ as $t\downarrow  0$. The values $p=2$ and $p=6$ (and a fortiori all $p$ in between)
   are of sole interest in this paper because only first order Sobolev inequalities  can be
    effectively used in our energy methods.
    Concerning higher values of $p$ see Remark \ref{remptwise}.
     Apriori estimates of first, second and third order spatial covariant
     derivatives have already been used in  our previous work \cite{CG1,CG2, G70} to prove existence
     and uniqueness of solutions to \eref{I2}.

A function $g:M\rightarrow K$ induces
a gauge transformation of a time dependent connection form on $M$ by the definition
\beq
A^g(x,t) = g(x)^{-1} A(x,t) g(x) + g(x)^{-1} dg(x).                    \label{I3}
\eeq
If $A(\cdot, \cdot)$
 is a solution to  the Yang-Mills
heat equation \eref{I2} then so is $A^g(\cdot, \cdot)$,
at least if $g$ satisfies some mild  regularity conditions. It is already clear from this
that the Yang-Mills heat equation does not smooth all initial data, for if
 $A(x,t)$ is a solution with initial value  $A_0(x)$ then $A^g(x,t)$ is the solution
  with initial value $A_0^g(x)$,  and consequently, even if $A(x,t)$ is very smooth,
  the solution $A^g(x,t)$ need be no smoother  than $g^{-1}dg$. Our goal is to show
  that solutions to \eref{I2} are infinitely differentiable in a gauge covariant sense
  for $t >0$, even for rough initial data,
  and to determine the nature of the singularities of the derivatives as $t\downarrow 0$.
  For the class of initial
  data that we are interested in, namely $A_0 \in H_{1/2}(M)$, the formula \eref{I3} suggests that
  the corresponding class of allowed gauge functions should include functions $g \in H_{3/2}(M)$.
  A precise definition of this class, which makes it into a complete topological group, will be
  given in Section   \ref{secstate}. With these initial data, which are in fact an invariant class
  under  these gauge transformations, it can be seen easily from \eref{I3} that a solution
  need not be even once continuously differentiable in the ordinary sense.
  There are in fact solutions that are not in the Sobolev space $W_1(M)$ for any $t >0$.
  We are going to address this by computing only gauge covariant derivatives.
 The $L^p(M)$ norm of such a derivative is fully gauge invariant and therefore descends
 to a function on the quotient space $\C \equiv \{\text{connection forms}\}/ \text{Gauge group}$,
 which is a  space of connections over $M$ as well as a version of the  configuration space
 for the classical Yang-Mills field theory. We will establish bounds on these gauge invariant
 norms by functions
 of the action of the solution $A(\cdot)$, which are also fully gauge invariant and which therefore
 also descend to functions on $\C$. We obtain thereby bounds on the covariant derivatives
 given by inequalities between functions on the quotient space $\C$ itself.
  It will be shown in \cite{G72} that the  space $\C$
 has a natural Riemannian metric on it which makes it into a complete Riemannian manifold.
  Our main results can  be interpreted as analysis
 over this manifold. Remark \ref{remqs} makes this a little more precise.

The technical problem of computing high order derivatives of non-differentiable functions
will be carried  out by gauge transforming a solution to a smooth function, which can be done for a short time, \cite{G70},  and then gauge transforming the derivative back.

  For our choice of the region $M \subset \R^3$ we will take either $M=R^3$ or
 take $M$ to be the closure of a bounded open convex subset of $\R^3$ with smooth boundary.
 Undoubtedly  our methods will apply to other regions also with minor modification
 as well as to other manifolds.
 For example, they can be applied to compact three manifolds without boundary,
  and compact three manifolds with convex boundary.
 But we are going to focus just on regions in $\R^3$, which we believe to be adequate
 for our anticipated applications to quantum field theory. In case $M\ne\R^3$ we must
 impose boundary conditions on $A(t)$ for $t >0$. The two natural boundary conditions
 that we will use are the Neumann-like boundary conditions
 (absolute boundary conditions in the sense of Ray and Singer \cite{RaS})
 or the Dirichlet-like boundary conditions (relative boundary conditions).
 For our anticipated applications to quantum field theory we will also ultimately
 need to use Marini boundary conditions, introduced in \cite{Ma3,Ma4,Ma7,Ma8},
which set the normal component of the curvature to zero on the boundary.
Results for Marini
boundary conditions will be deduced elsewhere from our results for
 Neumann-like boundary conditions.

\section{Statement of Results} \label{secstate}

\subsection{Notation.} \label{secnote}

Throughout this paper $M$ will denote either $\R^3$ or the closure of a bounded open
 set in $\R^3$ with smooth boundary. In the latter case we will always assume that $M$ is convex in the sense that the second fundamental form of $\p M$ is
 everywhere non-negative.

 We consider a product bundle over $M$,  $M\times \V \rightarrow M$,  where $\V$ is a finite dimensional  real or complex vector space with an inner product. Let  $K$  be a compact connected subgroup of the orthogonal, respectively unitary,
  group in $End\ \V$. We denote by $\frak k$ the Lie algebra  of $K$, which
  can be identified with a  real  subspace of $End\ \V$.

Let  $\<\cdot, \cdot\>$  be an  $Ad\ K$ invariant inner product on $\frak k$
 with associated norm $|\xi |_{\frak k}$ for $\xi \in \frak k$. For $\frak k$ valued
  $p$-forms $\w$ and $\phi$  the $L^2$ pairing  is given by
  $(\w, \phi) = \int_M\<\w(x), \phi(x)\>_{\L^p\otimes \frak k} d\, \text{Vol}$
    with induced $L^2$ norm $\|\w \|_2^2 = (\w, \w)$.
   We define the $W_1$ norm of $\w$ by
\beq
\| \w \|_{W_1(M)}^2
 = \int_M |\n \w|_{\R^3\otimes \Lambda^p\otimes\frak k}^2 d\, \text{Vol}\ \
                                        + \| \w \|_2^2,    \label{ymh2}
 \eeq
where  $\n \w$ is constructed from the weak derivatives.
 Define $W_1 = W_1(M) = \{ \w: \|\w\|_{W_1(M)} <\infty\}$. This is the Sobolev space of order one, without boundary conditions.

If $u= \sum_{|I|=r} u_I dx^I$ and $ v =\sum_{|J|=p} v_J dx^J$ are $End \ \V$  valued forms then their wedge product, $u\wedge v = \sum_{I,J} u_I v_J dx^I\wedge dx^J$, is another   $End\ \V$ valued form.   When the appropriate action of $u$ on $v$ is via $ad\; u$  then we will write $[u\wedge v] = \sum_{I,J} [u_I, v_J]dx^I\wedge dx^J$. This will be the case, for example,  when $u$ is an $End\ \V$ valued connection form or its time derivative. If $u$ and $v$ take their values in $\frak k$ then so does $[u\wedge v]$.

The interior product,  $[u\lrc v]$, of an element $u \in \L^p \otimes \kf$ with an element $v \in \L^{p+r} \otimes \kf$ is defined, for $r \ge 0$, by
\beq
\< w, [u\lrc v]\> _{\L^r \otimes \kf} = \< [ u\wedge w ], v\>_{\L^{p+r} \otimes \kf} \ \ \text{ for all}\ \ w \in \L^r\otimes\kf.     \label{C9}
\eeq
If $u$ and $v$ are both in $\Lambda^1 \otimes \frak k$ then \eref{C9} gives
\begin{equation*}
\frak k \ni [u\lrc v] = - [u\cdot v] = -\sum_j [u_j, v_j]
\end{equation*}
in an orthonormal frame for $\L^1$.  In particular $[u\lrc u] =0$.
 Moreover, if $w\in \L^2\otimes \frak k$ then $[w\lrc w] =0$.

In this paper we will be concerned
with a $\kf$-valued 1-form $A$ as in \eref{I1}.
For $\w \in W_1(M;\L^p\otimes \kf) $ define  $d_A \w = d \w + [A \wedge \w]$.
Then  $d_A^* \w = d^*\w + [ A \lrc \w]$. The curvature of $A$ can be represented as
\beq
B = dA +(1/2) [A\wedge A] .                               \label{ymh3}
\eeq

\subsection{Strong solutions and boundary conditions.}

We take the following definition of strong and almost strong solution from \cite{G70}.

\begin{definition}\label{defstrsol} {\rm Let $0 < T \le \infty$.
A {\it strong solution} to the Yang-Mills heat equation over $[0, T) \times M$ is a continuous function
\begin{equation*}
A(\cdot): [0,T) \rightarrow  L^2(M; \Lambda^1 \otimes   \frak k )
\end{equation*}
such that
\begin{align}
a)& \ A(t) \in W_1 \ \text{for all} \ t \in (0,T) \ \text{and} \  A(\cdot): (0,T) \rightarrow W_1 \ \text{is continuous},  \notag \\
b)& \ B(t):= dA(t)+ \frac 12 [A(t)\wedge A(t)]  \in W_1 \  \text{for each}\ \  t\in (0,T),\          \notag     \\
c)& \  \text{the strong $L^2(M)$ derivative $A'(t) \equiv dA(t)/dt $}\
                                      \text{exists on}\ (0,T),   \text{and}       \notag \\
 &\ \ \ \ \ \  A'(\cdot): (0, T) \rightarrow L^2(M) \ \text{is continuous},    \notag  \\
d)& \  A'(t) = - d_{A(t)}^* B(t)\ \ \text{for each}\ t \in(0, T).        \notag
\end{align}
A solution $A(\cdot)$ that satisfies all of the above conditions
 except for $a)$ will be called an {\it almost strong solution}. In this
 case the spatial exterior derivative $dA(t)$, which appears in the
 definition of the curvature,  must  be interpreted in the weak sense.
}
\end{definition}

\begin{definition}\label{defbdyconds} {\rm  If $M\ne \R^3$ then we will impose
boundary conditions on the solutions.
For a strong solution to the Yang-Mills heat equation we will consider two types of boundary conditions:

\noindent
{\it Neumann boundary conditions:}
\begin{align}
&i)\ \ \  A(t)_{norm} =0\ \ \text{for}\ t > 0 \ \text{and}    \label{N1}\\
&ii)\ \ B(t)_{norm}=0\ \   \text{for}\  t >0.         \label{N2}
\end{align}
\noindent
{\it Dirichlet boundary conditions:}
\begin{align}
 &i)\ \ \  A(t)_{tan} =0\ \ \ \text{for}\ t > 0 \ \text{and}    \label{D1}\\
  &ii)\ \  B(t)_{tan} =0 \ \ \ \text{for}\ t >0.                 \label{D2}
 \end{align}
 }
 \end{definition}

In \cite{CG1} we also considered
Marini boundary conditions,
which only  require $ B(t)_{norm} = 0$. Solutions satisfying these boundary conditions
will be derived in a later work from solutions satisfying Neumann boundary conditions.
The regularity theorems of the present paper will carry over to these.
We will not consider them in this paper.

\begin{notation}\label{notsob} {\rm The Sobolev spaces for  $\kf$ valued 1-forms associated  to
 the  preceding boundary conditions are most easily described in terms of the related Laplacian.

 If $M= \R^3$  define
 \beq
-\Delta = d^* d + d d^*,                                 \label{Lap1}
\eeq
where $d$ denotes the closed version of the exterior derivative operator
with $C_c^\infty(\R^3, \L^1\otimes \kf)$ as a core.

If $M \ne \R^3$ then the Neumann and Dirichlet Laplacians are again given by
$\sum_{j=1}^3 \p_j^2$
but
subject to the following boundary conditions.
\begin{align*}
&\w_{norm} =0  \ \ \text{and} \ \ (d\w)_{norm}=0\ \   \text{Neumann conditions}    \\
&\w_{tan} =0\ \ \ \ \text{and}  \ \   (d^*\w)_{\p M} =0  \  \ \  \text{Dirichlet conditions}.
\end{align*}
Alternatively, the Neumann, respectively Dirichlet, Laplacian can be defined by \eref{Lap1},
wherein $d$ is taken to be the maximal, respectively minimal, exterior derivative operator over $M$.
See \cite{CG1} for further discussion of these domains.
In all three cases
the Laplacian is
 a nonnegative,
 self-adjoint operator on the appropriate domain.

 For $0 \le a \le 1$ we define the Sobolev spaces
\beq
H_a = \text{Domain of} \ (-\Delta)^{a/2} \ \text{on} \ L^2(M; \Lambda^1\otimes \mathfrak{k}) \notag
\eeq
with norm
\beq
\|\w\|_{H_a} = \|(1-\Delta)^{a/2} \w\|_{ L^2(M; \Lambda^1\otimes\mathfrak{k})}.  \label{Lap5}
\eeq
In this paper we will only be concerned  with the cases $a= 1/2$ and $a =1$.
But it may be interesting to note that for each number $a \in [0,1]$ the two Sobolev spaces $H_a$, corresponding to Dirichlet or Neumann boundary conditions, are distinct when $1/2 \le a \le 1$ and are identical if $0\le a < 1/2$, by Fujiwara's theorem \cite{Fuj}.

With the preceding  definition of a Sobolev space, we have the following embedding property
\[
\|\w\|_{H_a} \leq c_{a,b} \|\w\|_{H_b} \ \text{whenever} \ 0\leq a \leq b.
\]
The constant $c_{a,b}$ is independent of $M$.
}
\end{notation}

\begin{definition}\label{defgg} {\rm (The gauge group $\G_{3/2}$.) A measurable function
$g:M\rightarrow K \subset End\ \V$ is a bounded function into the linear space
$End\ \V$ and consequently its
weak derivatives are well defined. Following \cite{G70} we will write $g \in W_1(M;K)$
if $\|g - I_\V\|_2 <\infty$ and the derivatives $\p_j g \in L^2(M; End\ \V)$.
The 1-form $g^{-1} dg := \sum_{j=1}^3 g^{-1}(\p_j g)dx^j$ is then an a.e. defined  $\kf$ valued 1-form.
The Sobolev norm $\|g^{-1}dg \|_{H_a}$ is defined as in  \eref{Lap5}.
For an element $g \in W_1(M;K)$
 the restriction $g|\p M$ is well defined almost everywhere on $\p M$ by a Sobolev trace theorem.
The three versions of $\G_{3/2}$ that we will need are given in the following definitions.
\begin{align*}
\G_{3/2}(\R^3)
   = \Big\{g \in W_1(\R^3; K): g^{-1}dg \in H_{1/2}(\R^3;\L^1\otimes \kf) \Big\}, \qquad \qquad \ \ \
\end{align*}
If $M \ne \R^3$ define
\begin{align*}
\G_{3/2}^N(M) &= \Big\{g \in W_1(M; K): g^{-1}dg \in H_{1/2}(M;\L^1\otimes \kf) \Big\}, \\
\G_{3/2}^D(M) & = \Big\{g \in W_1(M; K): g^{-1}dg \in H_{1/2}(M;\L^1\otimes \kf),\  g = I_\V\ \text{on}\ \p M \Big\},
\end{align*}
It should be understood that the two spaces denoted  $H_{1/2}(M;\L^1\otimes \kf)$ are those
determined by Neumann, respectively Dirichlet,  boundary conditions. It was proved in \cite[Theorem 5.3]{G70}
that all three versions of $\G_{3/2}$ are complete topological groups in the metric
$dist(g,h) = \| g^{-1} dg - h^{-1} dh\|_{H_{1/2}} +\| g-h\|_{L^2(M; End\, \V)}$.
}
\end{definition}

\begin{definition}  {\rm  A solution $A(\cdot)$ to the Yang-Mills heat equation is said to
 have  {\it finite action } if
\beq
\rho(t):= (1/2)\int_0^t s^{-1/2} \|B(s) \|_2^2\, ds <\infty        \label{dfa}
\eeq
for some $t>0.$ If  $A(\cdot)$ has finite action then, actually, $\rho(t) < \infty$ for all $t >0$ because
$\|B(s)\|_2^2$ is nonincreasing. See e.g. Lemma \ref{lemFE}.
}
\end{definition}

It was shown in \cite{G70}  that a solution to the Yang-Mills heat equation with initial
 value $A_0 \in H_{1/2}$ will have finite action whenever $\|A_0\|_{H_{1/2}}$ is sufficiently small.
 We summarize some of the results needed from \cite{G70} in the following theorem.

\begin{theorem} \label{thmFA} $($\textup{\cite[Theorem 2.11]{G70}}$)$
  Assume that $A_0\in H_{1/2}(M; \Lambda^1 \otimes   \frak k )$. Then there exists an almost strong solution $A(t)$ to the Yang-Mills heat equation over $[0,\infty)$ with initial value $A_0$.

If $\|A_0\|_{H_{1/2}}$ is sufficiently small then   there exists a gauge function $g_0 \in \mathcal{G}_{3/2}$
 such that the connection $A(t)^{g_0}$ is a  strong solution
 over $[0,\infty)$ with initial value $A_0^{g_0}$.
 It is also smooth over $(0,T)\times M$ for some $T < \infty$.
 The solutions $A(t)$ and $A(t)^{g_0}$ have  the following properties   in this case:

\begin{enumerate}

\item  Both $A(t)$ and $A(t)^{g_0}$ are continuous functions on $[0,\infty)$ into $H_{1/2}(M; \Lambda^1 \otimes   \frak k )$.

\item The curvatures of $A(t)$ and $A(t)^{g_0}$  satisfy \eref{N2} in the Neumann case and \eref{D2} in the Dirichlet case for all $t>0$. The gauge regularized solution $A(t)^{g_0}$ satisfies in addition \eref{N1} in the Neumann case and \eref{D1} in the Dirichlet case for all $t>0$.

\item Both $A(t)$ and $A(t)^{g_0}$ have finite action.
\end{enumerate}
\end{theorem}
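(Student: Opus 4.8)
The plan is to reduce \eref{I2}, which is only weakly parabolic, to a genuinely parabolic problem, solve that in a scaling‑critical space by a fixed‑point argument, transport the solution back by a gauge transformation, and then read off the finite‑action bound from the a priori estimates together with the monotonicity of $\|B(t)\|_2$. First I would replace \eref{I2} by the augmented equation
\[
\tfrac{\p}{\p t}A(t) = -d_{A(t)}^* B(t) - d_{A(t)}\, d^* A(t),
\]
imposing, when $M\ne\R^3$, the condition $A(t)_{norm}=0$ in the Neumann case or $A(t)_{tan}=0$ in the Dirichlet case. The role of the added gauge‑fixing term is that the second‑order part of the right‑hand side becomes exactly $\Delta A$, with $\Delta$ the free, Neumann, or Dirichlet Laplacian of \eref{Lap1}; all remaining terms are quadratic or cubic in $A$ with at most one derivative.

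Next I would pass to the Duhamel form $A(t)=e^{t\Delta}A_0+\int_0^t e^{(t-s)\Delta}Q(A(s))\,ds$, where $Q$ collects the quadratic‑and‑cubic terms, and solve it by the contraction mapping principle in a path space whose norm is calibrated to the dilation $A(x,t)\mapsto\lambda A(\lambda x,\lambda^2 t)$ under which $\|A_0\|_{H_{1/2}}$ is invariant in dimension three: something of the shape $\sup_{s>0}\|A(s)\|_{H_{1/2}}+\sup_{s>0}s^{1/4}\|A(s)\|_{H_1}+\big(\int_0^\infty s^{-1/2}\|A(s)\|_{H_1}^2\,ds\big)^{1/2}$. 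The estimates needed are the heat‑semigroup smoothing bounds $\|e^{t\Delta}\w\|_{H_b}\le c\,t^{-(b-a)/2}\|\w\|_{H_a}$ together with the embeddings of Notation \ref{notsob}, the Sobolev multiplication inequalities coming from $H_{1/2}\hookrightarrow L^3$ and $H_1\hookrightarrow L^6$ in three dimensions, and, when $M\ne\R^3$, the Gaffney--Friedrichs inequality, which lets one trade $\|A\|_{W_1}$ for $\|dA\|_2+\|d^*A\|_2+\|A\|_2$ and is exactly where convexity of $M$ (non‑negativity of the second fundamental form of $\p M$) is used. For general $A_0\in H_{1/2}$ this gives a solution of the augmented equation on a short interval; for $\|A_0\|_{H_{1/2}}$ small it gives one on a ball in the critical space over all of $[0,\infty)$. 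Strong continuity of $e^{t\Delta}$ on $H_{1/2}$ and the fact that the Duhamel term is $o(1)$ in $H_{1/2}$ as $t\downarrow 0$ yield continuity into $H_{1/2}$ at $t=0$.

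To return to \eref{I2} I would use that a solution $A(t)$ of the augmented equation solves \eref{I2} precisely when $C(t):=d^*A(t)\equiv 0$, and that in general $C(t)$ satisfies a linear parabolic equation driven by $A(t)$, which one cancels with a time‑dependent gauge transformation $g(t)$ solving $g(t)^{-1}\dot g(t)=-C(t)$, $g(0)=g_0$, where $g_0$ is chosen so that $A_0^{g_0}$ lies in the Coulomb‑type slice $d^*A_0^{g_0}=0$ (or its boundary‑adapted analogue). Keeping $g_0$ and $g(t)$ inside $\G_{3/2}$ (Definition \ref{defgg}), so that $A(t)^{g_0}$ has the regularity of Definition \ref{defstrsol}(a)--(d), relies on the topological‑group structure of $\G_{3/2}$ from \cite{G70}. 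Running the construction in the opposite direction and allowing the gauge transformation to be only measurable — unavoidable for rough, non‑small $A_0$ — produces the almost strong solution of the first assertion: it satisfies (b)--(d) of Definition \ref{defstrsol} but need not lie in $W_1$, which is the loss of (a). Globality for arbitrary $A_0$ then follows because after any positive time a gauge transform of the solution lies in $W_1$ and the Yang--Mills heat flow is energy‑subcritical in dimension three, so the $H_1$‑theory of \cite{CG1} continues it to $[0,\infty)$; transporting back gives the global almost strong solution.

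The remaining points are routine once the above is in place. For small $A_0$ the critical norm of $A(t)^{g_0}$ stays small, so on an interval $(0,T)$ one bootstraps: the coefficients of the augmented equation become as smooth as the solution and interior/boundary parabolic regularity upgrades $A(t)^{g_0}$ to $C^\infty$ on $(0,T)\times M$. The curvature conditions \eref{N2} (resp.\ \eref{D2}) propagate from $t=0$ because $B(t)_{norm}$ (resp.\ $B(t)_{tan}$) solves a homogeneous parabolic initial‑boundary value problem, while \eref{N1} (resp.\ \eref{D1}) is built into the gauge‑fixed construction. For the finite action, $\|B(s)\|_2\le\|dA(s)\|_2+\tfrac12\|[A(s)\wedge A(s)]\|_2\le c\big(\|A(s)\|_{H_1}+\|A(s)\|_{H_{1/2}}^2\big)$, so the square‑function component of the critical norm already controls $\rho(t)=\tfrac12\int_0^t s^{-1/2}\|B(s)\|_2^2\,ds$ as in \eref{dfa}, and the monotonicity $\tfrac{d}{dt}\|B(t)\|_2^2=-2\|d_{A(t)}^*B(t)\|_2^2\le0$ (cf.\ Lemma \ref{lemFE}) extends finiteness of $\rho$ to all $t$. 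I expect the main obstacle to be the critical fixed point of the second paragraph: at Sobolev index $1/2$ there is no slack, so the time weights and the square function must be chosen exactly, and on a bounded convex $M$ the semigroup smoothing and product estimates have to be proved for the Neumann and Dirichlet Laplacians rather than borrowed from $\R^3$ — precisely where the Gaffney--Friedrichs inequality and the convexity hypothesis are indispensable.
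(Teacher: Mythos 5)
The paper does not prove this theorem --- it is quoted from \cite{G70} --- so the only meaningful comparison is with that reference, and your outline (the ZDS-augmented strictly parabolic equation, a contraction argument in a scaling-critical time-weighted space built on the Neumann or Dirichlet Laplacian via the Gaffney--Friedrichs inequality, conversion back to \eref{I2} by a time-dependent gauge transformation, and finite action read off from the square-function component of the norm together with the monotonicity of $\|B(t)\|_2$) is essentially the strategy used there. Two small inaccuracies are worth noting: the time-dependent gauge transformation is applied directly to the solution of the augmented equation with generator $d^*A(t)$ rather than being used to force $d^*A\equiv 0$ along the flow, and the almost strong solution for general $A_0$ is obtained by undoing a gauge transformation lying in $\G_{3/2}$ --- property (a) of Definition \ref{defstrsol} fails only because $g_0^{-1}dg_0$ is merely $H_{1/2}$ and not $W_1$ --- not by allowing a merely measurable gauge function.
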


\begin{remark}{\rm It is also proved in \cite{G70} that strong solutions with finite action are unique
when $M =\R^3$ and, if $M\ne \R^3$, unique under the boundary conditions
  \eref{N2} in case of Neumann boundary conditions
or \eref{D1} in case of Dirichlet boundary conditions. The smoothness of $A^{g_0}(t)$ on $(0,T)\times M$
may hold for the same fixed $g_0$ for $T= \infty$, but this is still an open question.
}
\end{remark}

\subsection{The Main Theorem.}

We are going to establish bounds on various gauge covariant
derivatives  of a solution to the Yang-Mills heat equation
 in terms of the  action functional $\rho(t)$, defined in \eref{dfa}.
 The class of solutions of interest
 are those for which the initial value $A_0$ has small $H_{1/2}$ norm.
 But $\|A_0\|_{H_{1/2}}$ is not  a gauge invariant function of  $A_0$.
In the next theorem we will show that the  gauge invariant functionals of derivatives of $A(\cdot)$
that are of interest to us are controlled  by the gauge invariant functional $\rho$.
The inequalities that implement
this descend therefore to inequalities on the quotient space \{initial data space\}$/ \G_{3/2}$,
thereby  yielding analysis on the quotient space itself.
See Remark \ref{remqs} for further discussion.

By a {\it standard dominating function} we will mean a function
  $C:[0, \infty) \rightarrow [0,\infty)$ of the form $C(t) =\hat C(t, \rho(t))$ , where
  $\hat C:[0,\infty)^2 \rightarrow [0, \infty)$ is continuous and non-decreasing
   in each variable, $\hat C(0,0) =0$ and $\hat C$ is independent of the solution $A(\cdot)$.

Our main result is the following.

\begin{theorem} \label{MainTh} Assume that $A_0\in H_{1/2}(M; \Lambda^1 \otimes   \frak k )$.
Suppose that $A(\cdot)$ is a strong solution to \eref{I2}
over $[0,\infty)$
with initial value $A_0$ and having finite action. If $\|A_0\|_{H_{1/2}}$ is sufficiently small
then there exists $T>0$ and
standard dominating functions $C_{nj}$  for $\ j=1,\ldots 4$
and  $n=1,2,\dots$,   such that, for $0<t < T$,  the following estimates hold.
\begin{align}
\tag{$\A_n$}
 t^{2n-\frac 12}\|A^{(n)}(t)\|_2^2 \ +
     &\int_0^t  s^{2n- \frac 12} \|B^{(n)}(s) \|_2^2\, ds  \le C_{n1} (t) \ \ \ \ \     \label{An1}    \\
\tag{$\B_n$}
t^{(2n- \frac 12)}\|B^{(n-1)}(t)\|_6^2 \ +
    & \int_0^t  s^{2n- \frac 12} \|A^{(n)}(s) \|_6^2 \,ds  \le C_{n2}(t)      \label{BnL6}      \\
\tag{$\C_n$}
t^{2n+\frac 12} \| B^{(n)} ( t)\|_2^2 \ +
     & \int_0^t  s^{2n+\frac 12} \| A^{(n+1)} (s) \|_2^2 \,ds  \le  C_{n3} (t)  \label{Bn1}        \\
\tag{$\D_n$}
t^{2n+\frac 12} \| A^{(n)} (t)\|_6^2  \ +
     & \int_0^t  s^{2n+\frac 12} \| B^{(n)}(s)\|_6^2 \,ds \le C_{n4}(t).           \label{AnL6}
\end{align}
Moreover \eref{Bn1} holds for $n =0$.
\end{theorem}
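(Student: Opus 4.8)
The plan is to prove the four families of estimates $(\A_n)$, $(\B_n)$, $(\C_n)$, $(\D_n)$ simultaneously by induction on $n$, using a carefully chosen interlocking ordering of the implications within each level. Because the $L^p$ norms of the covariant derivatives $A^{(n)}(t)$ and $B^{(n)}(t)$ are gauge invariant, and the smoothing regularity of $A^{g_0}(\cdot)$ on $(0,T)\times M$ from Theorem \ref{thmFA} is available, I would first reduce to working with the smooth gauge-transformed solution: all differentiations in $t$, all integrations by parts, and all Gaffney--Friedrichs type inequalities are carried out for $A^{g_0}(\cdot)$, and then the resulting bounds on the gauge-invariant quantities $\|A^{(n)}(t)\|_p$, $\|B^{(n)}(t)\|_p$ transfer back to $A(\cdot)$ verbatim. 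The base case $n=0$ for $(\C_0)$ (the displayed ``moreover'') should come essentially from the finite-action hypothesis $\rho(t)<\infty$ together with the basic energy identity $\frac{d}{dt}\|B(t)\|_2^2 \le -2\|A'(t)\|_2^2$ (Lemma \ref{lemFE} type), which controls $\int_0^t s^{1/2}\|A'(s)\|_2^2\,ds$ after multiplying by the weight $s^{1/2}$ and integrating by parts; the weight powers $2n\pm\frac12$ are tuned precisely so that the boundary terms at $s=0$ vanish given the expected blow-up rates.

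The inductive step is a standard-but-delicate energy bootstrap. Differentiating the Yang-Mills heat equation $A'=-d_A^*B$ in $t$ $n$ times produces an evolution equation for $A^{(n)}$ of schematic form $A^{(n+1)} = -d_A^* d_A A^{(n)} + (\text{lower-order commutator terms})$, where the lower-order terms are sums of wedge/interior products of $A^{(j)}$'s and $B^{(j)}$'s with $j<n$ (plus $B^{(n)}$ itself paired with $A$). One multiplies this by $A^{(n)}$, or by $B^{(n)}$ after also using $B^{(n)} = d_A A^{(n-1)} + (\text{commutators})$, integrates over $M$, applies the Gaffney--Friedrichs inequality (valid because $M$ is convex and the boundary conditions \eqref{N1}--\eqref{N2} or \eqref{D1}--\eqref{D2} hold) to convert $\|d_A \omega\|_2^2 + \|d_A^*\omega\|_2^2$ into control of $\|\omega\|_{W_1}^2$, and then handles the nonlinear commutator terms by Hölder together with the Sobolev embedding $W_1(M)\hookrightarrow L^6(M)$ — this is exactly why only $p=2,6$ appear. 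Multiplying by the weight $s^{\text{appropriate power}}$, integrating in $s$ from $0$ to $t$, and bounding all occurring products of lower-order norms by the already-established standard dominating functions $C_{mj}$ with $m<n$ (using that products and compositions of standard dominating functions are again standard dominating functions, and that $\rho$ itself is one) closes the loop. The order within level $n$ I would use is: $(\A_n)$ from $(\C_{n-1})$ and $(\B_{n-1})$; then $(\B_n)$ from $(\A_n)$ and $(\D_{n-1})$ via the $L^6$ Sobolev inequality; then $(\C_n)$ from $(\A_n)$, $(\B_n)$; then $(\D_n)$ from $(\C_n)$ and $(\B_n)$.

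The main obstacle will be bookkeeping the nonlinear commutator terms: after $n$ time-differentiations one gets a combinatorial sum (a discrete Leibniz expansion) of terms like $[A^{(i)}\wedge B^{(j)}]$ and $[A^{(i_1)}\wedge[A^{(i_2)}\wedge A^{(i_3)}]]$ with various index constraints, and one must check that in every such term the total weight $s^{2n\pm 1/2}$ can be split as a product of the weights appearing in the lower-order estimates so that each factor is dominated by a previously-established $C_{mj}$, with no leftover negative power of $s$ that would destroy integrability near $s=0$. This is where the precise exponents $2n\pm\frac12$ are forced, and verifying that the split always works — in particular that the ``worst'' terms (those with one high index $j=n-1$ or $n$ and the rest as low as possible) are still covered — is the crux. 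A secondary technical point is justifying the time-differentiation of the equation and the interchange of $\frac{d}{dt}$ with the spatial integrals up to the boundary $t=0$; this is handled by working on $[\epsilon,t]$ with the smooth solution $A^{g_0}$, deriving the weighted estimate with an $s=\epsilon$ boundary term, and then letting $\epsilon\downarrow 0$ using that the weights kill that term given the (inductively known) one-power-weaker blow-up bounds.
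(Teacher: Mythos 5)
Your proposal follows essentially the same route as the paper: reduce to the smooth gauge-transformed solution $A^{g_0}$ and transfer the gauge-invariant norms back, obtain $(\C_0)$ from finite action via the basic energy identity, and then close a four-fold induction in the order $(\A_n)\to(\B_n)\to(\C_n)\to(\D_n)$ using the time-differentiated identities for $d_A A^{(n)}$, $d_A^* B^{(n)}$, the Gaffney--Friedrichs--Sobolev inequality, H\"older with the $L^2$/$L^6$ pair, and weighted time integration, with the weight-splitting bookkeeping of the commutator terms being exactly the crux the paper handles in its lower-order-term lemmas. The sketch is sound and matches the paper's strategy in all essentials.
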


\begin{notation}\label{ginvsob}
{\rm The gauge invariant version of the Sobolev 1-norm \eref{ymh2}
 is defined by
\begin{align*}
\| A^{(n)}(t)\|_{H_1^A}^2  &= \sum_{j=1}^3 \int_M|\p_j^{A(t)} A^{(n)}(t)|^2 dx + \|A^{(n)}(t)\|_2^2,\ \ n \ge 1,\\
\| B^{(n)}(t)\|_{H_1^A}^2  &= \sum_{j=1}^3 \int_M|\p_j^{A(t)} B^{(n)}(t)|^2 dx + \|B^{(n)}(t)\|_2^2, \ \ n\ge 0,
\end{align*}
where
\[
\p_j^{A(t)}\, \w  =\p_j\,  \w + ad\, A_j(t)\, \w
\]
for a $\kf$ valued p-form $\w$.
}
\end{notation}

\begin{corollary}\label{Maincor}Under the hypotheses of Theorem \ref{MainTh} there exists $T>0$
 and standard dominating functions $C_{nj}$   for $j=5,6$  and $n=1,2,...$  such that, for  $0 < t <T$,  the following estimates hold.
\begin{align*}
\tag{$\E_n$} \qquad t^{(2n- \frac 12)}\|B^{(n-1)}(t)\|_{H_1^A}^2 + & \int_0^t
             s^{2n- \frac 12} \|A^{(n)}(s) \|_{H_1^A}^2 \,ds              \le C_{n5}(t)     \label{BnH}
             \\
\tag{$\F_n$}  \qquad\ \ \  \  \ t^{2n+\frac 12} \| A^{(n)} (t)\|_{H_1^A}^2  + & \int_0^t
            s^{2n+\frac 12} \| B^{(n)}(s)\|_{H_1^A}^2 \,ds                   \le C_{n6}(t). \label{AnH}
\end{align*}
\end{corollary}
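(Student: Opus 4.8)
The plan is to derive $(\E_n)$ and $(\F_n)$ from the estimates $(\A_n)$--$(\D_n)$ of Theorem \ref{MainTh} by trading the gauge covariant Sobolev norm $\|\cdot\|_{H_1^A}$ for the pair $d_A,\ d_A^*$ through a gauge covariant Gaffney--Friedrichs inequality. First I would reduce to the smooth representative: by Theorem \ref{thmFA} there is a \emph{time-independent} $g_0\in\G_{3/2}$ such that $A^{g_0}(\cdot)$ is a strong solution, smooth on $(0,T)\times M$, satisfying the full boundary conditions \eqref{N1}--\eqref{N2} (Neumann) or \eqref{D1}--\eqref{D2} (Dirichlet). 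Since $g_0$ does not depend on $t$, each $A^{(n)}(t)$, $B^{(n)}(t)$ is merely conjugated by $g_0$, so every norm occurring in $(\A_n)$--$(\D_n)$, every $H_1^A$ norm in $(\E_n)$,$(\F_n)$, and $\rho$ itself, are unchanged; it therefore suffices to prove $(\E_n)$,$(\F_n)$ for $A^{g_0}$, where all time derivatives, boundary traces and integrations by parts are classical. The analytic input is the Gaffney--Friedrichs type bound
\[
\|\w\|_{H_1^A}^2\ \leq\ C\big(\|d_A\w\|_2^2+\|d_A^*\w\|_2^2+\|\w\|_2^2\big)+C\,\|B\|_3^2\,\|\w\|_2^2,
\]
valid when $\w_{norm}=0$ (Neumann), $\w_{tan}=0$ (Dirichlet), or $\w\in W_1$ (if $M=\R^3$): it follows from the Bochner--Weitzenb\"ock identity, in which the boundary integral, whose sign is controlled by the convexity of $\p M$, is discarded, while the curvature term $c\int_M|B|\,|\w|^2\leq c\,\|B\|_3\,\|\w\|_2\,\|\w\|_6$ is absorbed using the gauge covariant Sobolev inequality $\|\w\|_6\leq c_S\|\w\|_{H_1^A}$ (Kato's inequality together with $W_1\hookrightarrow L^6$) and Young's inequality. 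Since the traces of $A^{(n)}(t),B^{(n)}(t)$ inherit the spatial boundary conditions of $A^{g_0},B^{g_0}$, this applies to $\w=B^{(n-1)}(t),A^{(n)}(t),B^{(n)}(t)$.

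Next I would substitute the differentiated structure equations. Differentiating $A'=-d_A^*B$, the Bianchi identity $d_AB=0$, the curvature formula \eqref{ymh3}, and the identity $d_A^*A'=-d_A^*d_A^*B=-[B\lrc B]=0$, each $m$ times, yields
\[
d_A^*B^{(m)}=-A^{(m+1)}+R^*_m,\quad d_AB^{(m)}=R_m,\quad d_AA^{(m)}=B^{(m)}+S_m,\quad d_A^*A^{(m)}=T_m,
\]
where $R^*_m,R_m,S_m,T_m$ are finite sums of brackets $[A^{(j)}\wedge B^{(k)}]$, $[A^{(j)}\wedge A^{(k)}]$, $[A^{(j)}\lrc B^{(k)}]$, $[A^{(j)}\lrc A^{(k)}]$ with $j\geq1$ and $j+k$ equal to $m$ or $m-1$. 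The ``main'' contribution to $\|\w\|_{H_1^A}^2$ is $\|A^{(n)}\|_2^2$ for $\w=B^{(n-1)}$, $\|B^{(n)}\|_2^2$ for $\w=A^{(n)}$, and $\|A^{(n+1)}\|_2^2$ for $\w=B^{(n)}$, together with the terms $\|\w\|_2^2$ and $\|B\|_3^2\|\w\|_2^2$. In the pointwise-in-$t$ parts of $(\E_n)$,$(\F_n)$ the weights $t^{2n-1/2}$, $t^{2n+1/2}$ precisely cancel the pointwise bounds $\|A^{(n)}(t)\|_2^2\leq t^{-(2n-1/2)}C_{n1}(t)$, $\|B^{(n-1)}(t)\|_2^2\leq t^{-(2n-3/2)}C_{n-1,3}(t)$, $\|B^{(n)}(t)\|_2^2\leq t^{-(2n+1/2)}C_{n3}(t)$ from $(\A_n)$, $(\C_{n-1})$, $(\C_n)$ ($(\C_0)$ being available since Theorem \ref{MainTh} asserts $(\C_n)$ also for $n=0$), and $\|B(t)\|_3^2\leq t^{-1}\cdot(\text{standard dominating function})$ follows from $(\C_0)$, $(\B_1)$ and $\|B\|_3^2\leq\|B\|_2\|B\|_6$. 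In the time-integrated parts, $\int_0^t s^{2n-1/2}\|B^{(n)}\|_2^2\,ds$ and $\int_0^t s^{2n+1/2}\|A^{(n+1)}\|_2^2\,ds$ are exactly the dissipation terms of $(\A_n)$ and $(\C_n)$; $\int_0^t s^{\alpha}\|\w\|_2^2\,ds$ reduces, by the pointwise bounds, to an integral of a dominating function; and $\int_0^t s^{\alpha}\|B\|_3^2\|\w\|_2^2\,ds$, on extracting $\|B(s)\|_3^2$, reduces again to a dissipation term of $(\A_n)$ or $(\C_{n-1})$.

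Finally there are the bilinear remainders $R^*,R,S,T$. A typical term is bounded via H\"older and the interpolation $\|f\|_3\leq\|f\|_2^{1/2}\|f\|_6^{1/2}$, e.g. $\|[A^{(j)}\wedge A^{(m-j)}]\|_2\leq c\,\|A^{(m-j)}\|_6\,\|A^{(j)}\|_2^{1/2}\|A^{(j)}\|_6^{1/2}$; a factor $B$ of order $0$, when it occurs, is kept as $\|B(s)\|_3\leq s^{-1/2}\cdot(\text{standard dominating function})$. In a time integral with the $(\E_n)$ or $(\F_n)$ weight one leaves under the integral the factor for which a dissipation estimate among $(\A)$--$(\D)$ is available (the one of highest available derivative order) and bounds the remaining factors, together with the explicit power of $s$, pointwise; the exponents then balance exactly, leaving a product of standard dominating functions, while in a pointwise-in-$t$ term the weight absorbs the crude pointwise bounds directly. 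Since finite sums and products of standard dominating functions are again such, assembling all the contributions gives $(\E_n)$ with $C_{n5}$ and $(\F_n)$ with $C_{n6}$, by induction on $n$. I expect the two main obstacles to be: establishing the gauge covariant Gaffney--Friedrichs inequality with the correct sign of the boundary term via convexity of $\p M$, together with the absorption of the curvature term; and the lengthy though routine verification that every bracket term generated by the differentiated structure equations balances in its powers of $t$ and is thereby controlled by $(\A_n)$--$(\D_n)$.
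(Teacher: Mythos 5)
Your proposal is correct and follows essentially the same route as the paper: the Gaffney--Friedrichs--Sobolev inequality of Lemma \ref{lemGFS} already bounds $(1/2)\|\cdot\|_{H_1^A}^2$ by the very same right-hand sides used to prove $(\B_n)$ and $(\D_n)$, so the paper simply re-reads those proofs with the $L^6$ norms on the left replaced by $H_1^A$ norms, while you spell out the same estimates from scratch. The only cosmetic difference is that no new induction is actually needed, since all of $(\A_n)$--$(\D_n)$ are already available from Theorem \ref{MainTh}.
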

Theorem \ref{MainTh} and Corollary \ref{Maincor} will be proven in Section \ref{secpmt}.

\begin{remark}\label{remqs}{\rm (Analysis over quotient space)
Denote by $\Y$ the set of almost strong solutions of the
Yang-Mills heat equation over $M$
with initial value in $H_{1/2}$ and having finite action. The group $\G_{3/2}$ acts on $\Y$
through its action on $A(0)$ for each $A \in \Y$. For simplicity of statement let us assume that
uniqueness of solutions holds in this class. All of the functionals appearing
on both sides of the inequalities in Theorem \ref{MainTh} and Corollary \ref{Maincor}
descend to functions  of the initial values on  the quotient space ${\cal{C}} \equiv \Y/\G_{3/2}$.
 The theorem and its corollary can and should be interpreted
 as regularity properties of functions on the quotient  space.
 It will be shown in \cite{G72} that $\cal C$ is a complete metric space in a natural
metric.
}
\end{remark}

\section{The lower order terms}

Our strategy  consists in computing the gauge covariant exterior derivatives
and coderivatives of all the $n$th order time derivatives $A^{(n)}(t)$ and $B^{(n)}(t)$ and expressing
them in terms of lower order time derivatives. This will be done in the next subsection.
These identities, in turn, will give rise to integral identities, which will be used in Section \ref{secib}
 to establish initial behavior bounds by induction on $n$.

\subsection{Pointwise identities.}

In this section we assume that $A(t)$ is  a time dependent  $\kf$ valued connection form over $M$,
which is in $C^\infty((0,T)\times M)$ and solves the Yang-Mills heat equation \eref{I2}.
$B(t)$ denotes the curvature of $A(t)$. We will derive some identities by applying $d_A$ and $d_A^*$ to various $\kf$ valued forms.
In case $M \ne \R^3$  one needs to specify  boundary conditions on a $p$-form $\omega$ in order for it to belong to the
 domain of $d_A$ or $d_A^*$. These are  analogous to the Dirichlet and Neumann
  boundary conditions for the domain of $d$ and $d^*$ discussed in \cite{Co}. We recall from
   Section 3 of \cite{CG1} that for the Dirichlet boundary conditions, $(D)$, $d_A$ is the minimal
    operator. It imposes
nontrivial boundary conditions on the forms in its domain. $d_A^*$ is maximal in this case.
 On the other hand, for the Neumann boundary conditions,  $(N)$, $d_A$ is maximal and the domain of $d_A^*$
  imposes  nontrivial boundary conditions on its elements.

 The next proposition  expresses spatial derivatives of solutions
in terms of time derivatives.
\begin{proposition} \label{propA1}  Let $A(t)$ be a smooth solution to the Yang-Mills heat equation
over $(0,T)$,  satisfying either \eref{N2} or \eref{D1}
if $M\ne \R^3$.
Then there exist non-negative constants $c_{ni}, \bar{c}_{ni}, \tilde{c}_{ni}, \hat c_{ni}$,
that depend only on $n$ and $i$, such that,
 for all $n\geq 1$ and  $0 < t < T$, the following identities hold.
\begin{align}
d_{A(t)} A^{(n)}(t) &= B^{(n)}(t)  -P_n(t), \ \ \    \text{where}  \label{A11}    \\
           & P_n(t)=  \sum_{i=1}^{n-1} c_{ni} [A^{(i)}(t)\wedge A^{(n-i)}(t) ].  \notag \\
d_{A(t)}^*B^{(n-1)} \ &= -A^{(n)} (t)  - Q_n(t),\ \ \ \text{where}   \label{A12}   \\
        & Q_n(t) = \sum_{i=1}^{n-1} \bar{c}_{ni} [A^{(i)}(t)\lrc B^{(n-1-i)}(t)].  \notag\\
         d_{A(t)}^* A^{(n)}(t)\  &= - R_n(t),\ \ \   \text{where}  \label{A13}  \\
     & R_n(t) = \sum_{i=1}^{n-2} \tilde{c}_{ni} [A^{(i)}(t)\lrc A^{(n-i)}(t)] .  \notag \\
\text{Moreover, for} &\  \text{all $n\ge0$ there holds} \notag \\
     d_{A(t)} B^{(n)}(t) &= S_n(t),  \ \     \text{where}\ S_0(t) =0,\   S_1(t) = [B(t)\wedge A'(t)]\
                                                                \text{and}  \label{A14} \\
  &S_n(t) = [B(t)\wedge A^{(n)}(t)]
                          +\sum_{i=1}^{n-1} \hat c_{ni}  \, [(B^{(i)}(t) -P_i(t))\wedge A^{(n-i)} (t)]    \notag\\
        \text{for} \ \ n \ge 2.\ \  &  \notag
 \end{align}
The functions $P_n(t), Q_n(t), R_n(t)$ are polynomials in the time derivatives of $A$ and $B$
of order at most $n-1$ in $A$ and at most $n-2$ in $B$. Empty sums are to be interpreted as zero.
In particular,
\[
P_1(t) = Q_1(t) = R_1(t) = R_2(t)=0.
\]
In the above identities $d_A$ is the exterior derivative with domain matching the boundary conditions and $d_A^*$ is its adjoint.
\end{proposition}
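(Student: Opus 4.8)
The plan is to prove all four families of identities simultaneously by induction on $n$, using the Yang-Mills heat equation $A' = -d_A^* B$ to convert time derivatives into spatial derivatives and then repeatedly differentiating in $t$. The starting point is the base case $n=1$: here $d_A A' = B' - (\text{time derivative of the quadratic term in }B)$ follows directly from differentiating \eref{ymh3}, namely $B = dA + \frac12[A\wedge A]$, in $t$, which gives $B' = dA' + [A\wedge A']$; comparing with $d_A A' = dA' + [A\wedge A']$ shows $d_A A' = B'$, so $P_1 = 0$. Similarly \eref{A12} for $n=1$ is just the Yang-Mills equation itself rewritten: $d_A^* B = -A'$, so $Q_1 = 0$. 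For \eref{A13} with $n=1$ we need $d_A^* A' = -R_1$ with $R_1 = 0$; this should come from applying $d_A^*$ to the equation, using $d_A^* d_A^* = [B\lrc \cdot]$ (the gauge Bianchi-type identity for the coderivative) and the fact that $[B\lrc B]$-type terms vanish as noted after \eref{C9}. Finally \eref{A14} for $n=0,1$ is the gauge Bianchi identity $d_A B = 0$ differentiated once, giving $d_A B' = [B\wedge A']$ since $\frac{d}{dt}(d_A B) = d_A B' + [A'\wedge B]$ and one rearranges the graded commutator.

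The inductive step is the heart of the argument. Assuming the four identities hold up through order $n-1$, I would differentiate each in $t$ and use the Leibniz rule for $d_A$ applied to time-dependent forms, i.e. $\frac{d}{dt}(d_A \w) = d_A \w' + [A' \wedge \w]$, together with the analogous rule $\frac{d}{dt}(d_A^* \w) = d_A^* \w' + [A' \lrc \w]$ (up to signs depending on the degree of $\w$). Differentiating \eref{A11} at level $n-1$ produces $d_A A^{(n)} + [A'\wedge A^{(n-1)}] = B^{(n)} - P_{n-1}'$, and collecting the commutator terms into a single sum of the form $\sum [A^{(i)}\wedge A^{(n-i)}]$ identifies $P_n$ and pins down the recursion for the constants $c_{ni}$ in terms of $c_{n-1,i}$. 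The same mechanism applied to \eref{A12}, \eref{A13}, \eref{A14} yields $Q_n, R_n, S_n$, where in the $S_n$ computation one must also substitute the already-established expression $d_A A^{(i)} = B^{(i)} - P_i$ to get the terms $[(B^{(i)} - P_i)\wedge A^{(n-i)}]$. The claim that $P_n, Q_n, R_n$ have the stated order bounds (at most $n-1$ in $A$, at most $n-2$ in $B$) is read off from the index ranges in the sums, which are forced by the induction; one checks $R_2 = 0$ because the sum $\sum_{i=1}^{n-2}$ is empty when $n=2$.

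The main obstacle I anticipate is bookkeeping rather than conceptual: keeping track of the graded signs in the Leibniz rules for $d_A$ and $d_A^*$ acting on forms of various degrees (1-forms $A^{(n)}$ versus 2-forms $B^{(n)}$), and verifying that the resulting commutator expressions genuinely collapse into the stated symmetric forms $[A^{(i)}\wedge A^{(n-i)}]$, $[A^{(i)}\lrc B^{(n-1-i)}]$, etc., with nonnegative constants. A secondary technical point is the use of the second-order identities $d_A d_A \w = [B\wedge \w]$ and its coderivative analogue, which enter when one commutes $d_A$ or $d_A^*$ past terms that themselves contain $d_A A^{(i)}$; these are where the curvature $B$ re-enters and where one must invoke the vanishing identities $[u\lrc u] = 0$ for 1-forms and $[w\lrc w] = 0$ for 2-forms from Section \ref{secnote} to kill spurious terms. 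The boundary conditions \eref{N2} or \eref{D1} are needed only to ensure that $A^{(n)}$ and $B^{(n)}$ remain in the domains of $d_A$ and $d_A^*$ so that these manipulations are legitimate; since $A(t)$ is assumed smooth on $(0,T)\times M$ and the boundary conditions are preserved under $t$-differentiation, this is routine to check. Throughout, "smooth solution" lets us treat everything classically, so no functional-analytic subtleties arise beyond the domain remarks.
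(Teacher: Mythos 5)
Your proposal follows essentially the same route as the paper: identities \eref{A11}--\eref{A13} are proved exactly as you describe, by induction on $n$ via time differentiation, the Leibniz rules for $d_A$ and $d_A^*$, the Yang--Mills equation, the identities $(d_A)^2\w=[B\wedge\w]$, $(d_A^*)^2\w=[B\lrc\w]$ and $[u\lrc u]=0$, with the domain and boundary-condition checks isolated in separate lemmas (note that in the Neumann case $A^{(n)}(t)_{norm}=0$ for $n\ge1$ is deduced from $A'=-d_A^*B$ and the mapping property of $d_A^*$, not by differentiating a boundary condition on $A$ itself, since only \eref{N2} is assumed). The one organizational difference is that the paper obtains \eref{A14} not by a fourth time-differentiation induction (which would force you to track awkward triple-wedge remainders coming from $P_i'$) but directly, by applying $d_A$ to both sides of \eref{A11}, using the product rule for $d_A$ on wedge products, and then substituting \eref{A11} back in for $d_A A^{(i)}$ --- precisely the substitution you anticipate, so your ingredients are all correct.
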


The next lemma carries out the inductive computation, ignoring domain issues for the operators
$d_A$ and $d_A^*$. These issues, which are relevant if $M\ne \R^3$,
 will be addressed in the succeeding lemmas.

\begin{lemma}\label{lemA1} The identities  \eref{A11} - \eref{A14} hold, ignoring boundary conditions.
\end{lemma}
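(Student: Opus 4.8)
The plan is to prove Lemma \ref{lemA1} by induction on $n$, starting from the Yang-Mills heat equation itself and differentiating repeatedly in $t$, using the product/Leibniz rules for $d_A$ and the standard commutation identities relating $d_A$, $d_A^*$, and $B$. The base cases are essentially the equation and its immediate consequences: for \eref{A11} with $n=1$, note $d_A A' = d_A A^{(1)}$ and recall that differentiating $B = dA + \frac12[A\wedge A]$ in $t$ gives $B' = dA' + [A\wedge A'] = d_A A'$, so $d_A A^{(1)} = B^{(1)}$ with $P_1 = 0$; for \eref{A12} with $n=1$, the equation \eref{I2} reads $A' = -d_A^* B$, i.e. $d_A^* B^{(0)} = -A^{(1)}$ with $Q_1 = 0$; \eref{A13} with $n=1,2$ is $d_A^* A^{(1)} = d_A^*(-d_A^* B) = 0$ since $d_A^* d_A^* = 0$ (so $R_1 = R_2 = 0$); and \eref{A14} with $n=0$ is the Bianchi identity $d_A B = 0$, while $n=1$ follows by differentiating Bianchi: $0 = (d_A B)' = d_A B' + [A' \wedge B] = d_A B^{(1)} - [B\wedge A']$, giving $S_1 = [B\wedge A']$.

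For the inductive step I would differentiate each of the four identities once more in $t$ and push the time derivative through $d_A$ using the rule $(d_A \omega)' = d_A(\omega') + [A' \wedge \omega]$, and similarly $(d_A^* \omega)' = d_A^*(\omega') + [A' \lrc \omega]$ (up to sign conventions inherited from \eqref{C9}). So, for instance, differentiating $d_A A^{(n)} = B^{(n)} - P_n$ yields $d_A A^{(n+1)} = B^{(n+1)} - P_n' - [A'\wedge A^{(n)}]$, and one checks that $P_n' + [A'\wedge A^{(n)}]$ has exactly the form $\sum_{i=1}^n c_{(n+1)i}[A^{(i)}\wedge A^{(n+1-i)}]$ claimed for $P_{n+1}$ (the Leibniz rule applied to each bracket $[A^{(i)}\wedge A^{(n-i)}]$ produces two terms, one raising the first index, one raising the second, which regroup with the $[A'\wedge A^{(n)}]$ term; the resulting coefficients are the $c_{(n+1)i}$). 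The same bookkeeping handles \eqref{A12} (differentiate, use $d_A^* B^{(n-1)} = -A^{(n)} - Q_n$, and check $Q_n' + [A'\lrc B^{(n-1)}]$ matches $Q_{n+1}$), and \eqref{A13} (differentiate, substitute $-d_A^* B^{(n-1)} = A^{(n)} + Q_n$ back, and verify the bracket structure of $R_{n+1}$; here the shift in the summation range to $i \le n-2$ reflects that the $B^{(0)}$ term gets converted via \eqref{A12} rather than appearing as a new bracket). For \eqref{A14}, differentiate $d_A B^{(n)} = S_n$, use the Bianchi-type relation and substitute \eqref{A11} to express $d_A A^{(n-i)}$ in terms of $B^{(n-i)} - P_{n-i}$ wherever it arises, which is precisely why $S_n$ is written using the combination $B^{(i)} - P_i$.

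The routine but slightly delicate part is verifying that the recursively generated coefficients are all non-negative and depend only on $n$ and $i$ (not on the solution), and that the stated degree bounds — order $\le n-1$ in the $A$-derivatives and $\le n-2$ in the $B$-derivatives for $P_n, Q_n, R_n$ — propagate correctly; this is a straightforward induction once the Leibniz bookkeeping is set up, since each differentiation raises the total order by exactly one and the brackets never involve a top-order factor on both sides. The step I expect to be the main obstacle is organizing the combinatorics for \eqref{A14} cleanly: the $S_n$ formula mixes a distinguished term $[B\wedge A^{(n)}]$ with a sum over the shifted curvatures $B^{(i)} - P_i$, and one must track carefully how the substitution of \eqref{A11} at each stage feeds back into the recursion so that the form of $S_{n+1}$ comes out exactly as stated, with the leading bracket separated off. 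Domain and boundary-condition issues are explicitly set aside in this lemma (they are deferred to the succeeding lemmas), so here everything is a formal pointwise computation on the smooth solution, and no functional-analytic subtlety enters.
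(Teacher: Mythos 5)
Your proposal is correct in substance, and for \eref{A11}--\eref{A13} it follows exactly the paper's route: induction on $n$, differentiating each identity in time and pushing the derivative through $d_A$ and $d_A^*$ via $(d_A\w)'=d_A\w'+[A'\wedge\w]$ and its coderivative analogue, with the recursion $c_{(k+1)1}=1+c_{k1}$, $c_{(k+1)i}=c_{k(i-1)}+c_{ki}$ producing the nonnegative coefficients. The one genuine divergence is \eref{A14}: you propose to continue the time-differentiation induction, whereas the paper explicitly derives \eref{A14} in a single step by applying $d_A$ to both sides of \eref{A11}, using $d_A d_A A^{(n)}=[B\wedge A^{(n)}]$ and the Leibniz rule $d_A[\w\wedge\eta]=[d_A\w\wedge\eta]-[\w\wedge d_A\eta]$, and then substituting \eref{A11} for each $d_A A^{(i)}$; this immediately separates off the distinguished $[B\wedge A^{(n)}]$ term and explains why $S_n$ is built from $B^{(i)}-P_i=d_A A^{(i)}$, with $\hat c_{ni}=c_{ni}+c_{n(n-i)}$. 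Your inductive route for \eref{A14} can be made to work, but the combinatorics you flag as the main obstacle (recombining $[A'\wedge B^{(n)}]$, $[B'\wedge A^{(n)}]$ and the derivatives of $B^{(i)}-P_i$ into the stated form) is precisely what the paper's direct computation avoids, so if you are writing this up I would switch to the direct derivation. Two small imprecisions to fix: $d_A^*d_A^*$ is not zero in general --- rather $(d_A^*)^2\w=[B\lrc\w]$ --- so the base case of \eref{A13} is $d_A^*A'=-[B\lrc B]=0$ because $[w\lrc w]=0$ for 2-forms, and the case $n=2$ additionally uses $[A'\lrc A']=0$ after one more time differentiation; also, the summation range $i\le n-2$ in $R_n$ comes from this delayed start of the induction ($R_1=R_2=0$), not from any conversion of a $B^{(0)}$ term via \eref{A12}.
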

\begin{proof}
We will prove the identities \eref{A11}- \eref{A13} by induction on $n$.
   Recall the identity
   \beq
   d_A A' = B'   \label{A0}
   \eeq
   proved in \cite[Section 5]{CG1}, which is \eref{A11} for $n =1$, since $P_1(t) =0$.
Let $k \ge 1$. Assume that the identity \eref{A11} holds for $n=k$
and differentiate  both sides with
 respect to $t$   to find $(d_AA^{(k)})' = B^{(k+1)} - P_k'$. Therefore
 \begin{align*}
 d_A A^{(k+1)}& =   B^{(k+1)}  - [A'\wedge A^{(k)}] -\sum_{i=1}^{k-1} c_{ki} ([A^{(i)}\wedge A^{(k-i)}])' \\
&= B^{(k+1)}  - [A'\wedge A^{(k)}]
                 - \sum_{i=1}^{k-1} c_{ki} ([A^{(i)}\wedge A^{(k+1-i)}] +[A^{(i+1)}\wedge A^{(k-i)}] ).
 \end{align*}
Thus \eref{A11} holds  with $c_{(k+1)1}=1+c_{k1}$ and $c_{(k+1)i}=c_{k(i-1)}+c_{ki}$ for $2\leq i \leq k$.
 Notice that $[A^{(i)}\wedge A^{(j)}]=[A^{(j)}\wedge A^{(i)}]$ for any $i, j$.
 The coefficients $c_{ni}$ are
   the ones obtained from the inductive process above. This proves \eref{A11}.

To prove \eref{A12} observe that for $n=1$ this is the Yang-Mills heat equation
 since $Q_1(t) =0$.
For $n=2$, the identity  $d_A^* B' = -A''  - [A'\lrc B]$, proved in \cite[Section 5]{CG1}
gives \eref{A12} with  $\bar c_{21}=1$ .
Assume that \eref{A12} holds for $n=k\ge 2$ and differentiate both sides with respect to $t$ to obtain
$d_A^* B^{(k)} +[A'\lrc B^{(k-1)}] = -A^{(k+1)} - Q_k'$. Therefore
\begin{align*}
d_A^* B^{(k)} &= -A^{(k+1)}  - [A'\lrc B^{(k-1)}]  - \sum_{i=1}^{k-1} \bar{c}_{ki} ([A^{(i)}\lrc B^{(k-1-i)}])' \\
&=  -A^{(k+1)}  - [A'\lrc B^{(k-1)}]
- \sum_{i=1}^{k-1} \bar{c}_{ki} ( [A^{(i)}\lrc B^{(k-i)}] +[A^{(i+1)}\lrc B^{(k-1-i)}]).
\end{align*}
This is \eref{A12} with $n=k+1$ and coefficients given
by $ \bar{c}_{(k+1)1}=1+\bar{c}_{k1}$ and $\bar{c}_{(k+1)i}=\bar{c}_{ki} +\bar{c}_{k(i-1)}$
for $2\leq i \leq k$.

For the proof of \eref{A13} we observe that
\[
d_A^*A'= - d_A^* d_A^* B =0
\]
by the Bianchi identity.
Differentiating both sides with respect to $t$ we get
\[
0= (d_A^*A')'= d_A^*A'' + [A'\lrc A'] = d_A^*A''
\]
since $[\w \lrc \w]=0$ for any 1-form $\w$. Differentiating once again with respect to $t$ we obtain
\[
d_A^*A''' + [A'\lrc A''] =0.
\]
This proves \eref{A13} for $n=1$ and $n=2$ because $R_1 = R_2 =0$.
Let $k \ge 2$ and assume that
  \eref{A13} holds for $n=k$. Differentiate both sides with respect to $t$ to get
\[
d_A^* A^{(k+1)} + [A' \lrc A^{(k)}] +
\sum_{i=1}^{k-2} \tilde{c}_{ki} ([A^{(i)} \lrc A^{(k+1-i)} ] + \, [A^{(i+1)} \lrc A^{(k-i)} ] \, )=0.
\]
This is \eref{A13} with $n = k+1$.

Finally we will derive \eref{A14} by applying $d_A$ to both sides of \eref{A11} rather than
 proceeding by induction.  For $n=0$ the identity \eref{A14} is just the Bianchi identity.
 For $n \ge 1$  we find
\begin{equation}
d_A B^{(n)} = d_A d_A  A^{(n)}
                   +\sum_{i=1}^{n-1} c_{ni} d_A \bigl( \,[A^{(i)}\wedge A^{(n-i)} ]\, \bigr). \label{A30}
\end{equation}
By the Bianchi identity we have
 $d_A d_A  A^{(n)}=[B\wedge A^{(n)}]$. Moreover,
$
d_A[\omega\wedge \eta]=[d_A\omega\wedge \eta] -[\omega\wedge d_A \eta]
$
for 1-forms $\omega, \eta$ and
$
[u\wedge v] = - [v\wedge u]
$
 whenever $u$ is a $\kf$ valued 1-form and $v$ is a $\kf$ valued 2-form.
 Therefore \eref{A30} gives
\begin{align*}
d_A B^{(n)} & =[B\wedge A^{(n)}] + \sum_{i=1}^{n-1} c_{ni} \bigl\{ \, [d_A A^{(i)}\wedge A^{(n-i)} ] + [d_A A^{(n-i)} \wedge A^{(i)} ]\, \bigr\}\\
& = [B\wedge A^{(n)}] + \sum_{i=1}^{n-1} (c_{ni} + c_{n(n-i)} )\, [d_A A^{(i)}\wedge A^{(n-i)} ].
\end{align*}
Using \eref{A11} to substitute for the  term  $d_A A^{(i)}$ we arrive at \eref{A14} with
 $\hat c_{ni} = c_{ni} + c_{n(n-i)}$.
\end{proof}

\bigskip

Although we applied
the exterior derivative operator and its adjoint to smooth forms in the preceding lemma,
we need to verify that the boundary conditions satisfied by these forms
match with the domains of these operators when $M \ne \R^3$.
   To this end we recall here some properties
    of these domains, established in Section 3 of \cite{CG1}.
\begin{lemma}$($\textup{\cite[Lemma 3.4]{CG1}}$)$   \label{LemBdy1}
Suppose that $\w \in W_1(M; \L^p\otimes \frak k)$ and $A \in L^\infty(M)$.
Then
\begin{align*}
(D)\qquad  & \w \in Dom(d_A)\ \ \text{if and only if}\ \ \w_{tan}=0         \\
(N)\qquad  &  \w \in Dom(d_A^*)\ \ \text{if and only if}\ \ \w_{norm}=0.
\end{align*}
\end{lemma}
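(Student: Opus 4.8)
The statement is really two classical facts about the exterior derivative on a bounded smooth domain, perturbed by a bounded zeroth order term. So the plan is: (i) strip off $A$ by a bounded‑perturbation argument, reducing to the case $A=0$; (ii) prove the Dirichlet statement for $d^{\min}$ using Green's formula in one direction and a $W_1$‑density argument in the other; (iii) deduce the Neumann statement for $(d^{\max})^*$ by the same argument applied to the formal coderivative $d^*$ (equivalently, conjugating by the Hodge star).

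\textbf{Reduction to $A=0$.} Since $A\in L^\infty(M)$, the map $w\mapsto[A\wedge w]$ from $L^2(M;\L^p\otimes\kf)$ to $L^2(M;\L^{p+1}\otimes\kf)$ is an everywhere‑defined bounded operator of norm $\le c\|A\|_{L^\infty}$, and by \eref{C9} its Hilbert‑space adjoint is the equally bounded map $w\mapsto[A\lrc w]$. Adding an everywhere‑defined bounded operator to a densely defined operator changes neither the domain of its closure, nor the domain of its maximal extension, nor the domain of its adjoint; it merely adds the bounded operator (respectively its adjoint) to the corresponding extension. Hence $Dom(d_A)=Dom(d)$ in case $(D)$ (both with the minimal‑operator convention), and $Dom(d_A^*)=Dom\big((d^{\max})^*\big)$ in case $(N)$. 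It therefore suffices to show, for $w\in W_1(M;\L^p\otimes\kf)$, that $w\in Dom(d^{\min})\iff w_{tan}=0$ and $w\in Dom\big((d^{\max})^*\big)\iff w_{norm}=0$.

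\textbf{The Dirichlet case.} Here $d^{\min}=\overline{d|_{C_c^\infty(M^\circ)}}$ is the adjoint of the maximal coderivative $d^*_{\max}$, so $w\in Dom(d^{\min})$ forces $(dw,\phi)=(w,d^*\phi)$ for every $\phi\in Dom(d^*_{\max})$, in particular for every $\phi\in C^\infty(\overline M;\L^{p+1}\otimes\kf)$. For such $\phi$ and $w\in W_1$, Green's formula reads $(dw,\phi)-(w,d^*\phi)=\int_{\p M}\langle w_{tan},\phi_{norm}\rangle\,dS$, and $\phi_{norm}$ may be prescribed arbitrarily on $\p M$; non‑degeneracy of this boundary pairing yields $w_{tan}=0$. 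Conversely, if $w\in W_1$ and $w_{tan}=0$, then $w$ lies in the $W_1$‑closure of $C_c^\infty(M^\circ;\L^p\otimes\kf)$; since $|d\psi|\le c|\n\psi|$ pointwise, the approximating sequence converges to $w$ also in the graph norm of $d$, whence $w\in Dom(d^{\min})$.

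\textbf{The Neumann case and the main obstacle.} One has $(d^{\max})^*=\overline{d^*|_{C_c^\infty(M^\circ)}}=(d^*)^{\min}$, the minimal realization of the formal coderivative. Up to sign $d^*$ on $\L^p$‑valued forms is conjugate via the Hodge star $\star$ to $d$ on $\L^{3-p}$‑valued forms, and $\star$ interchanges the tangential and normal parts of a form along $\p M$; so the Neumann statement follows from the Dirichlet one applied to $d^*$, since $w\in Dom\big((d^*)^{\min}\big)\iff(\star w)_{tan}=0\iff w_{norm}=0$. Alternatively one repeats the Green's‑formula argument verbatim, the boundary term now being $-\int_{\p M}\langle w_{norm},\psi_{tan}\rangle\,dS$. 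The technical heart of the whole proof is the density statement — that $\{w\in W_1:\ w_{tan}=0\}$ is exactly the $W_1$‑closure of $C_c^\infty(M^\circ;\L^p\otimes\kf)$ — together with stating Green's formula sharply enough that its boundary term detects precisely the tangential (resp.\ normal) trace; both use only the smoothness and boundedness of $\p M$, and convexity of $M$ plays no role here.
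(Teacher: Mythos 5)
The paper offers no internal proof of this lemma --- it is imported verbatim from \cite[Lemma 3.4]{CG1} --- so the only question is whether your argument is correct. Your reduction to $A=0$ by bounded perturbation, and your Green's-formula argument for the forward implications ($\w\in Dom(d^{\min})\Rightarrow \w_{tan}=0$, and its Neumann analogue), are sound. The genuine gap is in the converse direction, precisely at the step you yourself identify as the technical heart: the claim that $\{\w\in W_1:\ \w_{tan}=0\}$ is exactly the $W_1$-closure of $C_c^\infty(M^\circ;\L^p\otimes\kf)$. That is false for $p\ge 1$. The $W_1$-closure of compactly supported smooth forms is the space of $W_1$ forms whose \emph{entire} trace vanishes on $\p M$, whereas $\w_{tan}=0$ leaves the normal part free. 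Concretely, on the unit ball the $1$-form $\w=r\,dr=\sum_i x_i\,dx^i$ (tensored with a fixed element of $\kf$) has $\w_{tan}=0$ but nonvanishing trace, so it is not a $W_1$-limit of compactly supported forms; yet it does belong to $Dom(d^{\min})$, since for radial cutoffs $\chi_n$ one has $d(\chi_n\w)=\chi_n'(r)\,dr\wedge r\,dr=0$ while $\chi_n\w\to\w$ in $L^2$ (and $\chi_n\w\not\to\w$ in $W_1$). Your argument therefore only places the strictly smaller space of forms with full zero trace inside $Dom(d^{\min})$, and misses exactly the forms with nonzero normal trace that the lemma must capture.

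The repair requires approximating in the \emph{graph norm} of $d$ rather than in $W_1$, and the mechanism is genuinely different: writing $d(\chi_n\w)-\chi_n\,d\w=d\chi_n\wedge\w$ and noting that $d\chi_n$ is asymptotically proportional to the outward unit normal covector $\nu$ in a collar of $\p M$, the error is controlled by $|\n\chi_n|\,|\nu\wedge\w|$; since $\nu\wedge\w$ is a $W_1$ object with vanishing trace (this is what $\w_{tan}=0$ says), a Hardy-type inequality makes $\| |\n\chi_n|\,|\nu\wedge\w|\|_{L^2}\to 0$ even though $\|\chi_n\w\|_{W_1}$ blows up. Alternatively, one may keep your identity $(d\w,\phi)=(\w,d^*\phi)$ for smooth $\phi$ and conclude $\w\in(d^*_{\max})^*=d^{\min}$ by invoking density of smooth forms in $Dom(d^*_{\max})$ for \emph{its} graph norm --- but that density assertion is itself a nontrivial theorem about the maximal domain on a domain with boundary and must be proved or cited. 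As written, the same gap propagates through the Hodge-star conjugation to your Neumann case.
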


Moreover we proved the following:
\begin{lemma}$($\textup{\cite[Proposition 3.5]{CG1}}$)$ \label{LemBdy2}  Assume that $\w$
is a $\frak k$ valued form  and that $A \in W_1\cap L^\infty$.
 Denote the curvature  of $A$ by $B$, as in \eref{ymh3}.

\noindent
If $ [B\wedge \w] \in L^2$ then
\begin{align*}
(N)\ \ \ \w \in Dom(d_A)\ &\text{implies}\  \w \in Dom((d_A)^2)\
        \text{and}\  d_A^2 \w = [ B\wedge \w]
        \\
 \text{and}    \ (D)\  \ \     \w \in Dom(d_A)\ &\text{implies}\
            \w \in Dom((d_A)^2)\
       \text{and}\   d_A^2 \w = [ B\wedge \w].
        \end{align*}
  \noindent
 If $[B\lrc \w] \in L^2$ then
 \begin{align*}
 (D)\ \ \  \w \in Dom(d_A^*)\  &\text{implies}\     \w \in Dom((d_A^*)^2)\
      \text{and}\  (d_A^*)^2 \w = [ B\lrc \w]
      \\
\text{and}\ (N)\ \ \   \w \in Dom(d_A^*)\  & \text{implies}\    \w \in Dom((D_A^*)^2)
\ \text{and}\ \ \ (d_A^*)^2 \w = [ B\lrc \w].
\end{align*}
\end{lemma}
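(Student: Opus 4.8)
\medskip
\noindent\textbf{Proof proposal.}
This is the gauge--covariant analogue of the identities $d^2=0$ and $(d^*)^2=0$, with the curvature entering as the obstruction; the plan is to isolate a purely algebraic identity, valid for smooth data, from the analytic work of lowering the regularity of $A$ and respecting the boundary conditions. Algebraically, for a smooth connection form $A$ one writes $d_A\w=d\w+[A\wedge\w]$ and computes, using $d^2=0$, the Leibniz rule $d[A\wedge\w]=[dA\wedge\w]-[A\wedge d\w]$, and the graded Jacobi identity $[A\wedge[A\wedge\w]]=\tfrac12[[A\wedge A]\wedge\w]$,
\[
d_A(d_A\w)=[dA\wedge\w]+\tfrac12[[A\wedge A]\wedge\w]=\Bigl[\bigl(dA+\tfrac12[A\wedge A]\bigr)\wedge\w\Bigr]=[B\wedge\w].
\]
The coderivative identity follows by Hodge duality, since $d_A^*=\pm\ast d_A\ast$ and $\ast[B\wedge\ast\,\w]=\pm[B\lrc\w]$ by the defining relation \eqref{C9} of the interior product, so that $(d_A^*)^2\w=\pm\ast(d_A)^2(\ast\w)=[B\lrc\w]$; the Hodge star is also what interchanges the Neumann and Dirichlet versions of the statement, so the four assertions reduce to essentially one. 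As a sanity check, the case $\w=B$ recovers the second Bianchi identity $d_A^*(d_A^*B)=[B\lrc B]=0$ that was already used above.

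To reach the stated regularity, with $A\in W_1\cap L^\infty$, I would mollify $A$ alone: choose smooth $A_\epsilon\to A$ in $W_1$ with $\sup_\epsilon\|A_\epsilon\|_\infty<\infty$ and $A_\epsilon\to A$ a.e. (adapted near $\p M$ when $M\ne\R^3$), so that $B_\epsilon\to B$ in $L^2$. Applying the smooth identity to $A_\epsilon$ and passing to the limit --- $[A_\epsilon\wedge\w]$ and $[A_\epsilon\wedge d\w]$ converge in $L^2$ by dominated convergence, while $[dA_\epsilon\wedge\w]$ and $[B_\epsilon\wedge\w]$ converge in $L^1$ --- yields $d_A(d_A\w)=[B\wedge\w]$ in the sense of distributions, for any $\w$ with $d\w\in L^2$; in particular for any $\w\in Dom(d_A)$, since then $d\w=d_A\w-[A\wedge\w]\in L^2$. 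Combined with the hypothesis $[B\wedge\w]\in L^2$ (which a priori only lies in $L^1$), this already gives $d_A\w\in Dom(d_A^{\max})$, and hence settles the two cases in which $d_A$ carries the maximal domain: the Neumann case for the $d_A$ pair and the Dirichlet case for the $d_A^*$ pair.

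It remains to treat the two cases in which the operator is minimal --- $d_A$ under Dirichlet, $d_A^*$ under Neumann conditions --- where one must additionally verify that $d_A\w$ (respectively $d_A^*\w$) satisfies the minimal boundary condition. For the Dirichlet case, $\w\in Dom(d_A)$ forces $\w_{tan}=0$ by Lemma \ref{LemBdy1} (as a weak trace if $\w\notin W_1$), and one shows $(d_A\w)_{tan}=0$ by combining $(d\w)_{tan}=d_{\p M}(\w_{tan})=0$ with $([A\wedge\w])_{tan}=0$; the latter holds because, in boundary--adapted coordinates, every term of $[A\wedge\w]$ either retains the normal covector contributed by $\w$ or inherits a factor of $\w$ that vanishes on $\p M$, a statement one makes rigorous by approximating $\w$ by smooth forms with vanishing tangential trace and passing to the limit in the Stokes pairing defining the weak trace. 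Since $d_A\w\in Dom(d_A^{\max})$ and its weak tangential trace vanishes, $d_A\w\in Dom(d_A^{\min})$, which is the claim; the Neumann case for $d_A^*$ is identical with ``tangential'' replaced by ``normal'', or is obtained from the Dirichlet case for $d_A$ through the Hodge star. The main obstacle is precisely this last step: because $A$ is only $L^\infty$ and so has no classical trace, and $d_A\w$ is only $L^2$ and so carries only the weak $H^{-1/2}(\p M)$ trace available to elements of $Dom(d_A^{\max})$, the vanishing of the relevant boundary trace cannot be read off pointwise and must be extracted from the weak trace theory for these domains, together with the care needed to push the nonlinear products $[A\wedge\w]$ and $[dA\wedge\w]$ through the limits; the algebraic identity itself is the easy part.
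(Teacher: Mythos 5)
This lemma is imported verbatim from \cite[Proposition 3.5]{CG1}; the present paper contains no proof of it, so there is no in-paper argument to measure your proposal against. On its own terms, the parts you carry out are correct: the algebraic identity $d_A^2\w=[B\wedge\w]$ via the Leibniz rule and $[A\wedge[A\wedge\w]]=\tfrac12[[A\wedge A]\wedge\w]$ is right; the reduction of the coderivative statements to the derivative statements by duality (equivalently, by taking adjoints using \eref{C9}) is right; and your observation that the mollified distributional identity, together with $A\in L^\infty$ and the hypothesis $[B\wedge\w]\in L^2$, places $d_A\w$ directly in the domain of the \emph{maximal} operator is the correct structural point --- it disposes of the $(N)$ case for $d_A$ and the $(D)$ case for $d_A^*$ cleanly.

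The genuine gap is the one you flag yourself, and it is where essentially all the content of the cited proposition lives: the two minimal-domain cases. There the issue is not the identity but the domain membership $d_A\w\in Dom(d_A^{\min})$, and your route --- show that the weak tangential (resp.\ normal) trace of $d_A\w$ vanishes and conclude membership in the minimal domain --- silently uses the fact that the minimal domain coincides with the set of maximal-domain elements whose weak trace vanishes. That equivalence is itself a nontrivial assertion here: Lemma \ref{LemBdy1} gives it only for $\w\in W_1$, whereas $d_A\w$ is merely known to lie in $L^2$ with $d_A(d_A\w)\in L^2$, and the trace of $[A\wedge\w]$ for $A$ only in $W_1\cap L^\infty$ needs separate justification. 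A route that avoids traces altogether is to verify the defining adjoint relation $d_A^{\min}=(d_A^{*,\max})^*$ directly, i.e.\ to show $(d_A\w,\,d_A^*\phi)=([B\wedge\w],\phi)$ for every $\phi$ in the domain of the maximal coderivative, by approximating $\w$ in the graph norm of the minimal operator by forms supported away from $\p M$; even there one must take some care passing the quadratic term $[B\wedge\w_n]$, which a priori converges only in $L^1$, to the limit against an $L^2$ test form --- this is precisely where the hypothesis $[B\wedge\w]\in L^2$ earns its keep. Until one of these arguments is completed, the minimal-domain halves of the lemma remain unproved.
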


For the remainder
of this section we will assume that $A(t)\in C^{\infty}(\, (0,T)\times M: \Lambda^1\otimes \frak k)$ is a smooth solution to the Yang-Mills heat equation which satisfies  \eref{I2} and one of the boundary conditions \eref{N2} or \eref{D1} if $M \ne \R^3$.

\begin{lemma}\label{LemBdy3} Let $A(t)$ be a smooth solution to the Yang-Mills heat equation over $(0, T)$, satisfying either \eref{N2} or \eref{D1}.
 Denote by  $A^{(n)}(t)$, $B^{(n)}(t)$  the $n$th order
 time derivatives of $A$ and $B$  respectively.  If $A(\cdot)$ satisfies \eref{D1}
 then for all $n\geq 0$ and  $0 < t < T$
\begin{equation} \label{D8}
\begin{split}
A^{(n)}(t)_{tan} &= 0 \ \  \text{and}
                      \ \ A^{(n)}(t) \in Dom(d_A). \\
\ \ B^{(n)}(t)_{tan} &= 0 \ \  \text{and}
                    \ \ B^{(n)}(t) \in Dom(d_A).
\end{split}
\end{equation}
If $A(t)$ satisfies  \eref{N2},
 then for all $n\geq 1$ and  $0 < t < T$
\begin{equation}  \label{N8}
\begin{split}
B^{(n)}(t)_{norm} &= 0 \ \  \text{and}
 \ \ B^{(n)}(t) \in Dom(d_A^*). \\
\ \ A^{(n)}(t)_{norm} &= 0 \ \  \text{and}
\ \ A^{(n)}(t) \in Dom(d_A^*).
\end{split}
\end{equation}
\end{lemma}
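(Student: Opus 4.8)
The plan is to prove both \eref{D8} and \eref{N8} by induction on $n$, using the evolution equation \eref{I2} together with Theorem \ref{thmFA} (which supplies the base cases) and Lemmas \ref{LemBdy1} and \ref{LemBdy2} (which convert boundary vanishing into membership in the relevant operator domains). In the Dirichlet case the base case $n=0$ is immediate: Theorem \ref{thmFA}(2) gives $B(t)_{tan}=0$ for the strong solution, and $A(t)_{tan}=0$ is the Dirichlet condition \eref{D1}; since $A(t)\in W_1\cap L^\infty$ by smoothness, Lemma \ref{LemBdy1}(D) yields $A(t),B(t)\in Dom(d_A)$. For the Neumann case the base $n=1$ comes from \eref{I2}: $A'(t)=-d_{A(t)}^*B(t)$, and since $B(t)_{norm}=0$ by Theorem \ref{thmFA}(2) we have $B(t)\in Dom(d_A^*)$ by Lemma \ref{LemBdy1}(N), so $A'(t)=-d_A^*B(t)$ is in the range of $d_A^*$; the key fact I would invoke is that forms in the range of $d_A^*$ automatically satisfy the Neumann boundary condition $(\cdot)_{norm}=0$ (this is the companion statement to Lemma \ref{LemBdy1}, and can be seen from the integration-by-parts definition of the adjoint). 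That gives $A'(t)_{norm}=0$, hence $A'(t)\in Dom(d_A^*)$. Applying $d_A^*$ once more, $B'(t) = d_A A'(t)$ — wait, rather I use the identity $d_A^*B' = -A'' - [A'\lrc B]$ from \cite[Section 5]{CG1}; to run the induction cleanly it is better to differentiate the Yang-Mills equation directly.

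The inductive step is the heart of the argument, and I would organize it around the chain $B^{(n-1)} \leadsto A^{(n)} \leadsto B^{(n)}$. Suppose \eref{N8} holds up through order $n$. Differentiating \eref{I2} $n$ times gives $A^{(n+1)} = -d_{A}^* B^{(n)} - (\text{lower-order correction terms})$; more precisely, from the pointwise identity \eref{A12} of Proposition \ref{propA1} (applied at level $n+1$), $A^{(n+1)} = -d_A^* B^{(n)} - Q_{n+1}$, where $Q_{n+1}$ is a sum of interior products $[A^{(i)}\lrc B^{(n-i)}]$ with $1\le i\le n$. By the induction hypothesis every $B^{(j)}$ with $j\le n$ lies in $Dom(d_A^*)$ and has vanishing normal part; since all the forms involved are smooth up to the boundary, $d_A^* B^{(n)}$ is a smooth form in the range of $d_A^*$, hence has vanishing normal component. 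The correction term $Q_{n+1}$ is a pointwise algebraic (interior-product) expression in forms that all vanish tangentially or are smooth — but here I must be careful: it is the \emph{normal} component of $A^{(n+1)}$ I need to control, and the $[A^{(i)}\lrc B^{(n-i)}]$ terms need not have vanishing normal part merely from the normal-vanishing of their factors. The resolution is that $A^{(n+1)}$, being an $L^2$ time-derivative of $A^{(n)}$ which satisfies $A^{(n)}_{norm}=0$ pointwise on $(0,T)\times M$ (by the induction hypothesis), inherits $A^{(n+1)}_{norm}=0$ by differentiating the boundary identity in $t$ — the normal projection is $t$-independent since $M$ and $\partial M$ are fixed. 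This is cleaner than trying to read the boundary behavior off the right-hand side. With $A^{(n+1)}_{norm}=0$ in hand, Lemma \ref{LemBdy1}(N) gives $A^{(n+1)}\in Dom(d_A^*)$. Then for $B^{(n+1)}$: differentiate $B$'s defining relation, or use that $B^{(n+1)} = d_A A^{(n+1)} + P_{n+1}$ by \eref{A11}; since $B^{(n)}_{norm}=0$ holds for all relevant orders and $B^{(n+1)}$ is a smooth $t$-derivative of $B^{(n)}$, again differentiating the pointwise boundary identity $B^{(n)}_{norm}=0$ in $t$ gives $B^{(n+1)}_{norm}=0$, whence $B^{(n+1)}\in Dom(d_A^*)$ by Lemma \ref{LemBdy1}(N). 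The Dirichlet case \eref{D8} runs along identical lines with $(\cdot)_{tan}$ in place of $(\cdot)_{norm}$ and $Dom(d_A)$ in place of $Dom(d_A^*)$, using Lemma \ref{LemBdy1}(D): the tangential boundary conditions on $A^{(n)}$ and $B^{(n)}$ propagate to all time-derivatives by differentiating the boundary identities in $t$, and Lemma \ref{LemBdy1}(D) then places each derivative in $Dom(d_A)$.

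The main obstacle I anticipate is making the step "differentiate the boundary identity in $t$" fully rigorous: one needs that the boundary trace operator commutes with the $L^2$-strong time derivative, i.e. that if $A^{(n)}(\cdot):(0,T)\to W_1$ is $t$-differentiable with values in $W_1$ and $A^{(n)}(t)_{norm}=0$ for every $t$, then the derivative also has vanishing normal trace. Because the solution is smooth on $(0,T)\times M$ (Theorem \ref{thmFA}), the time-derivatives are genuine classical derivatives of a smooth function up to the boundary, so the trace and $\partial_t$ commute by elementary calculus — this is where the smoothness hypothesis is essential, and it is worth stating explicitly that we are using it rather than merely the strong-solution regularity. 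A secondary point to check is that every form to which we apply $d_A$ or $d_A^*$ actually lies in $W_1\cap L^\infty$ so that Lemma \ref{LemBdy1} applies; this too follows from smoothness on $(0,T)\times M$, uniformly on compact subsets of $(0,T)$. Once these technical commutations are in place the induction closes without further difficulty.
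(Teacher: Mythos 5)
Your proof is correct and, once you abandon the attempt to read boundary behavior off the right-hand sides of the recursion identities in favor of differentiating the pointwise boundary identities in $t$, it is essentially the paper's argument: the paper likewise gets $A^{(n)}(t)_{tan}=0$ and $B^{(n)}(t)_{norm}=0$ by $t$-differentiation of the $n=0$ conditions, and obtains the one nontrivial base fact $A'(t)_{norm}=0$ from Lemma \ref{LemBdy2} (your ``range of $d_A^*$ has vanishing normal part'' is exactly that lemma, and you should cite it rather than the integration-by-parts heuristic). One small correction: in the Dirichlet case $B(t)_{tan}=0$ should be deduced from $A(t)_{tan}=0$ (Corollary 3.7 of \cite{CG1}) rather than from Theorem \ref{thmFA}(2), since the lemma's hypotheses do not include the small-data setting of that theorem.
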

\begin{proof}
We begin with the Dirichlet case. By \eref{D1}   we have
  $A(t)_{tan} = 0$ for all $t\in (0, T)$. We may differentiate $A(t)_{tan}$ with respect to $t$
  on the boundary to get $A^{(n)}(t)_{tan} = 0$ for all $n\geq 0$.
  Therefore,  $A^{(n)}(t)$  belongs to the domain of the minimal operator $d_A$ in this
   case by Lemma \ref{LemBdy1}.
   By Corollary 3.7 in \cite{CG1}, $A(t)_{tan} = 0$  also implies that $B(t)_{tan} = 0$.
   As a result,  $B^{(n)}(t)_{tan} = 0$ for all $n\geq 0$ and $B^{(n)}(t)$ therefore also belongs
 to the domain of $d_A$.

Similarly, in the Neumann case,  since  $B(t)_{norm}=0$ for all $t\in (0, T)$, it follows
 that $B^{(n)}(t)_{norm} = 0$ for all $t \in (0, T)$ and therefore $B^{(n)}(t)$  belongs to the
 domain of the minimal operator $d_A^*$ by Lemma \ref{LemBdy1}. By Lemma \ref{LemBdy2},  $B(t)_{norm}=0$ implies $d_A^* B(t)$ also
 belongs to the domain of $d_A^*$. Since $d_A^* B(t)=A'(t)$, we can apply Lemma \ref{LemBdy2}
  to find $A'(t)_{norm}=0$    for all $t\in (0, T)$ as well.
  As a result, $A^{(n)}(t)_{norm}=0$ for all $n\geq 1$ and  therefore $A^{(n)}(t)$
  also belongs to the domain of $d_A^*$.
\end{proof}

\begin{lemma}\label{lemBdy4} In case $M\ne \R^3$ the operators $d_A$ and $d_A^*$ act
 only on elements in their domains in the identities \eref{A11} - \eref{A14}.
\end{lemma}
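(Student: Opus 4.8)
The claim is that every application of $d_A$ or $d_A^*$ in the identities \eref{A11}--\eref{A14} is legitimate, i.e. the form to which the operator is applied actually lies in the domain of that operator (with the boundary-condition-dependent domain specified in Proposition \ref{propA1}). The plan is to walk through the four families of identities one at a time, in each case identifying precisely which operator is applied to which form, and then invoking Lemma \ref{LemBdy3} together with Lemma \ref{LemBdy1} and Lemma \ref{LemBdy2} to certify membership in the domain. Throughout, $A(t)$ is smooth on $(0,T)\times M$, so all the forms appearing are smooth in the interior; the only issue is the boundary behavior, hence only the cases $M\ne\R^3$ need attention, as stated.

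First I would handle \eref{A11}, where $d_A$ is applied to $A^{(n)}(t)$, and \eref{A14}, where $d_A$ is applied to $B^{(n)}(t)$. In the Dirichlet case, Lemma \ref{LemBdy3} gives $A^{(n)}(t)_{tan}=0$ and $B^{(n)}(t)_{tan}=0$ for all $n\ge 0$, so by Lemma \ref{LemBdy1}$(D)$ both forms lie in $Dom(d_A)$ (the minimal operator). In the Neumann case $d_A$ is the maximal operator, so it imposes no boundary condition and the smooth forms $A^{(n)}(t)$, $B^{(n)}(t)$ are automatically in its domain. One should also note that in deriving \eref{A14} one uses $d_A^2A^{(n)}=[B\wedge A^{(n)}]$ (the Bianchi-type identity); this second application of $d_A$ is justified by Lemma \ref{LemBdy2}, whose hypothesis $[B\wedge A^{(n)}]\in L^2$ holds because everything is smooth and, in the case $M\ne\R^3$, $M$ is bounded. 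Similarly \eref{A12} applies $d_A^*$ to $B^{(n-1)}(t)$ and \eref{A13} applies $d_A^*$ to $A^{(n)}(t)$: in the Neumann case Lemma \ref{LemBdy3} gives $B^{(n)}(t)_{norm}=0$ and $A^{(n)}(t)_{norm}=0$ (for $n\ge 1$), hence membership in $Dom(d_A^*)$ (the minimal operator) by Lemma \ref{LemBdy1}$(N)$; in the Dirichlet case $d_A^*$ is the maximal operator and imposes no boundary condition. The identity \eref{A13} is obtained by differentiating $d_A^*A'=-d_A^*d_A^*B=0$, which uses $(d_A^*)^2B=[B\lrc B]=0$; this too is covered by Lemma \ref{LemBdy2}, since $B(t)$ lies in $Dom(d_A^*)$ in the relevant case and $[B\lrc B]=0\in L^2$.

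The only remaining subtlety is that the inductive derivations in Lemma \ref{lemA1} differentiate these identities in $t$, and one must know that $t$-differentiation commutes with the (closed, $t$-independent on the boundary) operators $d_A$ and $d_A^*$ — but since $A(\cdot)$ is smooth on $(0,T)\times M$ and the boundary conditions hold for every fixed $t$, differentiating $\w(t)_{tan}=0$ or $\w(t)_{norm}=0$ in $t$ is legitimate pointwise on $\p M$, which is exactly how Lemma \ref{LemBdy3} was proved. Thus each intermediate form produced along the induction already satisfies the boundary condition matching the domain of the operator next applied to it. I would organize the proof as a short case analysis (Dirichlet versus Neumann), and within each case a line-by-line check of \eref{A11}--\eref{A14}, citing Lemma \ref{LemBdy3} for the boundary vanishing, Lemma \ref{LemBdy1} for the translation into domain membership, and Lemma \ref{LemBdy2} for the two places where an operator is applied twice. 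The main obstacle — though it is more bookkeeping than genuine difficulty — is making sure that in the mixed Neumann/Dirichlet setup one consistently tracks which of $d_A$, $d_A^*$ is the maximal operator and which is the minimal one, since the nontrivial boundary condition sits on the minimal operator's domain in each case; once that dictionary is fixed, Lemma \ref{LemBdy3} supplies exactly the vanishing needed and there is nothing further to prove.
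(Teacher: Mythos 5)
Your proposal is correct and follows essentially the same route as the paper: a Dirichlet/Neumann case split in which Lemma \ref{LemBdy3} supplies the boundary vanishing needed for the minimal operator's domain (via Lemma \ref{LemBdy1}), while the maximal operator imposes no boundary condition. The extra points you raise (the double applications handled by Lemma \ref{LemBdy2}, and the commutation of $t$-differentiation with the boundary conditions) are the same ones the paper addresses, partly in the proofs of Lemma \ref{LemBdy3} and Proposition \ref{propA1} rather than inside this lemma.
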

        \begin{proof}
The proof is similar to the proof of Lemma 5.1 in \cite{CG1}. For the Dirichlet case,
\eref{D8}     implies that  for all $n\geq 1$ and $t\in (0, T)$, $A^{(n)}(t)$  belongs to the domain of the
 minimal operator $d_A$. This justifies the use of  $d_A$ in \eref{A11}.
  Similarly \eref{D8} shows
that $B^{(n)}(t)$ is in the domain of $d_A$, which justifies  its use in \eref{A14}.
Since  $d_A^*$ is the maximal operator, $B^{(n)}$ and $ A^{(n)}$ both
 belong to its domain. This justifies its use in  \eref{A12} and \eref{A13}.

For the Neumann case     \eref{N8} of
 Lemma \ref{LemBdy3} shows that  $B^{(n)}(t)$ and $A^{(n)}(t)$ belong to the domain
  of the minimal operator $d_A^*$ for all $n\geq 1$ and $t\in (0, T)$.
  Therefore the application of $d_A^*$ in \eref{A12} and \eref{A13} is justified.
  The application of $d_A$ in \eref{A11}   and \eref{A14} is also justified,
  since it is the maximal operator in this case.
  \end{proof}

\bigskip
\noindent
 \begin{proof}[Proof of Proposition \ref{propA1}]
For the case $M\ne \R^3$ the identities \eref{A11}-\eref{A13} are justified by proof of Lemma \ref{lemA1} and Lemma \ref{lemBdy4}. For \eref{A14} it suffices to justify the application of $d_A$ to both sides of \eref{A11} under both sets of boundary conditions. In the case of Dirichlet boundary conditions observe that, for  all $n \ge1$, $B^{(n)}(t) \in Dom(d_A(t))$, by Lemma \ref{LemBdy3},
  as is $d_{A(t)} A^{(n)}(t)$  by Lemmas  \ref{LemBdy3}  and  \ref{LemBdy2}.
  Moreover, since all  $A^{(i)}(t)_{tan} = A^{(n-i)}(t)_{tan}= 0$ and
  $[A^{(i)}(t)\wedge A^{(n-i)} (t)]_{tan}=0$, the application of $d_A$ to each term
   on the right side of \eref{A11} is justified.
   The Neumann case is  trivial because $d_A$ is the maximal operator.

For  $M= \R^3$ the identities are justified since we are considering smooth
 solutions to the Yang-Mills heat equation. Boundary conditions are not an issue.
\end{proof}

\subsection{Integral identities.}

We will use the pointwise identities of the previous subsection to prove integral
 identities for smooth solutions to the Yang-Mills heat equation.

\begin{lemma} \label{lemA2} Let $A(t)$ be a smooth solution to the Yang-Mills heat equation
over $(0,T)$,  satisfying either \eref{N2} or \eref{D1}
  if $M\ne \R^3$. Then, for any integer $n\geq 0$,
\begin{equation} \label{A21}
  \begin{split}
\frac{d}{dt} \|B^{(n)}(t)\|_2^2  & + \| A^{(n+1)}(t)\|_2^2     \\
&=  - \| d_A^*B^{(n)}(t)\|_2^2 + \| Q_{(n+1)}(t)\|_2^2 + 2 (P_{n+1}(t), B^{(n)}(t) )
  \end{split}
\end{equation}
and, for any integer $ n \ge 1$,
\begin{equation}\label{A22}
  \begin{split}
   \frac{d}{dt}  \|A^{(n)}(t)\|_2^2 & +  \|B^{(n)} (t) \|_2^2     \\
&=  - \| d_A A^{(n)} (t)\|_2^2 +\|P_n(t)\|_2^2 -2 (Q_{n+1}(t), A^{(n)}(t)).
  \end{split}
\end{equation}
$d_A$ represents  the exterior derivative with domain matching the boundary conditions and $d_A^*$ is its adjoint.
\end{lemma}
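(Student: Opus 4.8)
The plan is to differentiate the squared $L^2$-norms on the left-hand sides, replace the top-order time derivative $B^{(n+1)}$ (resp.\ $A^{(n+1)}$) by a $d_A$-exact (resp.\ $d_A^*$-coexact) term plus a lower-order polynomial using Proposition~\ref{propA1}, then integrate by parts and substitute a second identity from the same proposition. Since $A(\cdot)$ is smooth on $(0,T)\times M$ (and, when $M=\R^3$, has the integrability carried by the class of solutions under consideration), differentiation under the integral sign is legitimate throughout.

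For \eqref{A21}: write $\tfrac{d}{dt}\|B^{(n)}(t)\|_2^2 = 2\bigl(B^{(n+1)}(t),B^{(n)}(t)\bigr)$ and apply \eqref{A11} at order $n+1$, $B^{(n+1)} = d_{A}A^{(n+1)} + P_{n+1}$, so the right side becomes $2\bigl(d_A A^{(n+1)},B^{(n)}\bigr) + 2\bigl(P_{n+1},B^{(n)}\bigr)$. By Lemmas~\ref{LemBdy3} and~\ref{lemBdy4}, $A^{(n+1)}(t)\in Dom(d_{A(t)})$ and $B^{(n)}(t)\in Dom(d_{A(t)}^*)$ with boundary conditions matching the chosen domains (when $M=\R^3$ there is nothing to check), so the adjoint relation gives $\bigl(d_A A^{(n+1)},B^{(n)}\bigr) = \bigl(A^{(n+1)}, d_A^* B^{(n)}\bigr)$. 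Now insert \eqref{A12} at order $n+1$, $d_A^* B^{(n)} = -A^{(n+1)} - Q_{n+1}$, to get $\bigl(A^{(n+1)}, d_A^* B^{(n)}\bigr) = -\|A^{(n+1)}\|_2^2 - \bigl(A^{(n+1)},Q_{n+1}\bigr)$. Finally, $\|d_A^*B^{(n)}\|_2^2 = \|A^{(n+1)} + Q_{n+1}\|_2^2$ lets us trade the cross term $2\bigl(A^{(n+1)},Q_{n+1}\bigr)$ for $\|d_A^*B^{(n)}\|_2^2 - \|A^{(n+1)}\|_2^2 - \|Q_{n+1}\|_2^2$; collecting everything yields \eqref{A21}. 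The case $n=0$ is included since $P_1=Q_1=0$ and $B_{norm}=0$ places $B\in Dom(d_A^*)$ in the Neumann case.

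For \eqref{A22}: in the same way $\tfrac{d}{dt}\|A^{(n)}(t)\|_2^2 = 2\bigl(A^{(n+1)},A^{(n)}\bigr)$; substitute $A^{(n+1)} = -d_A^* B^{(n)} - Q_{n+1}$ from \eqref{A12} at order $n+1$, integrate by parts using $B^{(n)}\in Dom(d_A)$ and $A^{(n)}\in Dom(d_A^*)$ (valid for $n\ge 1$ by \eqref{D8}, \eqref{N8} and the maximality of the dual operator in each case, which also explains the restriction $n\geq 1$), then use $d_A A^{(n)} = B^{(n)} - P_n$ from \eqref{A11} together with $\|d_AA^{(n)}\|_2^2 = \|B^{(n)}-P_n\|_2^2$ to eliminate the cross term $(B^{(n)},P_n)$. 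This produces \eqref{A22}.

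The only genuine subtlety is the legitimacy of the integration-by-parts step when $M\ne\R^3$: one must know that the relevant time derivatives of $A$ and $B$ actually lie in the domains of $d_{A(t)}$ and $d_{A(t)}^*$, so that $(d_A\omega,\eta) = (\omega,d_A^*\eta)$ applies. This is precisely the content of the boundary-condition analysis of the preceding subsection (Lemmas~\ref{LemBdy3} and~\ref{lemBdy4}); once invoked, the remainder is bookkeeping with the identities of Proposition~\ref{propA1}, and the two identities can be checked against the known $n=0$ energy identity $\tfrac{d}{dt}\|B\|_2^2 = -2\|A'\|_2^2$ as a consistency test.
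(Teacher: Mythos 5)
Your proposal is correct and follows essentially the same route as the paper: differentiate the $L^2$ norm, substitute the identities \eref{A11} and \eref{A12} of Proposition \ref{propA1}, integrate by parts using the domain facts of Lemmas \ref{LemBdy3} and \ref{lemBdy4}, and eliminate the cross term — your expansion of $\|A^{(n+1)}+Q_{n+1}\|_2^2$ (resp.\ $\|B^{(n)}-P_n\|_2^2$) is algebraically identical to the paper's device of writing the pairing in two ways and adding. One cosmetic slip: for the integration by parts $(d_{A}A^{(n)},B^{(n)})=(A^{(n)},d_A^*B^{(n)})$ in \eref{A22} the memberships actually invoked are $A^{(n)}\in Dom(d_A)$ and $B^{(n)}\in Dom(d_A^*)$ (you named the transposed pair), but all four memberships hold by Lemma \ref{LemBdy3}, so nothing breaks.
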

        \begin{proof}
By identity \eref{A11}
\begin{align*}
(d/dt) \|B^{(n)}\|_2^2 &= 2 (B^{(n+1)}, B^{(n)}) \\
&=  2 ( d_A A^{(n+1)} +P_{n+1}, \,  B^{(n)}) \\
& = 2 (A^{(n+1)} ,  d_A^*  B^{(n)}) + 2 (P_{n+1}, \,  B^{(n)}),
\end{align*}
where we observe that the integration by parts is justified for both
boundary conditions. The first term on the right side may be written in two different ways using \eref{A12}
\begin{align*}
(A^{(n+1)} ,  d_A^*  B^{(n)}) & = - ( d_A^*B^{(n)} , d_A^*  B^{(n)}) - (Q_{n+1}, d_A^*  B^{(n)}) \\
\text{and also}\ &= -(A^{(n+1)} , A^{(n+1)})  - (A^{(n+1)}, Q_{n+1}).
\end{align*}
Adding the two we obtain
\begin{align*}
2 (A^{(n+1)} ,  d_A^*  B^{(n)}) & = - \| A^{(n+1)}\|_2^2 - \| d_A^*B^{(n)}\|_2^2
           - (Q_{n+1}, A^{(n+1)} + d_A^*  B^{(n)}) \\
&= - \| A^{(n+1)}\|_2^2 - \| d_A^*B^{(n)}\|_2^2 +  \| Q_{n+1} \|_2^2,
\end{align*}
where for the last equality we have applied \eref{A12} once more. \eref{A21} follows.

The second identity is proved in a similar manner. Using \eref{A12}
\begin{align*}
(d/dt) \|A^{(n)}\|_2^2 &= 2 (A^{(n+1)}, A^{(n)}) \\
&=  - 2 (d_A^* B^{(n)} , \,  A^{(n)} ) - 2( Q_{n+1}, \,  A^{(n)}) \\
& = -  2 ( B^{(n)}  ,  d_A  A^{(n)}) - 2(Q_{n+1} , \,  A^{(n)}),
\end{align*}
noting that the integration by parts is again justified for both
 boundary conditions. We rewrite the first term in two different ways using \eref{A11}
\begin{align*}
( B^{(n)}  ,  d_A  A^{(n)}) & = ( d_A A^{(n)} , d_A  A^{(n)}) + (P_n  \, , d_A  A^{(n)}) \\
&= ( B^{(n)}  , B^{(n)})- (P_n\ , B^{(n)} ).
\end{align*}
Adding the two we obtain
\begin{align*}
2( B^{(n)}  ,  d_A  A^{(n)})
& =  \| B^{(n)} \|_2^2 +  \| d_A A^{(n)} \|_2^2   + (P_n \ ,d_A  A^{(n)} - B^{(n)}) \\
&= \|B^{(n)}  \|_2^2 + \| d_A A^{(n)} \|_2^2 - \|P_n \|_2^2
\end{align*}
by applying once again \eref{A11} for the last equality. \eref{A22} follows.
\end{proof}

\section{Differential inequalities} \label{secdiffineq}

\subsection{Gaffney-Friedrichs-Sobolev  inequalities in three dimensions.} \label{secGFS}

In our estimates, the embedding of  $W_1$  into $L^6$  will be critical.
Define the gauge invariant version of the $W_1$ norm on $M$ by
\[
\|\w\|_{W_1^A(M)}^2 = \|\n^A \w \|_{L^2(M)}^2 + \| \w \|_{L^2(M)}^2
\]
for any $\frak k$ valued $p$-form $\w$ on $M$.

On a compact three-dimensional manifold $M$  with smooth boundary, as well as on $\R^3$,
the Sobolev embedding theorem implies that for any $\w \in W_1(M)$
\[
\|\w \|_6^2 \le (\kappa^2/2) ( \int_M | grad |\w |\, |^2 + \|\w \|^2_2 )
\]
for some constant $\kappa$ that depends on the  geometry of $M$,  but not on $A$ (see for example, \cite[Theorem 7.26]{GT}.) It holds also for $M= \R^3$. In view of Kato's inequality,
\[
 \int_M | grad | \w|\, |^2 \le \| \n^A \w \|_2^2,
\]
it follows that
\begin{align}
\|\w \|_6^2 \le (\kappa^2/2) \|\w\|_{W_1^A(M)}^2 \
                            \   \text{for}\  \w\  \text{and}\ A \in W_1(M).    \label{gaf54}
\end{align}

We recall the following gauge invariant Gaffney-Friedrichs inequality
\begin{theorem}$($\textup{\cite[Theorem 2.17]{CG1}}$)$  \label{thmGF} Suppose  that $M$ is a compact three-dimensional Riemannian manifold with smooth boundary  or that $M = \R^3$
and that $A$ is a $\frak k$ valued 1-form in $W_1(M)$
  with curvature $B$ such that $\|B\|_2 < \infty$.
  Then there exist constants $\lambda_M$ and $\gamma$  that  depend only
   on the geometry of $M$ and not on $A$,
such that, for
\beq
\lambda(B) :=  \lambda_M + \gamma \| B\|_2^{4},   \label{gaf51}
\eeq
there holds
        \begin{align}
(1/2) \|\w\|_{W_1^A(M)} ^2
   \le  \|d_A \w \|_2^2  + \|d_A^* \w \|_2^2
   + \lambda(B) \| \w\|_2^2                                                   \label{gaf50}
   \end{align}
 for any $\frak k$ valued $p$-form $\w$ in $W_1(M)$ satisfying  either
\[
 \w_{tan} =0\ \ \ \text{or}\ \ \ \w_{norm} =0
\]
if $M \ne \R^3$. Here $d_A$ is the covariant exterior derivative with domain matching the boundary condition on $\w$
and $d_A^*$ is its adjoint.
  \end{theorem}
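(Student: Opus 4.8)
The plan is to derive \eqref{gaf50} from the gauge-covariant Bochner--Weitzenb\"ock identity, using convexity of $\partial M$ to discard the boundary term and an interpolation argument to convert the resulting zeroth-order curvature term into $\gamma\|B\|_2^4\|\w\|_2^2$. By density of smooth forms satisfying $\w_{tan}=0$ (resp.\ $\w_{norm}=0$) in the corresponding $W_1$-space, and by first taking $A$ smooth and passing to a $W_1$-limit at the end, it is enough to prove \eqref{gaf50} for smooth $\w$ and smooth $A$.

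First I would record the pointwise Weitzenb\"ock formula for $\kf$-valued $p$-forms,
\[
d_A d_A^*\w + d_A^* d_A\w = (\n^A)^*\n^A\,\w + \mathcal R\,\w + \mathcal B_A\,\w,
\]
where $\mathcal R$ is the usual Weitzenb\"ock curvature endomorphism built from the Riemann tensor of $M$ --- it is zero when $M=\R^3$ and in general is bounded pointwise in terms of the geometry of $M$ --- and $\mathcal B_A$ is the zeroth-order endomorphism obtained by letting the curvature two-form $B$ act on $\w$ through $\operatorname{ad}$ together with exterior and interior multiplication, so that $|\mathcal B_A\w|\le c\,|B|\,|\w|$ pointwise. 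Since the difference between $d_A,d_A^*$ and $\n^A$ is algebraic, pairing with $\w$ and integrating over $M$ produces exactly the boundary term of the classical Gaffney--Friedrichs computation, with $\n$ replaced by $\n^A$:
\[
\|d_A\w\|_2^2 + \|d_A^*\w\|_2^2 = \|\n^A\w\|_2^2 + \int_M\langle\mathcal R\,\w,\w\rangle + \int_M\langle\mathcal B_A\,\w,\w\rangle + \int_{\partial M}\mathcal S(\w,\w),
\]
where $\mathcal S$ is a quadratic form on $\partial M$ built from the second fundamental form of $\partial M$. Under either $\w_{tan}=0$ or $\w_{norm}=0$ the normal derivatives of $\w$ drop out of $\mathcal S$, and what remains is nonnegative because $M$ is convex; hence $\int_{\partial M}\mathcal S(\w,\w)$ may be dropped from the right side. (When $M=\R^3$ there is no boundary term and $\mathcal R=0$.)

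It remains to estimate the two bulk curvature integrals. The Riemannian one obeys $|\int_M\langle\mathcal R\,\w,\w\rangle|\le\lambda_M\|\w\|_2^2$ with $\lambda_M$ depending only on $M$. For the other, H\"older's inequality gives $|\int_M\langle\mathcal B_A\,\w,\w\rangle|\le c\,\|B\|_2\,\|\w\|_4^2$; interpolating between $L^2$ and $L^6$ gives $\|\w\|_4^2\le\|\w\|_2^{1/2}\|\w\|_6^{3/2}$, and then \eqref{gaf54} gives $\|\w\|_6^{3/2}\le(\kappa^2/2)^{3/4}\|\w\|_{W_1^A(M)}^{3/2}$. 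One application of Young's inequality with exponents $4/3$ and $4$ then yields, for every $\varepsilon>0$,
\[
\Bigl|\int_M\langle\mathcal B_A\,\w,\w\rangle\Bigr|\le\varepsilon\,\|\w\|_{W_1^A(M)}^2 + C_\varepsilon\,\|B\|_2^4\,\|\w\|_2^2.
\]
Taking $\varepsilon=1/2$, using $\|\w\|_{W_1^A(M)}^2=\|\n^A\w\|_2^2+\|\w\|_2^2$, and absorbing the $\varepsilon$-term into the left side of the identity above gives \eqref{gaf50} with $\lambda(B)=\lambda_M+\gamma\|B\|_2^4$ as in \eqref{gaf51}, after relabelling $\lambda_M$ and $\gamma$ to absorb the remaining geometric constants.

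The main obstacle is the boundary analysis: one must verify that in the gauge-covariant setting the boundary form $\mathcal S$ still involves no normal derivative of $\w$ --- this is the content of the classical Gaffney identity for smooth forms, and the $A$-dependent pieces, being of order zero, contribute nothing along $\partial M$, so no boundary hypothesis on $A$ is needed --- and that a single one-sided condition, $\w_{tan}=0$ \emph{or} $\w_{norm}=0$, already forces $\mathcal S(\w,\w)$ to be controlled by the second fundamental form and hence nonnegative under convexity. A secondary technical point is the final approximation, which needs density of smooth forms satisfying the prescribed boundary condition in $\operatorname{Dom}(d_A)$, respectively $\operatorname{Dom}(d_A^*)$, together with lower semicontinuity of $\w\mapsto\|\w\|_{W_1^A(M)}$ under the limit.
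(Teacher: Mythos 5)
This theorem is imported from \cite[Theorem 2.17]{CG1}; the present paper gives no proof, so there is nothing in-paper to compare line by line. That said, your route --- gauge-covariant Weitzenb\"ock identity, Gaffney boundary term controlled by the second fundamental form under a one-sided boundary condition, and H\"older $+$ $L^2$--$L^6$ interpolation $+$ Young with exponents $4/3$ and $4$ to turn $\int_M|B|\,|\w|^2$ into $\varepsilon\|\w\|_{W_1^A}^2+C_\varepsilon\|B\|_2^4\|\w\|_2^2$ --- is evidently the proof of \cite{CG1}: the paper records precisely the constants $\gamma=(1/4)(3\kappa^2)^3c^4$ and $\lambda_M=1+\|W\|_\infty+\theta$ that this computation produces, and the identical interpolation trick reappears verbatim in the proof of Lemma \ref{lemdi3}. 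So your proposal is correct in substance and follows the intended argument.

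One caveat: as stated, Theorem \ref{thmGF} assumes only that $M$ is a compact three-manifold with smooth boundary, not that it is convex, whereas your boundary step simply discards $\int_{\p M}\mathcal S(\w,\w)$ on the grounds of nonnegativity. In the non-convex case the boundary form must instead be bounded below by $-\mathrm{const}\cdot\int_{\p M}|\w|^2$ using a lower bound on the second fundamental form, and that surface integral absorbed into the interior norms via a trace/interpolation inequality; this is exactly where the constant $\theta$ in $\lambda_M=1+\|W\|_\infty+\theta$ comes from. For the purposes of this paper ($M$ convex or $M=\R^3$, so $\theta=0$) your version suffices, but to prove the theorem in its stated generality you would need to supply that extra absorption step. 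The remaining technical points you flag (density of smooth forms respecting the boundary condition, and approximating $A$ in $W_1$ while keeping control of $\|B\|_2$ via $[A\wedge A]=2(B-dA)\in L^2$) are genuine but routine.
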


We recall from \cite{CG1} that $\gamma = (1/4) (3\kappa^2)^3 c^4$ where $c\equiv  \sup \{ \| ad\ x\|_{\frak k \rightarrow \frak k} : |x|_{\frak k} \le 1\}$ is a constant that measures the non-commutativity of $K$ and which is zero if $K$ is commutative. The constant $\kappa$ is the Sobolev constant from \eref{gaf54}.
       The constant $\lambda_M$ is given by
\[
 \lambda_M = 1 + \|W\|_\infty +\theta,
\]
where $W$ is the Weitzenb\"ock tensor on $p$-forms and $\theta$ is a constant determined
 by the lower bound of the second fundamental form on $\p M$. If $M$ is convex then we can take
 $\theta =0$ and if $M=\R^3$  or is a convex subset of $\R^3$ then we can take $\lambda_M = 1$.
Thus in this paper we take $\lambda_M = 1$.

\begin{corollary}\label{corGFS} $($Gaffney-Friedrichs-Sobolev inequality$)$
Suppose that $M= \R^3$ or $M$ is the closure of a bounded convex open subset of $\R^3$ with smooth boundary.
Let $A \in W_1(M)$  and suppose that $\|B\|_2 < \infty$.
If $\w$ is a $p$-form in $W_1(M) \cap  Dom(d_A)\cap Dom(d_A^*)$ then
\beq
\|\w \|_6^2
\le \kappa^2 ( \|d_A \w \|_2^2 + \|d_A^* \w \|_2^2
     + \lambda\| \w \| _2^2)                                  \label{gaf68}
\eeq
with $\lambda = \lambda(B) = 1 + \gamma \|B\|_2^4$.

Note: If $M \ne \R^3$ and $\w \in W_1(M)$ then the domain restrictions are equivalent
 to   $\w_{tan}= 0$ or $\w_{norm}=0$.
 \end{corollary}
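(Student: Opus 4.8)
The corollary is essentially a repackaging of Theorem~\ref{thmGF} (the gauge invariant Gaffney-Friedrichs inequality) combined with the Sobolev embedding \eref{gaf54}, so the plan is simply to chain the two inequalities together and check that the constants come out as stated. First I would observe that since $\w \in W_1(M) \cap Dom(d_A) \cap Dom(d_A^*)$ and, in the case $M \ne \R^3$, an element of $W_1(M)$ lies in $Dom(d_A)$ iff $\w_{tan} = 0$ and in $Dom(d_A^*)$ iff $\w_{norm} = 0$ by Lemma~\ref{LemBdy1}, the hypotheses of Theorem~\ref{thmGF} are met: $\w$ satisfies one (in fact here both, or the appropriate) boundary condition $\w_{tan} = 0$ or $\w_{norm} = 0$, and $A \in W_1(M)$ with $\|B\|_2 < \infty$. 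Hence \eref{gaf50} applies.

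Next I would feed \eref{gaf50} into \eref{gaf54}. From \eref{gaf54} we have $\|\w\|_6^2 \le (\kappa^2/2)\|\w\|_{W_1^A(M)}^2$, and from \eref{gaf50} we have $(1/2)\|\w\|_{W_1^A(M)}^2 \le \|d_A \w\|_2^2 + \|d_A^*\w\|_2^2 + \lambda(B)\|\w\|_2^2$, i.e. $\|\w\|_{W_1^A(M)}^2 \le 2(\|d_A\w\|_2^2 + \|d_A^*\w\|_2^2 + \lambda(B)\|\w\|_2^2)$. Substituting, the factor of $2$ cancels the $1/2$ and yields
\[
\|\w\|_6^2 \le \kappa^2\bigl(\|d_A\w\|_2^2 + \|d_A^*\w\|_2^2 + \lambda(B)\|\w\|_2^2\bigr),
\]
which is exactly \eref{gaf68}. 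Finally I would identify the constant $\lambda = \lambda(B)$: the discussion following Theorem~\ref{thmGF} records that for $M = \R^3$ or a convex subset of $\R^3$ one takes $\lambda_M = 1$, so that \eref{gaf51} becomes $\lambda(B) = 1 + \gamma\|B\|_2^4$, matching the statement.

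There is no real obstacle here; the only points needing a word of care are (i) confirming that the boundary-condition hypothesis of Theorem~\ref{thmGF} is implied by the domain conditions on $\w$ via Lemma~\ref{LemBdy1} — this is exactly the content of the ``Note'' at the end of the corollary — and (ii) making sure that $\|B\|_2 < \infty$ (assumed) is what is needed to make $\lambda(B)$ finite and to invoke both \eref{gaf54} (which needs $A \in W_1$) and Theorem~\ref{thmGF}. The convexity assumption on $M$ enters only to legitimize the choice $\lambda_M = 1$ (it lets one take $\theta = 0$ in the formula for $\lambda_M$). Everything else is a one-line substitution.
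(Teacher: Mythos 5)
Your proposal is correct and is exactly the paper's argument: the paper's proof reads simply ``Combine \eref{gaf50} and \eref{gaf54},'' and your chaining of the Gaffney-Friedrichs inequality with the Sobolev embedding, together with the constant check $\lambda_M=1$ for convex $M$ or $\R^3$, fills in precisely those details.
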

       \begin{proof}
    Combine \eref{gaf50} and \eref{gaf54}.
\end{proof}

In the following lemma  lower order time derivatives are singled out in what is otherwise
the standard Gaffney-Friedrichs-Sobolev inequality. We will use the notation $H_1^A$ instead
of $W_1^A$ because the argument of these norms always satisfies the relevant boundary
 conditions when $M \ne \R^3$. Moreover agreement of time between the argument and $A$
 will also  be understood. Thus $\|A^{(n)}(t)\|_{H_1^A}^2 = \|\n^{A(t)}  A^{(n)}(t)\|_2^2 + \| A^{(n)}(t)\|_2^2$
 as in Notation \ref{ginvsob}.
 These Sobolev norms are gauge invariant.

\begin{lemma} \label{lemGFS} {\rm (GFS)}   Let $A(t)$ be a smooth solution to the
 Yang-Mills heat equation as in Proposition \ref{propA1}.
 Taking $\gamma$ as the constant defined after Theorem \ref{thmGF}, define
\beq
\lambda(t) = 1+ \gamma \|B(t)\|_2^4.     \label{gaf51t}
\eeq
Then  for any $n\geq 1$ we have
\begin{align}
\kappa^{-2}\|A^{(n)}(t)\|_6^2 &\le (1/2)\|A^{(n)}(t)\|_{H_1^A}^2                  \notag\\
&\le \| R_n(t) \|_2^2 + \| d_A A^{(n)}(t) \|_2^2
                        +\lambda(t) \|A^{(n)}(t)\|_2^2                                         \label{di35a} \\
             &\le   \| R_n(t) \|_2^2 + 2\|P_n(t)\|_2^2 + 2 \|B^{(n)}(t) \|_2^2
                             +  \lambda(t) \|A^{(n)}(t)\|_2^2.                              \label{di35b}
\end{align}
For any $\  n \ge 0$  we have
\begin{align}
\kappa^{-2}  & \| B^{(n)}(t)\|_6^2     \le (1/2)  \| B^{(n)}(t)\|_{H_1^A}^2          \notag\\
&\le  \| S_n(t)\|_2^2 +  \| d_A^* B^{(n)}(t)\|_2^2
            + \lambda(t) \|B^{(n)}(t)\|_2^2                                                      \label{di35c}\\
      &     \le  \| S_n(t)\|_2^2 + 2\|Q_{n+1}(t) \|_2^2 + 2 \|A^{(n+1)}(t)\|_2^2
            + \lambda(t) \|B^{(n)}(t)\|_2^2,                                         \label{di35d}
 \end{align}
  \end{lemma}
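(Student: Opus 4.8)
The plan is to derive the four-line chain for $A^{(n)}(t)$ directly from the Gaffney-Friedrichs-Sobolev inequality (Corollary \ref{corGFS}) applied to the form $\w = A^{(n)}(t)$, and likewise for $B^{(n)}(t)$, and then to substitute the pointwise identities of Proposition \ref{propA1} to trade $d_A$ and $d_A^*$ of the top-order derivative for lower order terms plus the other top-order derivative. First I would check the hypotheses of Corollary \ref{corGFS}: since $A(t)$ is a smooth solution as in Proposition \ref{propA1}, we have $A(t)\in W_1(M)$ with $\|B(t)\|_2<\infty$, and by Lemma \ref{LemBdy3} (specifically \eref{D8} and \eref{N8}) each $A^{(n)}(t)$ and $B^{(n)}(t)$ lies in $W_1(M)\cap Dom(d_A)\cap Dom(d_A^*)$ for $n\ge 1$ (and $B^{(n)}$ for $n\ge 0$), with the appropriate tangential or normal component vanishing when $M\ne\R^3$. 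So Corollary \ref{corGFS} applies with $\lambda=\lambda(B(t))=1+\gamma\|B(t)\|_2^4=\lambda(t)$, giving $\kappa^{-2}\|\w\|_6^2 \le \|d_A\w\|_2^2+\|d_A^*\w\|_2^2+\lambda(t)\|\w\|_2^2$; the middle term $(1/2)\|\w\|_{H_1^A}^2$ comes from \eref{gaf50} combined with \eref{gaf54} exactly as in the proof of Corollary \ref{corGFS}.

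Next, for the $A^{(n)}$ chain with $n\ge 1$: the GFS inequality gives the bound $\|d_A A^{(n)}(t)\|_2^2 + \|d_A^* A^{(n)}(t)\|_2^2 + \lambda(t)\|A^{(n)}(t)\|_2^2$. By \eref{A13} we have $d_A^* A^{(n)}(t) = -R_n(t)$, so $\|d_A^* A^{(n)}(t)\|_2^2 = \|R_n(t)\|_2^2$; substituting this yields \eref{di35a}. To pass to \eref{di35b}, use \eref{A11}, $d_A A^{(n)}(t) = B^{(n)}(t) - P_n(t)$, so $\|d_A A^{(n)}(t)\|_2^2 \le 2\|B^{(n)}(t)\|_2^2 + 2\|P_n(t)\|_2^2$ by the elementary inequality $\|x+y\|^2\le 2\|x\|^2+2\|y\|^2$. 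Symmetrically, for the $B^{(n)}$ chain with $n\ge 0$: GFS gives $\|d_A B^{(n)}(t)\|_2^2 + \|d_A^* B^{(n)}(t)\|_2^2 + \lambda(t)\|B^{(n)}(t)\|_2^2$; by \eref{A14} we have $d_A B^{(n)}(t) = S_n(t)$, so $\|d_A B^{(n)}(t)\|_2^2 = \|S_n(t)\|_2^2$, giving \eref{di35c}, and by \eref{A12}, $d_A^* B^{(n)}(t) = -A^{(n+1)}(t) - Q_{n+1}(t)$, so $\|d_A^* B^{(n)}(t)\|_2^2 \le 2\|A^{(n+1)}(t)\|_2^2 + 2\|Q_{n+1}(t)\|_2^2$, giving \eref{di35d}.

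I should be careful about one bookkeeping point: in \eref{di35c}–\eref{di35d} the case $n=0$ requires $B^{(0)}=B(t)\in Dom(d_A)\cap Dom(d_A^*)$ and $A^{(1)}=A'(t)$, all of which are covered by Lemma \ref{LemBdy3} (note $B(t)\in Dom(d_A^*)$ with $d_A^*B(t)=A'(t)$, and $B(t)\in Dom(d_A)$ with $d_A B(t)=S_0(t)=0$ the Bianchi identity), and in the Neumann case $A'(t)_{norm}=0$ as shown there, so \eref{A12} with $n=1$ is the Yang-Mills equation itself and $Q_1=0$. There is essentially no hard step here — the lemma is a routine assembly of Corollary \ref{corGFS} with the identities of Proposition \ref{propA1} — but the one thing that deserves explicit mention is the verification that every form to which $d_A$ or $d_A^*$ is applied lies in the correct domain when $M\ne\R^3$, which is precisely the content of Lemmas \ref{LemBdy3} and \ref{lemBdy4} and need only be invoked, not reproved.
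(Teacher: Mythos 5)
Your proposal is correct and follows essentially the same route as the paper: apply the Gaffney--Friedrichs inequality \eref{gaf50} together with the Sobolev inequality \eref{gaf54} (i.e.\ Corollary \ref{corGFS}) to $\w=A^{(n)}(t)$ and $\w=B^{(n)}(t)$, with the boundary-condition hypotheses supplied by Lemma \ref{LemBdy3}, and then substitute the identities \eref{A11}--\eref{A14} of Proposition \ref{propA1}, using $\|x+y\|_2^2\le 2\|x\|_2^2+2\|y\|_2^2$ for the final lines. Your bookkeeping on the $n=0$ case and on the exact equalities $\|d_A^*A^{(n)}\|_2^2=\|R_n\|_2^2$ and $\|d_AB^{(n)}\|_2^2=\|S_n\|_2^2$ matches the paper's argument.
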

  \begin{proof}
Lemma \ref{LemBdy3} shows that for either boundary value problem, $A^{(n)}(t)$  satisfies
 the correct boundary condition that allows us to apply the   Gaffney-Friedrichs inequality  \eref{gaf50}.
 Using also the Sobolev inequality \eref{gaf54} we find
\begin{align}
\kappa^{-2}\|A^{(n)}(t)\|_6^2 &\le (1/2) \|A^{(n)}(t)\|_{H_1^A}^2   \notag \\
&\le   \| d_A A^{(n)}(t) \|_2^2 +  \| d_A^* A^{(n)}(t) \|_2^2
                        +\lambda(t) \|A^{(n)}(t)\|_2^2.                         \label{di37}
\end{align}
\eref{di35a} now follows from the identity \eref{A13}.    The identity \eref{A11} shows that
$\|d_A A^{(n)} \|_2^2 = \|B^{(n)} - P_n\|_2^2 \le 2 \|P_n\|_2^2  + 2\|B^{(n)}\|_2^2$, which
proves \eref{di35b}.

Similarly, the Sobolev inequality \eref{gaf54} and Gaffney-Friedrichs inequality \eref{gaf50}  show that
\begin{align}
\kappa^{-2}  \| B^{(n)}\|_6^2&\le (1/2) \| B^{(n)}(t)\|_{H_1^A}^2  \notag\\
&\le \|d_A  B^{(n)}\|_2^2 + \| d_A^* B^{(n)}\|_2^2
+ \lambda(t) \|B^{(n)}\|_2^2,                                  \label{di38}
\end{align}
which yields \eref{di35c} in view of the identity \eref{A14}. Moreover, in accordance with \eref{A12}
we have
$ \|d_A^* B^{(n)}\|_2^2 = \| A^{(n+1)} + Q_{n+1}\|_2^2 \le 2 \| Q_{n+1}\|_2^2 + 2 \| A^{(n+1)}\|_2^2$,
from which \eref{di35d} follows.
\end{proof}

\begin{remark}{\rm The Gaffney-Friedrichs-Sobolev inequalities take a very simple form
 in case $n=0$ or $1$. Thus we have
\begin{align}
 \kappa^{-2} \| B(t)\|_6^2 &\le \|A'(t)\|_2^2 + \lambda(t) \|B(t)\|_2^2, \label{di35b1} \\
\kappa^{-2}\|A'(t)\|_6^2 &\le  \| B'(t) \|_2^2
                        +\lambda(t) \|A'(t)\|_2^2\ \ \ \ \ \ \ \ \text{and}             \label{di35a1} \\
\kappa^{-2}  \| B'(t)\|_6^2    &\le \| \, [B(t) \wedge A'(t)]\, \|_2^2
                          +\| d_A^* B'(t) \|_2^2 + \lambda(t) \|B'(t)\|_2^2.            \label{di35c1}
 \end{align}
 The first of these follows from \eref{di38} with $n=0$ because $d_A B=0$ by the Bianchi identity
 and $d_A^* B = - A'$ by the Yang-Mills heat equation. The second follows directly form \eref{di37}
 because $d_A A' =B'$  and $d_A^* A'=0$. The third follows from \eref{di35c} because
 $S_1(t)= [B(t)\wedge A'(t)]$.
 }
 \end{remark}

\subsection{Differential inequalities.}
 For the remainder of this section we let $A(t)$ be a smooth solution to the
Yang-Mills heat equation as in Lemma \ref{lemA1}  and define $\lambda(t) $
as in   \eref{gaf51t}. We will be using the Gaffney-Friedrichs-Sobolev inequalities
  to estimate the right side of the integral identities of Lemma \ref{lemA2}.

\begin{lemma}\label{lemdi3}  $($Estimate of \eref{A22} for $n \ge 1$$)$.
For each integer $n \ge 1$
there is a constant $c_n$ depending only
on $n$, the manifold $M$ and the vector bundle $\mathcal V$ such that
\begin{equation} \label{di30}
\begin{split}
\frac{d}{dt} \| A^{(n)}(t)\|_2^2 + \|B^{(n)}(t)\|_2^2
&\le   \Big(\lambda(t) + c_n\|B(t)\|_2^4\Big)\| A^{(n)}(t)\|_2^2  + \| P_n(t)\|_2^2 \\
&\qquad  + 2 \kappa^2 \|\hat Q_{n+1}(t)\|_{6/5}^2 +  \| R_n(t) \|_2^2
\end{split}
\end{equation}
where $\hat Q_{n+1}(t)$ is defined in \eref{di33}.

Note: All time derivatives of $A$ or $B$ that occur in $P_n(t)$, $\hat Q_{n+1}(t)$ and $R_n(t)$ are of order less than $n$.
\end{lemma}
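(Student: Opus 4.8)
The plan is to start from the integral identity \eref{A22} of Lemma \ref{lemA2}, namely
\[
\frac{d}{dt}\|A^{(n)}(t)\|_2^2 + \|B^{(n)}(t)\|_2^2
= -\|d_A A^{(n)}(t)\|_2^2 + \|P_n(t)\|_2^2 - 2(Q_{n+1}(t), A^{(n)}(t)),
\]
and to estimate the inner product term $-2(Q_{n+1}(t), A^{(n)}(t))$, which is the only non-sign-definite quantity on the right. Recall that $Q_{n+1}(t) = \sum_{i=1}^{n}\bar c_{(n+1)i}[A^{(i)}(t)\lrc B^{(n-i)}(t)]$; the summand with $i=n$ is $\bar c_{(n+1)n}[A^{(n)}(t)\lrc B(t)]$, which contains the top-order factor $A^{(n)}$, while every other summand involves only time derivatives of order strictly less than $n$. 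The strategy is therefore to split $Q_{n+1}$ into the distinguished top-order piece (proportional to $[A^{(n)}\lrc B]$) and a remainder $\hat Q_{n+1}$ consisting of all the lower-order summands, which is exactly the definition referenced as \eref{di33}.

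For the distinguished piece, I would bound $|(\,[A^{(n)}\lrc B],\, A^{(n)})| \le \||\,[A^{(n)}\wedge A^{(n)}]\,|\|_{?}\cdots$ — more precisely, estimate it by $\|A^{(n)}\|_6 \|B\|_3 \|A^{(n)}\|_2$ or by $\|A^{(n)}\|_6^2\|B\|_{3/2}$ via Hölder, then control $\|A^{(n)}\|_6$ using the Gaffney-Friedrichs-Sobolev inequality \eref{di35a} of Lemma \ref{lemGFS}. Since \eref{di35a} reads $\kappa^{-2}\|A^{(n)}\|_6^2 \le \|R_n\|_2^2 + \|d_A A^{(n)}\|_2^2 + \lambda(t)\|A^{(n)}\|_2^2$, the $\|d_A A^{(n)}\|_2^2$ that appears can be absorbed into the negative $-\|d_A A^{(n)}\|_2^2$ term in \eref{A22} provided the coefficient produced by the Cauchy-Schwarz/Young splitting is small enough; this is where a factor like $\|B\|_2^2$ (small by the finite-action hypothesis, or absorbable after choosing $T$ small) enters and produces the $c_n\|B(t)\|_2^4\|A^{(n)}(t)\|_2^2$ term in the claimed bound \eref{di30}. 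The leftover contributions from $R_n$ and $\lambda(t)\|A^{(n)}\|_2^2$ land in the terms $\|R_n(t)\|_2^2$ and $\lambda(t)\|A^{(n)}(t)\|_2^2$ on the right-hand side of \eref{di30}.

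For the lower-order remainder $\hat Q_{n+1}(t)$, I would estimate $|(\hat Q_{n+1}, A^{(n)})| \le \|\hat Q_{n+1}\|_{6/5}\|A^{(n)}\|_6$ by the Hölder pairing of $L^{6/5}$ with $L^6$, then again invoke \eref{gaf54}/\eref{di35a} to write $\|A^{(n)}\|_6 \le \kappa \|A^{(n)}\|_{H_1^A}/\sqrt2$ and apply Young's inequality $ab \le \tfrac{\varepsilon}{2}a^2 + \tfrac{1}{2\varepsilon}b^2$ to split off $\|A^{(n)}\|_{H_1^A}^2$ (which is again controlled by $\|d_A A^{(n)}\|_2^2 + \|R_n\|_2^2 + \lambda\|A^{(n)}\|_2^2$ and absorbed as before) against a constant multiple of $\|\hat Q_{n+1}\|_{6/5}^2$; choosing the constants so that the total coefficient of $\|d_A A^{(n)}\|_2^2$ stays below $1$ gives the clean $2\kappa^2\|\hat Q_{n+1}(t)\|_{6/5}^2$ term. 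The main obstacle — and the step requiring the most care — is the bookkeeping of absorption: one must verify that the combined coefficient of $\|d_A A^{(n)}\|_2^2$ arising from both the top-order and lower-order estimates is strictly less than the $1$ available from $-\|d_A A^{(n)}\|_2^2$ in \eref{A22}, which forces $\|B(t)\|_2$ to be small and hence uses the hypothesis that $\|A_0\|_{H_{1/2}}$ is small together with the monotonicity of $\|B(t)\|_2$; the $L^{6/5}$ choice of exponent for $\hat Q_{n+1}$ is dictated precisely by the duality with the $L^6$ Sobolev estimate, so no other exponent would close the argument as cleanly.
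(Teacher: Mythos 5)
Your overall skeleton matches the paper's: start from \eref{A22}, split $Q_{n+1}=\hat Q_{n+1}+\bar c_{(n+1)n}[A^{(n)}\lrc B]$ as in \eref{di33}, pair $\hat Q_{n+1}$ with $A^{(n)}$ in $L^{6/5}$--$L^6$ duality, and absorb the $\|d_A A^{(n)}\|_2^2$ coming from the Gaffney--Friedrichs--Sobolev inequality \eref{di35a} into the $-\|d_A A^{(n)}\|_2^2$ of \eref{A22}; your handling of the $\hat Q_{n+1}$ term is essentially the paper's. The gap is in the top-order term $\bar c_{(n+1)n}([A^{(n)}\lrc B],A^{(n)})$. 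Neither of your proposed H\"older splittings closes the argument: $\|A^{(n)}\|_6^2\|B\|_{3/2}$ would require $\|B\|_{3/2}$ to be small (and it need not even be finite when $M=\R^3$), while $\|A^{(n)}\|_6\|B\|_3\|A^{(n)}\|_2$ followed by Young produces $\|B\|_3^2\|A^{(n)}\|_2^2$ rather than $\|B\|_2^4\|A^{(n)}\|_2^2$, and $\|B\|_3$ is not controlled at this stage except by interpolating against $\|B\|_6$, which is circular. More importantly, your plan to close the absorption by making $\|B(t)\|_2$ small --- invoking finite action or smallness of $\|A_0\|_{H_{1/2}}$ --- proves a weaker statement than the lemma, which asserts the bound for \emph{every} smooth solution with a constant $c_n$ depending only on $n$, $M$ and $\V$; the smallness of $\|A_0\|_{H_{1/2}}$ enters the paper only at the very end, to produce a smooth gauge-equivalent solution, and is never used in this lemma.

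The missing device is the three-factor H\"older estimate $\|\,|A^{(n)}|^2|B|\,\|_1 \le \|A^{(n)}\|_6^{3/2}\,\|A^{(n)}\|_2^{1/2}\,\|B\|_2$ (exponents $4,4,2$), followed by Young's inequality with conjugate exponents $4/3$ and $4$:
\[
\|A^{(n)}\|_6^{3/2}\|A^{(n)}\|_2^{1/2}\|B\|_2 \;\le\; \tfrac34\,\epsilon^{-4/3}\|A^{(n)}\|_6^2 \;+\; \tfrac14\,\epsilon^4\,\|A^{(n)}\|_2^2\,\|B\|_2^4 .
\]
This makes the coefficient of $\|A^{(n)}\|_6^2$ a free parameter: one fixes $\epsilon$ once and for all (depending only on $n$, $\kappa$ and the commutator constant $c$) so that this coefficient plus the $\tfrac12\kappa^{-2}$ produced by the $\hat Q_{n+1}$ estimate equals exactly $\kappa^{-2}$; a single application of \eref{di35a} then cancels $\|d_A A^{(n)}\|_2^2$ with coefficient exactly one and deposits the price as $c_n\|B(t)\|_2^4\|A^{(n)}(t)\|_2^2$ with $c_n$ of the required form. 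No smallness of $\|B\|_2$, of $T$, or of $\|A_0\|_{H_{1/2}}$ is needed anywhere in this step.
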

\begin{proof}  We need to bound the right hand side of the  integral equality \eref{A22}.
We will derive a bound for the last term  in \eref{A22} which will include a term that
 cancels with the term $- \| d_A A^{(n)}(t)\|_2^2$.  Define
\begin{align}
   \hat Q_{n+1}(t) &= \sum_{i=1}^{n-1} \overline c_{(n+1)i} [A^{(i)}(t) \lrc B^{(n-i)}(t)],
            \ \ \text{for}\ \ n\ge 2 \ \       \text{and}                                \label{di33}\\
   \hat Q_2(t) &= 0.  \notag
   \end{align}
Then \eref{A12}, with $n$ replaced by $n+1$, shows that
$Q_{n+1} = \hat Q_{n+1} +\bar c_{(n+1)n} [A^{(n)}\lrc B]$.
The only time derivatives $A^{(i)}$ in $\hat Q_{n+1}$ are  of order less than $n$.

 For non-negative functions $f$ and $g$ H\"older's inequality gives
$\|f^2 g\|_1= \|f^{3/2} f^{1/2} g\|_1 \le \|f^{3/2}\|_4 \|f^{1/2}\|_4 \|g\|_2 = \|f\|_6^{3/2} \|f\|_2^{1/2} \|g\|_2$.
Therefore,  for any $\epsilon >0$ we have
\begin{align}
\|f^2 g\|_1 &\le  \|f\|_6^{3/2} \|f\|_2^{1/2} \|g\|_2      \notag \\
&\le (3/4) (\epsilon^{-1}\|f\|_6^{3/2})^{4/3} + (1/4) (\epsilon \|f\|_2^{1/2} \|g\|_2)^4  \notag\\
&= (3/4) \epsilon^{-4/3} \|f\|_6^2   +(1/4) \epsilon^4 \| f\|_2^2 \| g\|_2^4.   \label{di39}
\end{align}

Let $c_n' = 2c \bar c_{(n+1)n}$. Then
\begin{align*}
   \Big| 2(Q_{n+1}, A^{(n)})\Big|
   &= 2 \Big| (\hat Q_{n+1}, A^{(n)}) + \bar c_{(n+1)n} ([A^{(n)} \lrc B], A^{(n)}) \Big| \\
   &\le \Big\{2 \| \hat Q_{n+1}\|_{6/5} \|A^{(n)}\|_6\Big\}  + \Big\{c_n' \| \, |A^{(n)}|^2 |B|\, \|_1\Big\} \\
   &\le \Big\{2 \kappa^2   \| \hat Q_{n+1}\|_{6/5}^2
       + (1/2)\kappa^{-2} \| A^{(n)}\|_6^2\Big\} \\
   &\qquad  +\Big\{ \frac{c_n'}{4} \,  \epsilon^4 \|B   \|_2^4\, \|A^{(n)} \|_2^2
   + (\frac{3c_n'}{4} \, \epsilon^{-4/3}\kappa^2) \, \kappa^{-2}\|A^{(n)}  \|_6^2\Big\},
\end{align*}
wherein we used \eref{di39} with $f = |A^{(n)}(t)|$ and $g = |B(t)|$.

Choose $\epsilon$ such that
    $(\frac{3c_n'}{4} \, \epsilon^{-4/3}\kappa^2) =1/2$. The two $\|A^{(n))}\|_6^2$ terms add to
    $\kappa^{-2}\|A^{(n))}\|_6^2$.
    Using the  Gaffney-Friedrichs-Sobolev inequality \eref{di35a}, we find
\begin{align*}
 &\Big| 2(Q_{n+1}, A^{(n)})\Big|
   \le  2 \kappa^2 \|\hat Q_{n+1}\|_{6/5}^2 + c_n\|B \|_2^4 \| A^{(n)}\|_2^2
  +\Big(\kappa^{-2} \| A^{(n)}\|_6^2\Big)                \notag  \\
 & \qquad \le  2 \kappa^2 \|\hat Q_{n+1}\|_{6/5}^2 + c_n\|B \|_2^4 \| A^{(n)}\|_2^2
 +  \Big( \| R_n  \|_2^2 + \| d_A A^{(n)}  \|_2^2  +\lambda(t) \|A^{(n)} \|_2^2 \Big),
 \end{align*}
 where $c_n =(c_n'/4) \epsilon^4 = (1/4) (3/2)^3 \kappa^6 (c_n')^4$.
  Insert this bound into \eref{A22}, canceling the terms  $\|d_A A^{(n)}\|_2^2$,   to find
     \begin{align*}
& \frac{d}{dt} \|A^{(n)}(t)\|_2^2 +  \|B^{(n)} (t) \|_2^2  \\
& \qquad \le  \|P_n(t)\|_2^2  + 2 \kappa^2 \|\hat Q_{n+1}\|_{6/5}^2 + c_n\|B \|_2^4 \| A^{(n)}\|_2^2
 +  \Big( \| R_n  \|_2^2  +\lambda(t) \|A^{(n)} \|_2^2 \Big)
\end{align*}
which is \eref{di30}.
\end{proof}

\begin{lemma}\label{lemdi4}
 $($Estimate of \eref{A21} for $n \ge 0$$)$ For each integer $n \ge0$ there holds
\begin{equation}  \label{di40}
  \begin{split}
&\frac{d}{dt}\|B^{(n)}(t)\|_2^2 + \| A^{(n+1)}(t)\|_2^2  \\
& \qquad \le \l(t) \| B^{(n)}(t)\|_2^2      +\|Q_{n+1}(t) \|_2^2 + \kappa^2 \|P_{n+1}(t)\|_{6/5}^2 + \| S_n(t)\|_2^2  .
  \end{split}
\end{equation}

Note: All time derivatives of $B$ that occur in $Q_n(t)$ are of order less than $n$.
All time derivatives of $A$ in the right side are of order less than $n+1$.
\end{lemma}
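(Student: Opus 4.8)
The plan is to bound the right-hand side of the integral identity \eqref{A21} from Lemma \ref{lemA2}, which reads
\[
\frac{d}{dt}\|B^{(n)}(t)\|_2^2 + \|A^{(n+1)}(t)\|_2^2 = -\|d_A^* B^{(n)}(t)\|_2^2 + \|Q_{n+1}(t)\|_2^2 + 2(P_{n+1}(t), B^{(n)}(t)).
\]
The term $\|Q_{n+1}(t)\|_2^2$ is kept as is; it only involves time derivatives of $A$ of order less than $n+1$ and of $B$ of order less than $n$. The real work is the cross term $2(P_{n+1}(t), B^{(n)}(t))$, which I will estimate so as to produce a piece proportional to $\kappa^{-2}\|B^{(n)}(t)\|_6^2$ that can be absorbed by the negative term $-\|d_A^* B^{(n)}(t)\|_2^2$ after invoking the Gaffney-Friedrichs-Sobolev inequality \eqref{di35c} (with $S_n$ appearing naturally on the right).

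First I would write $|2(P_{n+1}(t), B^{(n)}(t))| \le 2\|P_{n+1}(t)\|_{6/5}\,\|B^{(n)}(t)\|_6$, using the duality of $L^{6/5}$ and $L^6$. Note that $P_{n+1}(t) = \sum_{i=1}^{n} c_{(n+1)i}[A^{(i)}(t)\wedge A^{(n+1-i)}(t)]$ only involves $A$-derivatives of order at most $n$, which is at most $n$ — consistent with the claimed Note — and in particular does not involve $B^{(n)}$ or $A^{(n+1)}$. Then I apply Young's inequality $2ab \le \kappa^2 a^2 + \kappa^{-2} b^2$ with $a = \|P_{n+1}(t)\|_{6/5}$ and $b = \|B^{(n)}(t)\|_6$, giving $|2(P_{n+1}, B^{(n)})| \le \kappa^2\|P_{n+1}\|_{6/5}^2 + \kappa^{-2}\|B^{(n)}\|_6^2$. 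The first summand is exactly the $\kappa^2\|P_{n+1}(t)\|_{6/5}^2$ term in \eqref{di40}.

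Next, for the $\kappa^{-2}\|B^{(n)}\|_6^2$ piece I invoke the Gaffney-Friedrichs-Sobolev inequality \eqref{di35c} of Lemma \ref{lemGFS}, which gives $\kappa^{-2}\|B^{(n)}(t)\|_6^2 \le \|S_n(t)\|_2^2 + \|d_A^* B^{(n)}(t)\|_2^2 + \lambda(t)\|B^{(n)}(t)\|_2^2$. The applicability of \eqref{di35c} requires that $B^{(n)}(t)$ satisfy the correct boundary conditions when $M\ne\R^3$, which is precisely what Lemma \ref{LemBdy3} supplies in both the Dirichlet and Neumann cases. Substituting this into \eqref{A21}, the term $\|d_A^* B^{(n)}(t)\|_2^2$ coming from GFS cancels exactly against $-\|d_A^* B^{(n)}(t)\|_2^2$, leaving
\[
\frac{d}{dt}\|B^{(n)}(t)\|_2^2 + \|A^{(n+1)}(t)\|_2^2 \le \lambda(t)\|B^{(n)}(t)\|_2^2 + \|Q_{n+1}(t)\|_2^2 + \kappa^2\|P_{n+1}(t)\|_{6/5}^2 + \|S_n(t)\|_2^2,
\]
which is \eqref{di40}. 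The case $n=0$ needs a brief separate check: there $P_1 = 0$ and $Q_1 = 0$, and $S_0 = 0$, so the cross term vanishes and \eqref{di40} reduces (after GFS with $n=0$, where $d_A B = 0$ by Bianchi so $S_0 = 0$) to $\frac{d}{dt}\|B(t)\|_2^2 + \|A'(t)\|_2^2 \le \lambda(t)\|B(t)\|_2^2$, consistent with the known energy identity; everything goes through since all the auxiliary polynomials vanish.

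I do not anticipate a genuine obstacle here — the lemma is essentially a bookkeeping application of duality, Young's inequality, and the already-established GFS inequality, with the cancellation of the $\|d_A^* B^{(n)}\|_2^2$ term being the one point that has to line up. The only care needed is (i) tracking that $P_{n+1}$ lives in $L^{6/5}$ — which follows because each summand $[A^{(i)}\wedge A^{(n+1-i)}]$ is a product of two forms each controlled in $L^2$ and one of them even in higher $L^p$, but for \eqref{di40} itself we only need the $L^{6/5}$ norm to appear on the right, deferring its actual estimation to later lemmas — and (ii) confirming the boundary conditions for \eqref{di35c}, handled by Lemma \ref{LemBdy3}.
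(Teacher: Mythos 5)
Your proposal is correct and follows essentially the same route as the paper's proof: bound the cross term by $2\|P_{n+1}\|_{6/5}\|B^{(n)}\|_6$, apply Young's inequality with weights $\kappa^{2}$ and $\kappa^{-2}$, invoke the Gaffney-Friedrichs-Sobolev inequality \eqref{di35c} for $\kappa^{-2}\|B^{(n)}\|_6^2$, and cancel the resulting $\|d_A^* B^{(n)}\|_2^2$ against the negative term in \eqref{A21}. The extra checks you include (boundary conditions via Lemma \ref{LemBdy3} and the $n=0$ case) are consistent with the paper and harmless.
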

\begin{proof}  We  need to bound the right hand side of \eref{A21}. We have
\begin{align*}
      2|(P_{n+1}, B^{(n)})| &\le 2  \|P_{n+1}\|_{6/5} \| B^{(n)}\|_6  \\
      &\le \kappa^2   \|P_{n+1}\|_{6/5}^2 + \kappa^{-2}  \| B^{(n)}\|_6^2 \\
      &\le  \kappa^2   \|P_{n+1}\|_{6/5}^2 + \| S_n\|_2^2  + \| d_A^* B^{(n)}\|_2^2
 + \lambda(t) \|B^{(n)}\|_2^2
      \end{align*}
      by virtue of \eref{di35c}.
      Therefore
\[
 -\|d_A^*   B^{(n)}\|_2^2 +  2|(P_{n+1}, B^{(n)})| \le   \kappa^2   \|P_{n+1}\|_{6/5}^2 + \| S_n\|_2^2
 + \lambda(t) \|B^{(n)}\|_2^2
\]
 This proves \eref{di40}.
 \end{proof}

\begin{remark} {\rm In case $n=0$ the inequality \eref{di40} gives
\begin{align}
\frac{d}{dt} \| B(t)\|_2^2 + \| A'(t)\|_2^2 \le \lambda(t) \| B(t)\|_2^2
\end{align}
since $Q_1 = P_1 = S_0 = 0$ by Proposition \ref{propA1}. But the identity \eref{A21} shows that
$ \frac{d}{dt} \| B(t)\|_2^2 + 2\| A'(t)\|_2^2  = 0$.
There is a loss of information, therefore, in \eref{di40}, which we allow in order to get a simple
inequality for all $ n \ge 0$.
}
\end{remark}

\bigskip

 Under the assumption of finite action we will be able to use the preceding differential inequalities
  to obtain integral estimates in our main result, Theorem \ref{MainTh}.
 Proposition \ref{propdi6}
 below will be critical in this transition.

\begin{notation}\label{notdi5}{\rm  For a smooth solution $A(\cdot)$ on $(0,T)$
to the Yang-Mills heat equation \eref{I2}
  that has finite action let $c_n$ be the
 constant appearing in \eref{di30}  and define
 \begin{align}
 \psi(t)&=  \lambda_M\, t + \gamma\int_0^t \|B(\sigma)\|_2^4 d\sigma\ \ \ \text{and}  \label{di50}\\
 \psi_n(t) &=  \lambda_M\, t + (\gamma + c_n)\int_0^t \|B(\sigma)\|_2^4 d\sigma .   \label{di51}
\end{align}
Lemma 5.5 will show that $\int_0^t \|B(\sigma)\|_2^4 d\sigma < \infty$ when $A(\cdot)$ has finite action.
It follows from this that $\psi(t)$ and $\psi_n(t)$  are bounded, differentiable and
 nondecreasing functions on the interval $(0, T)$.
Then, for $0\le s \le t$ the functions
\[
\psi^{t,s} := \psi(t) - \psi(s)\ \  \text{and}\ \ \ \psi_n^{t,s} := \psi_n(t) - \psi_n(s)
\]
are non-negative.
 }
 \end{notation}

\begin{proposition}\label{propdi6}
\begin{align}
\frac{d}{ds}\Big(e^{-\psi_n(s)} \|A^{(n)}(s)\|_2^2\Big)
                     + e^{-\psi_n(s)}\|B^{(n)}(s)\|_2^2 \le  e^{-\psi_n(s)} X_n(s),\ \ \ n\ge 1  \label{di60}
\end{align}
and
\begin{align}
\frac{d}{ds}\Big(e^{-\psi(s)} \|B^{(n)}(s)\|_2^2 \Big)
         + e^{-\psi(s)}\|A^{(n+1)}(s)\|_2^2 \le e^{-\psi(s)} Y_n(s), \ \ \ n \ge 0, \label{di61}
\end{align}
where
\begin{align}
X_n(t) &=  \| P_n(t)\|_2^2 +2\kappa^2
       \|\hat Q_{n+1}(t)\|_{6/5}^2 +  \| R_n(t) \|_2^2 \ , \  n \ge 1 \ \ \  \text{and}        \label{di62}       \\
Y_n(t) &= \|Q_{n+1}(t) \|_2^2 + \kappa^2 \|P_{n+1}(t)\|_{6/5}^2 + \| S_n(t)\|_2^2, \ n \ge 0. \label{di63}
\end{align}
Note that $Y_0(t)= X_1(t) =0$ by virtue of Proposition  \ref{propA1} and the definition \eref{di33} of $\hat Q_2$.
\end{proposition}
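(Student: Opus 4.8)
The plan is to derive \eqref{di60} and \eqref{di61} directly from the differential inequalities of Lemmas \ref{lemdi3} and \ref{lemdi4} by multiplying through by the integrating factors $e^{-\psi_n(s)}$ and $e^{-\psi(s)}$ respectively. First recall that $\psi_n(s) = \lambda_M s + (\gamma+c_n)\int_0^s \|B(\sigma)\|_2^4\,d\sigma$, so that $\psi_n'(s) = \lambda_M + (\gamma+c_n)\|B(s)\|_2^4$; since we are taking $\lambda_M = 1$, this equals $\lambda(s) + c_n\|B(s)\|_2^4$ with $\lambda(s) = 1 + \gamma\|B(s)\|_2^4$ as in \eqref{gaf51t}. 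That is exactly the coefficient of $\|A^{(n)}(s)\|_2^2$ on the right side of \eqref{di30}. Similarly $\psi'(s) = \lambda_M + \gamma\|B(s)\|_2^4 = \lambda(s)$, which is the coefficient of $\|B^{(n)}(s)\|_2^2$ in \eqref{di40}.

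Next I would carry out the elementary product-rule computation. For \eqref{di60}, write
\[
\frac{d}{ds}\Big(e^{-\psi_n(s)}\|A^{(n)}(s)\|_2^2\Big)
 = e^{-\psi_n(s)}\Big(\frac{d}{ds}\|A^{(n)}(s)\|_2^2 - \psi_n'(s)\|A^{(n)}(s)\|_2^2\Big).
\]
By Lemma \ref{lemdi3}, $\frac{d}{ds}\|A^{(n)}(s)\|_2^2 \le -\|B^{(n)}(s)\|_2^2 + (\lambda(s)+c_n\|B(s)\|_2^4)\|A^{(n)}(s)\|_2^2 + X_n(s)$ where $X_n(s)$ is the sum $\|P_n\|_2^2 + 2\kappa^2\|\hat Q_{n+1}\|_{6/5}^2 + \|R_n\|_2^2$ of \eqref{di62}. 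Subtracting $\psi_n'(s)\|A^{(n)}(s)\|_2^2 = (\lambda(s)+c_n\|B(s)\|_2^4)\|A^{(n)}(s)\|_2^2$ exactly cancels the $\|A^{(n)}\|_2^2$ term, leaving
\[
\frac{d}{ds}\|A^{(n)}(s)\|_2^2 - \psi_n'(s)\|A^{(n)}(s)\|_2^2 \le -\|B^{(n)}(s)\|_2^2 + X_n(s),
\]
and multiplying by the positive factor $e^{-\psi_n(s)}$ and rearranging gives \eqref{di60}. The derivation of \eqref{di61} is identical with $\psi$ in place of $\psi_n$: Lemma \ref{lemdi4} gives $\frac{d}{ds}\|B^{(n)}(s)\|_2^2 \le -\|A^{(n+1)}(s)\|_2^2 + \lambda(s)\|B^{(n)}(s)\|_2^2 + Y_n(s)$ with $Y_n$ as in \eqref{di63}, and subtracting $\psi'(s)\|B^{(n)}(s)\|_2^2 = \lambda(s)\|B^{(n)}(s)\|_2^2$ cancels the quadratic term before multiplying by $e^{-\psi(s)}$.

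There is essentially no serious obstacle here: the proposition is a repackaging of the two differential inequalities so that they become amenable to integration against $s$-dependent weights of the form $s^{2n\pm 1/2}$ in the next section, and the only genuine content is the observation that the coefficients $\psi_n'$ and $\psi'$ were defined precisely to match the troublesome linear-in-$\|A^{(n)}\|_2^2$ (resp. $\|B^{(n)}\|_2^2$) terms. The one thing worth a sentence of justification is that $\psi_n$ and $\psi$ are genuinely differentiable, which follows from the finite-action hypothesis together with the boundedness of $\int_0^t\|B(\sigma)\|_2^4\,d\sigma$ asserted in Notation \ref{notdi5} (proved in the forthcoming Lemma 5.5), so that the integrating factor is a well-defined $C^1$ function of $s$ on $(0,T)$. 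Finally, the claim that $Y_0 = X_1 = 0$ is immediate: $Q_1 = P_1 = S_0 = 0$ and $P_2 = 0$ by Proposition \ref{propA1}, $R_1 = 0$ by the same, and $\hat Q_2 = 0$ by \eqref{di33}.
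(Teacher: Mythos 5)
Your argument is exactly the paper's: identify $\psi_n'=\lambda+c_n\|B\|_2^4$ and $\psi'=\lambda$ as the coefficients of the linear terms in \eqref{di30} and \eqref{di40}, and absorb them via the integrating factors $e^{-\psi_n}$, $e^{-\psi}$. One small slip in your final remark: $P_2(t)=c_{21}[A'(t)\wedge A'(t)]$ is \emph{not} zero in general (Proposition \ref{propA1} only gives $P_1=Q_1=R_1=R_2=0$), but this does not matter since $P_2$ appears in neither $X_1$ nor $Y_0$, and the facts you actually need ($P_1=Q_1=S_0=R_1=0$ and $\hat Q_2=0$) are correct.
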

\begin{proof}
Since $\psi'(s) = \lambda(s) $ and $\psi_n'(s) = \lambda(s) + c_n \|B(s)\|_2^4$,
the inequality \eref{di30} can be written as
\[
\frac{d}{ds}\|A^{(n)}(s)\|_2^2  -\psi_n'(s) \|A^{(n)}(s)\|_2^2 + \|B^{(n)}(s)\|_2^2  \le X_n(s).
\]
This is equivalent to \eref{di60}, as one can see by differentiating the product and then multiplying
by $e^{\psi_n(s)}$. The inequality  \eref{di61} follows from \eref{di40} similarly.
\end{proof}

\section{Initial behavior}      \label{secib}

\subsection{Initial behavior from differential inequalities.}

From the differential inequalities \eref{di60} and \eref{di61} we are going to derive
initial behavior bounds in the form of integral estimates with the help of the following
elementary lemma.
\begin{lemma}\label{lem50}$($Initial behavior from differential inequalities$)$
  Suppose that $f, g, h$ are nonnegative continuous
  functions on $(0, t]$ and that $f$ is differentiable. Suppose also that
 \beq
 (d/ds) f(s) + g(s) \le h(s)     ,\ \ \ 0<s \le t.                                         \label{vs90}
 \eeq
 Let $- 1 < b < \infty  $ and assume that
 \begin{align}
 \int_0^t s^{b} f(s) ds < \infty. \ \          \label{vs90f}
 \end{align}
 Then
 \beq
 t^{1+b} f(t) +\int_0^t s^{(1+b)}  g(s)ds \le \int_0^t s^{(1+b)}  h(s) ds
                   +(1+b) \int_0^t s^{b} f(s)ds.                                               \label{vs91}
 \eeq
 If equality holds in \eref{vs90} then equality holds in \eref{vs91}.
 \end{lemma}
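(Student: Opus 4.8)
The plan is to prove \eref{vs91} by multiplying the differential inequality \eref{vs90} by the weight $s^{1+b}$ and integrating, the only subtlety being the behavior of the boundary term at $s=0$. First I would write, for $0 < \varepsilon < t$, the exact identity obtained from integrating $(d/ds)\big(s^{1+b} f(s)\big) = (1+b) s^{b} f(s) + s^{1+b} f'(s)$ over $[\varepsilon, t]$, namely
\[
t^{1+b} f(t) - \varepsilon^{1+b} f(\varepsilon) = (1+b)\int_\varepsilon^t s^{b} f(s)\, ds + \int_\varepsilon^t s^{1+b} f'(s)\, ds .
\]
Using \eref{vs90} in the form $f'(s) \le h(s) - g(s)$ and the nonnegativity of $s^{1+b}$, this gives
\[
t^{1+b} f(t) + \int_\varepsilon^t s^{1+b} g(s)\, ds \le \varepsilon^{1+b} f(\varepsilon) + (1+b)\int_\varepsilon^t s^{b} f(s)\, ds + \int_\varepsilon^t s^{1+b} h(s)\, ds .
\]
If equality holds in \eref{vs90} then all inequalities here are equalities.

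The main obstacle — really the only nontrivial point — is to show that $\varepsilon^{1+b} f(\varepsilon) \to 0$ along a suitable sequence $\varepsilon \downarrow 0$. This is where hypothesis \eref{vs90f} enters: since $\int_0^t s^{b} f(s)\, ds < \infty$ and $s^{b} f(s) \ge 0$, the function $s \mapsto s^{b} f(s)$ cannot stay bounded below by a positive multiple of $1/s$ near $0$, so $\liminf_{s\downarrow 0} s \cdot s^{b} f(s) = \liminf_{s\downarrow 0} s^{1+b} f(s) = 0$. Concretely, if $s^{1+b} f(s) \ge \delta > 0$ for all small $s$, then $s^{b} f(s) \ge \delta/s$ near $0$, contradicting integrability; hence there is a sequence $\varepsilon_k \downarrow 0$ with $\varepsilon_k^{1+b} f(\varepsilon_k) \to 0$.

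Finally I would pass to the limit along $\varepsilon_k$. The terms $\int_{\varepsilon_k}^t s^{1+b} g(s)\, ds$, $\int_{\varepsilon_k}^t s^{b} f(s)\, ds$ and $\int_{\varepsilon_k}^t s^{1+b} h(s)\, ds$ converge by monotone convergence to the corresponding integrals over $[0,t]$ (each integrand is nonnegative; the $f$-integral is finite by \eref{vs90f}, and the $g$- and $h$-integrals are allowed to be $+\infty$, in which case \eref{vs91} holds trivially — though in fact the resulting inequality forces them to be finite). This yields
\[
t^{1+b} f(t) + \int_0^t s^{1+b} g(s)\, ds \le (1+b)\int_0^t s^{b} f(s)\, ds + \int_0^t s^{1+b} h(s)\, ds ,
\]
which is \eref{vs91}, and the equality statement follows since equality was preserved at every step. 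A small remark worth including: the exponent condition $b > -1$ is exactly what makes $s^{b}$ locally integrable at $0$ and makes the hypothesis \eref{vs90f} meaningful; the condition $b < \infty$ is vacuous and is presumably stated only for emphasis.
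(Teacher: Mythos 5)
Your proof is correct and takes essentially the same route as the paper: multiply \eref{vs90} by the weight $s^{1+b}$, integrate, and control the boundary term at $s=0$. The one point the paper does not spell out --- showing the boundary contribution vanishes without assuming $\lim_{s\downarrow 0}s^{1+b}f(s)=0$, which it delegates to \cite[Lemma 4.8]{G70} --- is precisely the $\liminf$/subsequence argument you supply (integrability of $s^{b}f(s)$ forces $\liminf_{s\downarrow 0}s^{1+b}f(s)=0$, then pass to the limit along that sequence by monotone convergence), so your write-up is in fact more self-contained than the paper's.
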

 \begin{proof}
 Assumption \eref{vs90} implies that
 \[
 (d/ds) \left( s^{(1+b)} f(s) \right)  + s^{(1+b)} g(s) \le   s^{(1+b)} h(s)  + (1+b) s^{(1+b)} f(s)
 \]
 for all $ 0<s \le t.$ The result follows after integrating both sides over the interval $(0,t]$
 if one knows that $\lim_{t\downarrow 0} t^{(1+b)} f(t) =0$. See \cite[Lemma 4.8]{G70}
 for a proof without this assumption.
 \end{proof}

\begin{corollary} \label{corl50} Define $X_n(t)$ and $Y_n(t)$ by \eref{di62} and \eref{di63} respectively. The inequalities
\begin{align}
t^{2n -\frac{1}{2}} &\|A^{(n)}\|_2^2
        + \int_0^t
        s^{2n -\frac 12} \|B^{(n)}(s) \|_2^2 ds         \label{Ak2}\\
      &\le   \Big\{(2n-\frac 12)  \int_0^t
      s^{2n -\frac 32} \|A^{(n)}(s)\|_2^2 ds
      + \int_0^t
      s^{2n -\frac 12} X_n(s) ds\Big\}e^{\psi_n(t)}, \ \ \ n \ge 1 \notag
  \end{align}

  \begin{align}
 t^{2n+\frac 12} &\| B^{(n)} ( t)\|_2^2
         +  \int_0^t
         s^{2n+\frac 12} \| A^{(n+1)} (s) \|_2^2 \,ds    \label{Bk2}\\
 & \le  \Big\{(2n +\frac12) \int_0^t
 s^{2n -\frac12} \|B^{(n)}(s) \|_2^2 ds
         + \int_0^t
         s^{2n +\frac 12} Y_n(s) ds\Big\} e^{\psi(t)}, \ \ \ n \ge 0       \notag
  \end{align}
  hold whenever their right sides are finite,
  for $\psi(t), \psi_n(t)$ as in \eref{di50} and \eref{di51} respectively.
  \end{corollary}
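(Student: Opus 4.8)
The plan is to derive \eqref{Ak2} and \eqref{Bk2} directly from Proposition \ref{propdi6} by applying Lemma \ref{lem50} with the appropriate choices. For \eqref{Ak2}, fix $n \ge 1$ and set $f(s) = e^{-\psi_n(s)}\|A^{(n)}(s)\|_2^2$, $g(s) = e^{-\psi_n(s)}\|B^{(n)}(s)\|_2^2$, and $h(s) = e^{-\psi_n(s)}X_n(s)$. These are nonnegative and continuous on $(0,t]$, $f$ is differentiable, and the differential inequality \eqref{di60} is precisely \eqref{vs90}. We take $b = 2n - \tfrac{3}{2}$, so that $1 + b = 2n - \tfrac12 > 0$ (using $n \ge 1$), which is in the admissible range $-1 < b < \infty$. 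Similarly, for \eqref{Bk2}, fix $n \ge 0$ and set $f(s) = e^{-\psi(s)}\|B^{(n)}(s)\|_2^2$, $g(s) = e^{-\psi(s)}\|A^{(n+1)}(s)\|_2^2$, $h(s) = e^{-\psi(s)}Y_n(s)$, invoking \eqref{di61}; here we take $b = 2n - \tfrac12$, so $1+b = 2n + \tfrac12 > 0$.

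Next I would verify the integrability hypothesis \eqref{vs90f}. Since $\psi_n$ (resp. $\psi$) is bounded on $(0,T)$ by Notation \ref{notdi5}, the exponential factors $e^{-\psi_n(s)}$ are bounded above and below by positive constants on $(0,t]$; hence $\int_0^t s^b f(s)\,ds < \infty$ is equivalent to $\int_0^t s^{2n-3/2}\|A^{(n)}(s)\|_2^2\,ds < \infty$ (resp. $\int_0^t s^{2n-1/2}\|B^{(n)}(s)\|_2^2\,ds < \infty$). The statement of the corollary only asserts the conclusion "whenever their right sides are finite," and the right side of \eqref{Ak2} contains exactly the term $(2n - \tfrac12)\int_0^t s^{2n-3/2}\|A^{(n)}(s)\|_2^2\,ds$; so finiteness of the right side forces \eqref{vs90f}. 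This makes the application of Lemma \ref{lem50} legitimate under the stated proviso — there is no circularity, since in the induction of the next section these right-hand sides will be controlled by the previous step.

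With the lemma in hand, \eqref{vs91} reads, for the first case,
\[
t^{2n-\frac12} e^{-\psi_n(t)}\|A^{(n)}(t)\|_2^2 + \int_0^t s^{2n-\frac12} e^{-\psi_n(s)}\|B^{(n)}(s)\|_2^2\,ds
\le \int_0^t s^{2n-\frac12} e^{-\psi_n(s)} X_n(s)\,ds + \Big(2n - \tfrac12\Big)\int_0^t s^{2n-\frac32} e^{-\psi_n(s)}\|A^{(n)}(s)\|_2^2\,ds .
\]
To obtain \eqref{Ak2} one bounds $e^{-\psi_n(s)} \le 1$ on the left side (so the left side of \eqref{Ak2} is $\le$ the left side here after multiplying through by $e^{\psi_n(t)}$, using monotonicity $\psi_n(s) \le \psi_n(t)$) and $e^{-\psi_n(s)} \ge e^{-\psi_n(t)}$ on the right side inside the integrals, i.e. $\int_0^t (\cdots) e^{-\psi_n(s)}\,ds \le \int_0^t (\cdots)\,ds$; then multiply both sides by $e^{\psi_n(t)}$. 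The case \eqref{Bk2} is identical with $\psi$ in place of $\psi_n$ and the shifted exponent.

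The only mildly delicate point — and the one I would flag as the main obstacle — is the bookkeeping with the exponential weights: one must be careful that the inequality goes the right way when the weight $e^{-\psi_n(s)}$ is moved in and out of the integrals, which works precisely because $\psi_n$ is \emph{nondecreasing} (Notation \ref{notdi5}), so $0 < e^{-\psi_n(t)} \le e^{-\psi_n(s)} \le e^{-\psi_n(0)}$ for $0 \le s \le t$, and because all integrands are nonnegative. Apart from this, the argument is a routine transcription of Lemma \ref{lem50}; no further estimates are needed here, since the estimation of $X_n$ and $Y_n$ themselves is deferred to the inductive argument of Section \ref{secib}.
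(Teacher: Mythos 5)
Your proposal is correct and follows essentially the same route as the paper: apply Lemma \ref{lem50} to the weighted differential inequalities \eqref{di60} and \eqref{di61} with exactly the choices of $f,g,h$ and $b=2n-\tfrac32$ (resp. $b=2n-\tfrac12$) that the paper uses, then remove the exponential weights via $1\le e^{\psi_n(t)-\psi_n(s)}\le e^{\psi_n(t)}$. Your explicit check that finiteness of the right-hand side supplies the hypothesis \eqref{vs90f} is a sound (and slightly more careful) reading of the "whenever their right sides are finite" proviso.
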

 \begin{proof} Starting with the differential inequality  \eref{di60},
   we can apply Lemma \ref{lem50} with $ f(s) = e^{-\psi_n(s)} \|A^{(n)}(s)\|_2^2$,
    $g(s) = e^{-\psi_n(s)} \|B^{(n)}(s)\|_2^2$, $h(s) = e^{-\psi_n(s)} X_n(s)$ and $b = 2n -\frac 32$.
    Upon multiplying the resulting inequality \eref{vs91} by $e^{\psi_n(t)}$ we find
    \begin{align}
t^{2n -\frac{1}{2}} &\|A^{(n)}\|_2^2
        + \int_0^t e^{\psi_n^{t,s}}
        s^{2n -\frac 12} \|B^{(n)}(s) \|_2^2 ds         \label{Ak2b}\\
      &\le (2n-\frac 12)  \int_0^t e^{\psi_n^{t,s}}
      s^{2n -\frac 32} \|A^{(n)}(s)\|_2^2 ds
      +  \int_0^t e^{\psi_n^{t,s}}
      s^{2n -\frac 12} X_n(s) ds, \ \ \ n \ge 1. \notag
  \end{align}
  Since $1 \le e^{\psi_n^{t,s}} \le e^{\psi_n(t)}$,  the inequality \eref{Ak2b} continues to hold if we drop the
   factor $e^{\psi_n^{t,s}} $   from the integrand on the left and replace it in the
    integrands on the right by  $e^{\psi_n(t)} $.
   This yields \eref{Ak2}.

    The same method shows that \eref{Bk2} follows from \eref{di61} if, in Lemma \ref{lem50},
    one chooses  $f(s)=  e^{-\psi(s)} \|B^{(n)}(s)\|_2^2$, $g(s) =  e^{-\psi(s)} \|A^{(n+1)}(s)\|_2^2$,
    $h(s) = e^{-\psi(s)}Y_n(s)$ and $b = 2n-\frac 12.$
\end{proof}

\bigskip
The remainder of the paper will be devoted to proving that the right hand
 sides of the inequalities \eref{Ak2} and \eref{Bk2} are finite. This will be done by induction
 on $n$. But the induction hypothesis will include two other inequalities besides these two.

\subsection{Initial behavior of the curvature and $A'$.}

We review  a few well known apriori bounds for solutions of the Yang-Mills heat equation
in the presence of finite action.
\begin{lemma}  \label{lemFE}  Let $A(t)$ be a smooth solution to the Yang-Mills
heat equation  over  $(0, T)\times M$, satisfying either \eref{N2} or \eref{D1}
if $M\ne \R^3$.
Then $\| B(t)\|_2$ is nonincreasing on $(0,T)$.
Moreover, if $\|B_0\|_2 <\infty$ then
\beq
\| B(t)\|_2  \le \|B_0\|_2 \  \label{lemFEe1}
\eeq
for $0\le t <T.$
\end{lemma}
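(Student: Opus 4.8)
The statement is the standard energy identity for the Yang-Mills heat flow, so the plan is to compute $\frac{d}{dt}\|B(t)\|_2^2$ directly and show it is $\le 0$. The cleanest route is to invoke the integral identity \eqref{A21} of Lemma \ref{lemA2} in the case $n=0$: since $P_1 = Q_1 = 0$ by Proposition \ref{propA1}, that identity collapses to
\[
\frac{d}{dt}\|B(t)\|_2^2 + \|A'(t)\|_2^2 = -\|d_A^* B(t)\|_2^2 .
\]
Because $A'(t) = -d_A^* B(t)$ by the Yang-Mills heat equation, this is in fact $\frac{d}{dt}\|B(t)\|_2^2 = -2\|A'(t)\|_2^2 \le 0$, giving monotonicity on $(0,T)$ immediately. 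Alternatively, one can redo the short computation from scratch: $\frac{d}{dt}\|B(t)\|_2^2 = 2(B'(t),B(t)) = 2(d_A A'(t), B(t)) = 2(A'(t), d_A^* B(t)) = -2\|d_A^* B(t)\|_2^2$, using $d_A A' = B'$ (identity \eqref{A0} from \cite{CG1}) and the integration by parts, which is legitimate under either boundary condition by Lemma \ref{LemBdy3} exactly as justified in the proof of Lemma \ref{lemA2}.

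For the bound \eqref{lemFEe1}: monotonicity gives $\|B(t)\|_2 \le \|B(s)\|_2$ for $0 < s \le t < T$, so it remains to let $s \downarrow 0$ and use that $\|B(s)\|_2 \to \|B_0\|_2$ (or at least $\liminf_{s\downarrow 0}\|B(s)\|_2 \le \|B_0\|_2$). Here I would appeal to the continuity properties of the solution near $t=0$ recorded in Theorem \ref{thmFA} — $A(t)$ is continuous into $H_{1/2}$ on $[0,\infty)$ with $A(0) = A_0$, and the curvature depends continuously on the connection in the relevant topology — or simply cite the corresponding statement in \cite{G70} from which this lemma is drawn. Since the hypothesis $\|B_0\|_2 < \infty$ is exactly what makes the right side meaningful, and $\|B(s)\|_2$ is finite and nonincreasing for $s>0$, taking the limit is routine.

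The only genuine subtlety is the justification of the integration by parts defining the energy identity when $M \ne \R^3$: one must know $B(t)$ satisfies the boundary condition matching the domain of $d_A^*$ (Neumann case) or $A'(t)$ lies in $\mathrm{Dom}(d_A)$ (Dirichlet case). But this is precisely the content of Lemma \ref{LemBdy3} with $n=0$ and $n=1$, so no new work is needed. I expect essentially no obstacle here — the lemma is a warm-up recollection of well-known facts, and the entire argument is two lines of computation plus one limiting argument; the main ``risk'' is merely making sure to cite the right earlier results (\eqref{A21}, Lemma \ref{LemBdy3}, Theorem \ref{thmFA}, and \cite{G70}) rather than reproving them.
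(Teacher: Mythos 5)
Your proposal is correct and follows essentially the same route as the paper: the paper also specializes identity \eqref{A21} to $n=0$, uses $P_1=Q_1=0$ and $d_A^*B=-A'$ to get $(d/dt)\|B(t)\|_2^2=-2\|A'(t)\|_2^2\le 0$, and then deduces \eqref{lemFEe1} from continuity of $\|B(t)\|_2$ at $t=0$ in the finite-energy case. Your extra remarks on justifying the integration by parts via Lemma \ref{LemBdy3} are consistent with what the paper has already established in Lemma \ref{lemA2}.
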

\begin{proof}
Identity \eref{A21} for $n=0$ gives $(d/dt) \| B(t)\|_2^2=-2 \| A'(t)\|_2^2 \leq 0$ since $P_1 = Q_1 =0$.
Therefore $\| B(t)\|_2^2$ is non-increasing. \eref{lemFEe1} follows from the continuity of $\|B(t)\|_2$
at $t =0 $ in this finite energy case.
\end{proof}

\begin{remark}{\rm  If $A(\cdot)$ has finite action  then $\rho(t)$, defined in \eref{dfa},  is finite
 for small $t$ and therefore   for all $t$, since the integrand is decreasing
   by Lemma \ref{lemFE}. Further, if $A(\cdot)$ is a solution
  with finite energy, i.e. $\|B_0\|_2 < \infty$, then   \eref{lemFEe1} shows that $A$ has finite action.
  }
  \end{remark}

\begin{proposition}\label{FA_Prop1}   Let $A(t)$ be a smooth solution to the Yang-Mills heat
 equation over $(0,T)\times M$, satisfying either \eref{N2} or \eref{D1}
 if $M \ne \R^3$.
  If $A(\cdot)$ has finite action then
\beq
t^{1/2} \| B(t) \|_2^2 + 2 \int_0^t s^{1/2}  \| A'(s) \|_2^2\, ds
= \rho(t),  \label{FA0}
\eeq
for any $t\in [0,T)$. In particular \eref{Bn1} holds for $n =0$.
\end{proposition}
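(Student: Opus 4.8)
The plan is to apply the $n=0$ cases of the integral identity \eqref{A21} and the elementary integration lemma (Lemma \ref{lem50}) with the weight $s^{1/2}$. Recall from Lemma \ref{lemFE} that identity \eqref{A21} at $n=0$ reads $\frac{d}{dt}\|B(t)\|_2^2 + 2\|A'(t)\|_2^2 = 0$, because $P_1 = Q_1 = 0$ by Proposition \ref{propA1} (so that the right side of \eqref{A21} vanishes entirely, noting $d_A^* B = -A'$ means $\|d_A^* B\|_2^2 = \|A'\|_2^2$ and the whole right side collapses to $-\|A'\|_2^2$, giving the stated identity after moving terms). Thus we have an \emph{equality} of the form $(d/ds) f(s) + g(s) = h(s)$ with $f(s) = \|B(s)\|_2^2$, $g(s) = 2\|A'(s)\|_2^2$ and $h(s) = 0$.

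Next I would invoke Lemma \ref{lem50} with $b = -1/2$, so that $1+b = 1/2$ and $(1+b)\int_0^t s^{b} f(s)\,ds = \frac12 \int_0^t s^{-1/2}\|B(s)\|_2^2\,ds = \rho(t)$ by the definition \eqref{dfa} of the action. To apply the lemma I must check the finiteness hypothesis \eqref{vs90f}, namely $\int_0^t s^{-1/2}\|B(s)\|_2^2\,ds < \infty$; but this is exactly finiteness of $\rho(t)$, which holds by the standing assumption that $A(\cdot)$ has finite action (and holds for all $t \in [0,T)$ since $\|B(s)\|_2^2$ is nonincreasing by Lemma \ref{lemFE}). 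Since equality holds in \eqref{vs90}, the lemma yields equality in \eqref{vs91}, which with $h \equiv 0$ gives exactly
\[
t^{1/2}\|B(t)\|_2^2 + 2\int_0^t s^{1/2}\|A'(s)\|_2^2\,ds = \rho(t),
\]
i.e.\ \eqref{FA0}. Finally, \eqref{Bn1} at $n=0$ reads $t^{1/2}\|B(t)\|_2^2 + \int_0^t s^{1/2}\|A'(s)\|_2^2\,ds \le C_{03}(t)$; since the left side of \eqref{FA0} dominates this (the coefficient $2$ versus $1$ on the integral term) and equals $\rho(t)$, we may take $C_{03}(t) = \rho(t) = \hat C_{03}(t,\rho(t))$ with $\hat C_{03}(a,b) = b$, which is continuous, nondecreasing, vanishes at the origin, and is independent of the solution — hence a standard dominating function.

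I do not expect a serious obstacle here: the proof is essentially a direct substitution into Lemma \ref{lem50}. The only point requiring a little care is the boundary term $\lim_{t\downarrow 0} t^{1/2}\|B(t)\|_2^2 = 0$, which is needed for the clean derivation of \eqref{vs91} as an equality; this follows because $\rho(t) \to 0$ as $t \downarrow 0$ (the integrand $s^{-1/2}\|B(s)\|_2^2$ is integrable near $0$) forces $\liminf_{t\downarrow 0} t^{1/2}\|B(t)\|_2^2 = 0$, and then monotonicity of $\|B(t)\|_2^2$ upgrades this to a genuine limit — alternatively one simply cites \cite[Lemma 4.8]{G70} as the lemma statement permits, which handles the argument without this vanishing assumption. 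Either route closes the proof.
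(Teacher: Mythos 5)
Your proposal is correct and follows essentially the same route as the paper: identity \eqref{A21} at $n=0$ collapses to $(d/ds)\|B(s)\|_2^2+2\|A'(s)\|_2^2=0$, and Lemma \ref{lem50} with $f=\|B\|_2^2$, $g=2\|A'\|_2^2$, $h=0$, $b=-1/2$ then yields \eqref{FA0} as an equality, with $C_{03}(t)=\rho(t)$. Your extra remark on the vanishing of the boundary term $t^{1/2}\|B(t)\|_2^2$ is a fine supplement, though the paper simply defers this to \cite[Lemma 4.8]{G70} as you also note.
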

\begin{proof}
For $n=0$ identity \eref{A21} becomes $(d/ds) \| B(s)\|_2^2 + 2 \| A'(s) \|_2^2 =0$.
We can apply Lemma \ref{lem50}, taking $f(s) = \| B(s)\|_2^2$, $g(s) = 2\|A'(s)\|_2^2$,
 $h(s) =0$ and $b =  -1/2$.
 Then \eref{vs91} asserts that
 \[
 t^{1/2} \| B(t) \|_2^2 + 2 \int_0^t s^{1/2}  \| A'(s) \|_2^2\, ds = (1/2)\int_0^t s^{-1/2} \|B(s)\|_2^2ds,
  \]
 which is \eref{FA0}, in view of the definition \eref{dfa} of $\rho(t)$.
  The hypothesis \eref{vs90f} is satisfied by the assumption of finite action. This proves that \eref{Bn1}
  holds for $n =0$. We can take $C_{03}(t) = \rho(t)$.
 \end{proof}

\begin{lemma}\label{lemFA1}   Let $A(t)$ be a smooth solution to the Yang-Mills
heat equation over  $(0,T)\times M$,
satisfying either \eref{N2} or \eref{D1}
 if $M\ne \R^3$.
If $A(\cdot)$ has finite action then
\begin{align}
t\|B(t)\|_2^4 &\le \rho(t)^2\qquad \text{and}            \label{FA1t}\\
\int_0^t   \| B(s) \|_2^4 \,ds &\le 2\, \rho(t)^2                    \label{FA1}.
\end{align}
Moreover, for $0 < t \le T$  there holds
\begin{align}
t\,\lambda(t)  &\leq \lambda_M t + \gamma \, \rho(t)^2,  \ \  \label{FA2t} \\
\psi(t)
                           &\le  \lambda_M t + 2 \gamma \, \rho(t)^2  \qquad  \text{and} \ \       \label{FA2}\\
\psi_n(t)  & \le  \lambda_M t + 2 (\gamma +c_n) \, \rho(t)^2    \label{FA2n}
\end{align}
where $\lambda(t)$ is defined in \eref{gaf51t} and $\psi(t)$ and $\psi_n(t)$ are defined in
\eref{di50} and \eref{di51} respectively.
In particular these three functions are non-decreasing and are bounded by standard dominating functions.
\end{lemma}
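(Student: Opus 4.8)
The plan is to prove the six assertions \eref{FA1t}--\eref{FA2n} in sequence, deriving each from the previous ones together with the finite-action hypothesis and Lemma \ref{lemFE}. First I would establish \eref{FA1t}: by Lemma \ref{lemFE} the function $s\mapsto \|B(s)\|_2^2$ is nonincreasing, so for $0<s\le t$ we have $\|B(t)\|_2^2 \le \|B(s)\|_2^2$, hence $\|B(t)\|_2^4 = \|B(t)\|_2^2\cdot\|B(t)\|_2^2 \le \|B(s)\|_2^2\cdot\|B(t)\|_2^2$; multiplying by $s^{-1/2}$ and integrating over $(0,t]$, using $\int_0^t s^{-1/2}\,ds = 2t^{1/2}$, gives $2t^{1/2}\|B(t)\|_2^4 \le \|B(t)\|_2^2 \int_0^t s^{-1/2}\|B(s)\|_2^2\,ds = 2\rho(t)\,\|B(t)\|_2^2$. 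It then remains to bound $t^{1/2}\|B(t)\|_2^2$, which is exactly the content of Proposition \ref{FA_Prop1} (or \eref{FA0}): $t^{1/2}\|B(t)\|_2^2 \le \rho(t)$. Combining, $t\|B(t)\|_2^4 \le \rho(t)^2$, which is \eref{FA1t}.

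Next I would prove \eref{FA1} by a similar monotonicity-and-integration argument: for $0<s\le t$, $\|B(s)\|_2^4 = \|B(s)\|_2^2\cdot\|B(s)\|_2^2 \le \|B(s)\|_2^2\cdot s^{-1/2}\cdot s^{1/2}\|B(s)\|_2^2 \le s^{-1/2}\|B(s)\|_2^2 \cdot \rho(s) \le s^{-1/2}\|B(s)\|_2^2\cdot\rho(t)$, using \eref{FA0} applied at time $s$ and the monotonicity of $\rho$. Integrating over $(0,t]$ yields $\int_0^t \|B(s)\|_2^4\,ds \le \rho(t)\int_0^t s^{-1/2}\|B(s)\|_2^2\,ds = 2\rho(t)^2$, which is \eref{FA1}. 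This also shows $\int_0^t\|B(\sigma)\|_2^4\,d\sigma<\infty$, justifying the remark in Notation \ref{notdi5} that $\psi$ and $\psi_n$ are finite.

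The remaining three inequalities are then immediate. For \eref{FA2t}: by definition \eref{gaf51t}, $\lambda(t) = 1 + \gamma\|B(t)\|_2^4 = \lambda_M + \gamma\|B(t)\|_2^4$ since $\lambda_M = 1$; multiplying by $t$ and applying \eref{FA1t} gives $t\lambda(t) \le \lambda_M t + \gamma\rho(t)^2$. For \eref{FA2}: by \eref{di50}, $\psi(t) = \lambda_M t + \gamma\int_0^t\|B(\sigma)\|_2^4\,d\sigma \le \lambda_M t + 2\gamma\rho(t)^2$ by \eref{FA1}. For \eref{FA2n}: by \eref{di51}, $\psi_n(t) = \lambda_M t + (\gamma+c_n)\int_0^t\|B(\sigma)\|_2^4\,d\sigma \le \lambda_M t + 2(\gamma+c_n)\rho(t)^2$, again by \eref{FA1}. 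Finally, each right-hand side is a continuous, nondecreasing function of $(t,\rho(t))$ vanishing at $(0,0)$ — because $\rho$ is nondecreasing and $\rho(0)=0$ — so each is a standard dominating function, and since $\lambda(t), \psi(t), \psi_n(t)$ are each bounded above by such a function and are themselves nondecreasing (the integrands being nonnegative), the last sentence follows. There is no serious obstacle here; the only point requiring care is correctly invoking Proposition \ref{FA_Prop1}/\eref{FA0} at the variable time $s$ (not just at $t$) in the proof of \eref{FA1}, and recalling throughout that $\lambda_M = 1$ under the convexity assumption on $M$.
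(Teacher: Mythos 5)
Your proof is correct and follows essentially the same route as the paper: both rest on the observation that \eref{FA0} gives $s^{1/2}\|B(s)\|_2^2\le\rho(s)\le\rho(t)$ for $s\le t$, which yields \eref{FA1t} at $s=t$ and yields \eref{FA1} by writing $\|B(s)\|_2^4=\bigl(s^{1/2}\|B(s)\|_2^2\bigr)\bigl(s^{-1/2}\|B(s)\|_2^2\bigr)$ and integrating, after which \eref{FA2t}--\eref{FA2n} are immediate from the definitions. Your derivation of \eref{FA1t} via the monotonicity of $\|B(\cdot)\|_2^2$ is an unnecessary (though valid) detour, since squaring $t^{1/2}\|B(t)\|_2^2\le\rho(t)$ gives it directly.
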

              \begin{proof}
Identity \eref{FA0} implies that $s^{1/2}   \| B(s) \|_2^2\leq \rho(s)\leq \rho(t)$
 for all $s\leq t$ since $\rho(t)$ is nondecreasing. In particular \eref{FA1t} holds. Further,
\begin{equation*}
\begin{split}
\int_0^t   \| B(s) \|_2^4 \,ds &= \int_0^t \left(s^{1/2}   \| B(s) \|_2^2 \right) \; \left(s^{-1/2}   \| B(s) \|_2^2\right)   \,ds \\
 &\leq \rho(t) \, \int_0^t s^{-1/2}   \| B(s) \|_2^2   \,ds = 2\, \rho(t)^2
\end{split}
\end{equation*}
proving \eref{FA1}.
The inequalities \eref{FA2t} - \eref{FA2n} now follow from their  definitions  and from \eref{FA1t} and  \eref{FA1}.
\end{proof}

\section{Proof of the Main Theorem} \label{secpmt}
\begin{remark} {\rm (Strategy.)
We will first prove the theorem under the technical assumption that $A(t)$ is a smooth
 solution to the Yang-Mills equation over $(0,T)\times M$. The proof will proceed by induction on $n$.
    We have already shown that \eref{Bn1} holds for $n =0$ in Proposition \ref{FA_Prop1}.
    We will
    show that all four inequalities
\eref{An1}, \eref{BnL6}, \eref{Bn1}, \eref{AnL6}
 in Theorem \ref{MainTh}  hold for $n =1$.
We will then show that if $k \ge 2$ and all four inequalities hold for
 $1 \le n  <k $ then all four inequalities hold for $n =k$. We will then remove the hypothesis of smoothness.
}
\end{remark}

\subsection{Proof for $n=1$.} \label{secn=1}

\begin{proposition}[Proof of $\A_1$]
\label{ord1prop}
Let $A(t)$ be a smooth solution to the Yang-Mills heat equation over $(0,T)\times M$, satisfying either \eref{N2} or \eref{D1}.
If $A(\cdot)$ has finite action for $0\leq t <T$ then
\beq
t^{3/2}\|A'(t)\|_2^2  + \int_0^t
s^{3/2} \|B'(s) \|_2^2 ds
              \le  C_{11}(t)   \label{eA1}
\eeq
for some standard dominating function $C_{11}$. In particular $\A_1$ holds.
\end{proposition}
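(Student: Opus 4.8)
The strategy is to apply the initial-behavior machinery of Corollary \ref{corl50} with $n=1$, so the task reduces to bounding the right-hand side of \eref{Bk2}, respectively \eref{Ak2}. We want to prove $\A_1$, which is inequality \eref{Ak2} for $n=1$. Setting $n=1$ in \eref{Ak2} we must show that
\[
\Big\{ \tfrac 32 \int_0^t s^{1/2}\|A'(s)\|_2^2\, ds + \int_0^t s^{3/2} X_1(s)\, ds \Big\} e^{\psi_1(t)}
\]
is finite and bounded by a standard dominating function. The first simplification is that $X_1(t)=0$: by Proposition \ref{propA1} we have $P_1=R_1=0$, and $\hat Q_2=0$ by the definition \eref{di33}; this is noted explicitly in Proposition \ref{propdi6}. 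So the only surviving term is $\tfrac 32 \int_0^t s^{1/2}\|A'(s)\|_2^2\, ds$. But this integral has already been computed: Proposition \ref{FA_Prop1} gives $2\int_0^t s^{1/2}\|A'(s)\|_2^2\, ds \le \rho(t)$, hence $\tfrac 32 \int_0^t s^{1/2}\|A'(s)\|_2^2\, ds \le \tfrac 34 \rho(t)$. Finally $e^{\psi_1(t)}$ is bounded by a standard dominating function by \eref{FA2n} of Lemma \ref{lemFA1}, since $\psi_1(t) \le \lambda_M t + 2(\gamma+c_1)\rho(t)^2$.

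Putting these together, $\A_1$ holds with
\[
C_{11}(t) = \tfrac 34\, \rho(t)\, \exp\big(\lambda_M t + 2(\gamma+c_1)\rho(t)^2\big),
\]
which is manifestly of the form $\hat C(t,\rho(t))$ with $\hat C$ continuous, non-decreasing in each variable, vanishing at $(0,0)$, and independent of the solution; i.e. a standard dominating function. One should double-check the finiteness hypothesis needed to invoke Corollary \ref{corl50} (equivalently the hypothesis \eref{vs90f} of Lemma \ref{lem50} with $b=2n-\tfrac32=\tfrac12$): here $f(s)=e^{-\psi_1(s)}\|A'(s)\|_2^2$, so $\int_0^t s^{1/2} f(s)\, ds \le \int_0^t s^{1/2}\|A'(s)\|_2^2\, ds \le \tfrac12\rho(t)<\infty$ by finite action, so the hypothesis is met.

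There is essentially no obstacle here — this base case is deliberately soft, because $X_1\equiv 0$ collapses all the nonlinear error terms. The one point requiring a word of care is the logical order: Corollary \ref{corl50} is stated as holding "whenever the right sides are finite," so one must verify finiteness before quoting the inequality, which is exactly the check via Lemma \ref{lem50}'s hypothesis just described. The only genuine work has been front-loaded into Proposition \ref{FA_Prop1}, which identifies $\int_0^t s^{1/2}\|A'(s)\|_2^2\, ds$ exactly in terms of $\rho(t)$; given that, $\A_1$ is immediate. (The harder cases — $\B_1,\C_1,\D_1$, and then the inductive step for $n\ge 2$ — are where the nonlinear terms $P_n,Q_n,R_n,S_n$ must actually be estimated, but those are separate statements.)
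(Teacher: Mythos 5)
Your proposal is correct and follows essentially the same route as the paper: set $n=1$ in \eref{Ak2}, observe $X_1\equiv 0$, bound $\tfrac32\int_0^t s^{1/2}\|A'(s)\|_2^2\,ds$ by $\tfrac34\rho(t)$ via \eref{FA0}, and control $e^{\psi_1(t)}$ by \eref{FA2n}. Your explicit check of the finiteness hypothesis of Lemma \ref{lem50} is a welcome extra detail that the paper leaves implicit.
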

              \begin{proof} Since $X_1(t) = 0$ the inequality \eref{Ak2} with $n = 1$ shows that
\[
t^{3/2}\|A'(t)\|_2^2  + \int_0^t
     s^{3/2}\|B'(s) \|_2^2 ds
               \le \frac 32 \,  e^{\psi_1(t)}\int_0^t  s^{1/2}\| A'(s)\|_2^2 \, ds\\
                \le  \frac 34 \,   e^{\psi_1(t)}\, \rho(t)
\]
wherein we have used  \eref{FA0} in the last inequality.
The bound \eref{FA2n} shows that the right hand side is bounded by a standard dominating function.
 \end{proof}

\bigskip
We see here that   the integrability of $t^{1/2}\| A'(t)\|_2^2$ in time implies the boundedness
 of $\,t^{3/2}\|A'(t)\|_2^2$ when $A(\cdot)$ is a solution to the Yang-Mills heat equation.
 This reflects a frequently occurring theme.

For the remainder of this section we will assume that $A(t)$ satisfies the assumptions of  Proposition \ref{ord1prop}.

\begin{corollary}[Proof of $\B_1$]  \label{ord1c1}
There exists a standard dominating function   $C_{12}$ such that
\begin{align}
t^{3/2}\|&B(t)\|_6^2 +\int_0^t  s^{3/2} \|A'(s) \|_6^2 ds \le C_{12} (t) \ \     \label{eB1}
\end{align}
for all $0\leq t <T$.
\end{corollary}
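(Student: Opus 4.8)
The goal is to bound $t^{3/2}\|B(t)\|_6^2 + \int_0^t s^{3/2}\|A'(s)\|_6^2\,ds$ by a standard dominating function. The natural tool is the Gaffney-Friedrichs-Sobolev inequality in the simple form \eref{di35b1} for $\|B\|_6^2$ and \eref{di35a1} for $\|A'\|_6^2$, both applied at $n=0$ and $n=1$ respectively. From \eref{di35b1} we get
\[
\kappa^{-2}\|B(t)\|_6^2 \le \|A'(t)\|_2^2 + \lambda(t)\|B(t)\|_2^2,
\]
and from \eref{di35a1} we get $\kappa^{-2}\|A'(s)\|_6^2 \le \|B'(s)\|_2^2 + \lambda(s)\|A'(s)\|_2^2$. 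So the plan is to multiply the first by $t^{3/2}$, the second by $s^{3/2}$ and integrate, then recognize every resulting term on the right as something already controlled.

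\textbf{Controlling each term.} For the pointwise-in-$t$ part: $t^{3/2}\|A'(t)\|_2^2$ is bounded by $C_{11}(t)$ via $\A_1$ (Proposition \ref{ord1prop}), and $t^{3/2}\lambda(t)\|B(t)\|_2^2 = t\,\lambda(t)\cdot t^{1/2}\|B(t)\|_2^2 \le (\lambda_M t + \gamma\rho(t)^2)\cdot\rho(t)$ using \eref{FA2t} and \eref{FA0}; both factors are standard dominating functions, so their product is too. For the time-integral part: $\int_0^t s^{3/2}\|B'(s)\|_2^2\,ds \le \int_0^t s^{3/2}\|B'(s)\|_2^2\,ds$ is exactly (a piece of) the left side of \eref{eA1}, hence $\le C_{11}(t)$; and $\int_0^t s^{3/2}\lambda(s)\|A'(s)\|_2^2\,ds \le \bigl(\sup_{s\le t}s\lambda(s)\bigr)\int_0^t s^{1/2}\|A'(s)\|_2^2\,ds \le (\lambda_M t+\gamma\rho(t)^2)\cdot\tfrac12\rho(t)$ by \eref{FA2t} and \eref{FA0} again. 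Summing these four contributions and multiplying through by $\kappa^2$ yields a bound of the required form, so one sets $C_{12}(t)$ equal to $\kappa^2$ times the sum of these standard dominating functions; since sums and products of standard dominating functions are standard dominating functions, $C_{12}$ is one as well.

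\textbf{Main obstacle.} There is no deep obstacle here — the statement is a corollary precisely because all the hard analytic work (the GFS inequalities, the finite-action bounds \eref{FA0}--\eref{FA2n}, and $\A_1$) has already been done. The only point requiring a little care is bookkeeping: making sure that $\lambda(t)$ is split as $t\lambda(t)$ times a factor with a compensating $t^{-1}$ or $t^{1/2}$ so that the singular behavior of $\|B(t)\|_2^2$ and $\|A'(t)\|_2^2$ as $t\downarrow 0$ is absorbed, and verifying that each factor one produces is genuinely of the form $\hat C(t,\rho(t))$ with $\hat C$ continuous, non-decreasing in each variable and vanishing at the origin. One should also note that monotonicity of $\rho$ (from Lemma \ref{lemFE}) is what lets us pull $\sup_{s\le t}s\lambda(s)$ out of the integral cleanly. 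With these remarks the proof is a short computation.
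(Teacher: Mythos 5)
Your proof is correct and follows essentially the same route as the paper: apply the GFS inequalities \eref{di35b1} and \eref{di35a1}, weight by $t^{3/2}$ and $s^{3/2}$, and control the four resulting terms by $\A_1$ (i.e.\ \eref{eA1}), the action identity \eref{FA0}, and the bound \eref{FA2t} on $t\lambda(t)$. The only cosmetic difference is that the paper groups the two $\lambda$-terms into the single combination $t\lambda(t)\bigl(t^{1/2}\|B(t)\|_2^2+\int_0^t s^{1/2}\|A'(s)\|_2^2\,ds\bigr)\le t\lambda(t)\rho(t)$, whereas you bound them separately.
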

\begin{proof}
From the two GFS inequalities \eref{di35b1} and \eref{di35a1} we find
\begin{align*}
\kappa^{-2}&\Big( t^{3/2}\|B(t)\|_6^2 + \int_0^t
     s^{3/2} \|A'(s) \|_6^2 ds\Big) \\
&\le  \Big\{t^{3/2} \,\| A'(t) \|_2^2  +  (t \l(t)) t^{1/2} \|B(t)\|_2^2\Big\} \\
 & \qquad  + \int_0^t\Big\{  s^{3/2}  \|B'(s)\|_2^2 +(s\l(s))  s^{1/2} \|A'(s)\|_2^2\Big\} ds \\
 &\le \Big(t^{3/2} \,\| A'(t) \|_2^2   + \int_0^t  s^{3/2}  \|B'(s)\|_2^2\Big) \\
 & \qquad +t \l(t)\, \Big( \, t^{1/2} \|B(t)\|_2^2  + \int_0^t s^{1/2} \|A'(s)\|_2^2 ds\Big)\\
&\le C_{11}(t) +   \, t\lambda(t)\, \rho(t)
\end{align*}
wherein we have used \eref{eA1}, \eref{FA0}  and the nondecreasing property
of $t \lambda(t)$.  The bound \eref{FA2t} shows that $t\lambda(t) \rho(t)$ is bounded by a standard dominating function.
\end{proof}

\bigskip

To prove  $\C_1$   we need  the following integral estimate.

\begin{lemma}\label{ord1c3}
Define $Y_1(t)$ as in \eref{di63} with $n =1$. There is a  standard dominating function $\tilde C_{12}$ such that
 \beq
 \int_0^t s^{5/2} Y_1(s) ds \le   \tilde C_{12} (t)      \label{eB1a}
 \eeq
for all $0\leq t<T$.
\end{lemma}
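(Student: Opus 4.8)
The quantity to bound is $\int_0^t s^{5/2} Y_1(s)\,ds$ where, by \eqref{di63} with $n=1$,
\[
Y_1(t) = \|Q_2(t)\|_2^2 + \kappa^2 \|P_2(t)\|_{6/5}^2 + \|S_1(t)\|_2^2.
\]
By Proposition \ref{propA1} we have $P_2(t) = c_{21}[A'(t)\wedge A'(t)]$ and $Q_2(t) = \bar c_{21}[A'(t)\lrc B(t)]$ and $S_1(t) = [B(t)\wedge A'(t)]$. So all three terms are quadratic expressions, each a pointwise bracket of $A'(t)$ with either $A'(t)$ or $B(t)$. The plan is to estimate each of the three contributions separately using H\"older's inequality in the spatial variable, the bound $\|[\,u\,,v]\| \le c\,|u|\,|v|$ coming from the constant $c$ defined after Theorem \ref{thmGF}, and then feed in the already-established $n=1$ bounds $\A_1$ (Proposition \ref{ord1prop}), $\B_1$ (Corollary \ref{ord1c1}), together with the $n=0$ facts $\|B(t)\|_2 \le \|B_0\|_2$ (Lemma \ref{lemFE}) and \eqref{FA0}.

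\textbf{Estimating the three terms.} First I would handle $\|S_1\|_2^2 = \|[B\wedge A']\|_2^2 \le c^2 \|\,|B|\,|A'|\,\|_2^2$. By H\"older with exponents $3$ and $3/2$ on the $L^2$ of the product, $\|\,|B|\,|A'|\,\|_2 \le \|B\|_6 \|A'\|_3$, or more simply I would use $\|\,|B|\,|A'|\,\|_2 \le \|B\|_6^{1/2}\|B\|_2^{1/2}\cdots$; cleaner is $\|\,|B|\,|A'|\,\|_2 \le \|B\|_6 \|A'\|_3$ and then interpolate $\|A'\|_3 \le \|A'\|_2^{1/2}\|A'\|_6^{1/2}$. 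A more streamlined route: bound $\|S_1\|_2^2 \le c^2 \|B\|_6^2 \|A'\|_3^2$ and then $\|A'\|_3^2 \le \|A'\|_2\|A'\|_6$. The same scheme works for $Q_2$ (which is $[A'\lrc B]$, same structure up to a constant) and for $P_2$: for $P_2$ one needs the $L^{6/5}$ norm, and $\|P_2\|_{6/5} \le c\,\|\,|A'|^2\,\|_{6/5} = c\,\|A'\|_{12/5}^2$, then interpolate $\|A'\|_{12/5} \le \|A'\|_2^{\theta}\|A'\|_6^{1-\theta}$ for the appropriate $\theta \in (0,1)$. In each case the weight $s^{5/2}$ is split as $s^{5/2} = s^{3/2}\cdot s^{1/2}\cdot s^{1/2}$ (or $s^{3/2}\cdot s$, etc.) so that the $s^{3/2}$ pairs with an $L^6$ factor governed by $\B_1$ or its companion (via \eqref{eB1}, \eqref{eA1}) and the remaining powers of $s$ pair with $L^2$ factors governed by \eqref{FA0} and Lemma \ref{lemFE}, after which one factors out an $\sup$ of the bounded (by a standard dominating function) quantity and integrates the rest.

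\textbf{Assembling.} Concretely, for the $S_1$ term I expect something like $\int_0^t s^{5/2}\|S_1(s)\|_2^2\,ds \le c^2 \int_0^t (s^{3/2}\|B(s)\|_6^2)(s^{1/2}\|A'(s)\|_2)(\|A'(s)\|_6)\,ds$, and then bounding $s^{3/2}\|B(s)\|_6^2 \le C_{12}(t)$ by \eqref{eB1}, $s^{1/2}\|A'(s)\|_2^2 \le \rho(t)/2$ by \eqref{FA0} (hence $s^{1/2}\|A'(s)\|_2 \le$ something controlled), and $\int_0^t \|A'(s)\|_6^2\,ds$-type quantities by \eqref{eB1} again (the $\int_0^t s^{3/2}\|A'(s)\|_6^2\,ds \le C_{12}(t)$ part). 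One has to be a little careful to keep the total power of $s$ equal to $5/2$ and to arrange the H\"older split so that every resulting factor is one of the already-bounded integral or pointwise quantities. The $P_2$ term is the only one requiring the $6/5$ interpolation and is mildly the most delicate bookkeeping, but it is still routine. After bounding each of the three pieces by a product of standard dominating functions and quantities bounded by standard dominating functions (which is closed under products and sums), one concludes $\int_0^t s^{5/2}Y_1(s)\,ds \le \tilde C_{12}(t)$ for a suitable standard dominating function $\tilde C_{12}$. The main obstacle is purely organizational: getting the H\"older exponents and the distribution of the $s$-weight right so that exactly the bounds $\A_1$, $\B_1$, \eqref{FA0} and Lemma \ref{lemFE} suffice, with no leftover uncontrolled norm; there is no genuine analytic difficulty here since all the hard work was done in establishing $\A_1$ and $\B_1$.
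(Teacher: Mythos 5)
Your proposal is correct and follows essentially the same route as the paper: decompose $Y_1$ into the $Q_2$, $P_2$, $S_1$ pieces, reduce them via H\"older and $L^2$--$L^6$ interpolation to products of norms of $A'$ and $B$, and distribute the weight $s^{5/2}$ so that each factor is controlled by $\A_1$, $\B_1$, \eqref{FA0} and Lemma \ref{lemFE}. The only slip is in your illustrative split for the $S_1$ term, where the powers of $s$ sum to $2$ rather than $5/2$; the correct distribution is $s^{5/2}=(s^{3/2})(s^{1/4})(s^{3/4})$ with Cauchy--Schwarz in time on the last two factors, after which everything closes exactly as you describe.
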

\begin{proof} The definitions in Proposition \ref{propA1} give $Q_2(t) =\bar c_{21} [A'(t)\lrc B(t)]$ and
$P_2(t) = c_{21} [A'(t)\wedge A'(t)]$.  Hence, by the definition \eref{di63}  we have
 \begin{align}
 Y_1(t) &= \| Q_2(t) \|_2^2 + \kappa^2 \|P_2(t) \|_{6/5}^2 + \| S_1(t) \|_2^2 \notag\\
 &=\|\bar c_{21} [A'(t)\lrc B(t)]\, \|_2^2 + \kappa^2 \| c_{21} [A'(t)\wedge A'(t)]\, \|_{6/5}^2
                         + \|\, [ B(t) \wedge A'(t)]\, \|_2^2      \notag\\
   &\le \tilde c \| \ |A'(t)|\, |B(t)|\, \|_2^2 + \tilde c \|\, |A'(t) |^2\|_{6/5}^2    \label{121}
  \end{align}
 for some constant $\tilde c$ that only depends on the manifold and the bundle.
By H\"older's inequality
\[
\| \,| A'(s) |\, |B(s)|\, \|_2^2  \le   \|A'(s)\|_6^2 \|B(s)\|_3^2 \le   \|A'(s)\|_6^2 \|B(s) \|_2 \|B(s)\|_6 .
\]
Hence
\begin{align*}
s^{5/2}  \| \,| A'(s) |\, |B(s)|\, \|_2^2
   & \le     ( s^{3/2}  \|A'(s)\|_6^2 )\, ( s^{1/4} \|B(s)\|_2 )\, ( s^{3/4}  \|B(s) \|_6 ) \\
   & \leq   s^{3/2}  \|A'(s)\|_6^2  \,    \sqrt{\rho(t) \, C_{12}(t)}
\end{align*}
by \eref{FA0}     and  \eref{eB1}. Therefore
\begin{align}
\int_0^t
 s^{5/2} \| \,| A'(s) |\, |B(s)|\, \|_2^2\    ds & \leq  \sqrt{\rho(t) \, C_{12}(t)}   \int_0^t
  \,s^{3/2}  \|A'(s)\|_6^2 \;  ds              \notag\\
&\leq   \sqrt{\rho(t) \, (C_{12}(t)\,)^3} =:\tilde {\tilde C}_{12}(t)    \label{125}
\end{align}
by \eref{eB1},   giving an upper bound by a standard dominating function.

For the second term in \eref{121} we also apply H\"older's inequality twice to obtain
\[
\| \, |A'(s)|^2 \|_{6/5}^2   \le    \|  A'(s) \|_{3}^2 \|A'(s)\|_2^2  \leq \| \, A'(s) \|_6 \|  A'(s) \|_2 \|A'(s)\|_2^2 .
\]
Hence,
\begin{align*}
s^{5/2}  \| \, |A'(s)|^2 \|_{6/5}^2  & \le     ( s^{3/4} \| \, A'(s) \|_6 )\, ( s^{1/4} \|  A'(s) \|_2 )\, ( s^{3/2} \|A'(s)\|_2^2) \\
& \leq    ( s^{3/4} \| \, A'(s) \|_6 )\, ( s^{1/4} \|  A'(s) \|_2) \, C_{11}(t)
\end{align*}
by \eref{eA1}.
Therefore
\begin{align*}
\int_0^t &
s^{5/2} \| \, |A'(s)|^2 \|_{6/5}^2 ds \\
&\leq  C_{11}(t)\, \int_0^t
(s^{3/4} \| \, A'(s) \|_6 )\, ( s^{1/4} \|  A'(s) \|_2 ) \;  ds\\
&\leq  C_{11}(t)\, \left[ \int_0^t
s^{3/2} \| \, A'(s) \|_6^2
       \;  ds \right]^{1/2} \; \left[ \int_0^t
      s^{1/2} \|  A'(s) \|_2^2   \;  ds   \right]^{1/2}\\
&\leq    C_{11}(t)\,  \sqrt{ C_{12}(t) \, \rho(t) }
\end{align*}
by \eref{FA0}, \eref{eB1}   and   H\"older's inequality for the time integral.
\end{proof}

\bigskip
We are now ready to prove $\C_1$.

\begin{corollary}[Proof of $\C_1$]
 \label{ord2prop}
There is a standard dominating function $C_{13}$ such that
\begin{equation}
t^{5/2} \| B'(t) \|_2^2 + \int_0^t
 s^{5/2} \| A''(s) \|_2^2 ds
              \le  C_{13}(t)     \label{eC1}
\end{equation}
for all $0\leq t<T$.
\end{corollary}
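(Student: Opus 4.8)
\textbf{Proof of Corollary \ref{ord2prop} (plan).} The plan is to feed the already-established first-order bounds into the integral inequality \eref{Bk2} of Corollary \ref{corl50}. Take $n=1$ there: since $2n+\tfrac12 = \tfrac52$, $2n-\tfrac12 = \tfrac32$ and $2n+\tfrac12 = \tfrac52$, inequality \eref{Bk2} reads
\[
t^{5/2}\|B'(t)\|_2^2 + \int_0^t s^{5/2}\|A''(s)\|_2^2\,ds
 \le \Big\{ \tfrac52 \int_0^t s^{3/2}\|B'(s)\|_2^2\,ds + \int_0^t s^{5/2} Y_1(s)\,ds \Big\}\, e^{\psi(t)},
\]
valid provided the right side is finite, where $Y_1$ is as in \eref{di63} with $n=1$.

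First I would check finiteness of the right side. The integral $\int_0^t s^{3/2}\|B'(s)\|_2^2\,ds$ is bounded by $C_{11}(t)$ by Proposition \ref{ord1prop}, and $\int_0^t s^{5/2} Y_1(s)\,ds$ is bounded by $\tilde C_{12}(t)$ by Lemma \ref{ord1c3}; both are finite, and $\psi(t)$ is bounded on $(0,T)$ by Lemma \ref{lemFA1}. Hence the right side of the displayed inequality is finite, so the inequality applies and gives
\[
t^{5/2}\|B'(t)\|_2^2 + \int_0^t s^{5/2}\|A''(s)\|_2^2\,ds
 \le \Big( \tfrac52\, C_{11}(t) + \tilde C_{12}(t) \Big)\, e^{\psi(t)}.
\]

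It remains to observe that the right side is a standard dominating function. By \eref{FA2} we have $\psi(t) \le \lambda_M t + 2\gamma\,\rho(t)^2$, so $e^{\psi(t)} \le e^{\lambda_M t + 2\gamma \rho(t)^2}$, which is continuous and non-decreasing in $t$ and in $\rho(t)$; since $C_{11}$ and $\tilde C_{12}$ are standard dominating functions (in particular they vanish at $t=0$), the sum $\tfrac52 C_{11}(t)+\tilde C_{12}(t)$ is one as well, and its product with the (bounded, non-decreasing) exponential factor is again continuous, non-decreasing in each variable, vanishing at the origin, and independent of the solution. Calling this product $C_{13}(t)$ establishes \eref{eC1}, which is exactly $\C_1$. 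The only non-routine ingredient is Lemma \ref{ord1c3}, which has already been proved; the present argument is pure bookkeeping, so I do not anticipate any obstacle.
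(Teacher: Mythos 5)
Your proposal is correct and follows essentially the same route as the paper: apply \eref{Bk2} with $n=1$, bound the two integrals on the right by $C_{11}(t)$ (from Proposition \ref{ord1prop}) and $\tilde C_{12}(t)$ (from Lemma \ref{ord1c3}), and absorb the factor $e^{\psi(t)}$ using \eref{FA2}. Nothing to add.
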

\begin{proof}
From \eref{Bk2}  with $n=1$  we get
\begin{align*}
t^{5/2} \|B'(t)\|_2^2 + &\int_0^t s^{5/2}
   \| A''(s)\|_2^2 ds  \\
&\le  \Big\{\frac 52 \,  \int_0^t
           s^{3/2} \|B'(s) \|_2^2 \,ds  + \int_0^t
            s^{5/2} Y_1(s)\, ds\Big\} e^{\psi(t)}\\
            & \leq \Big\{\frac 52 \, C_{11} (t)  +
            \tilde C_{12}(t)\Big\} e^{\psi(t)}
\end{align*}
by \eref{eA1}   and \eref{eB1a}.
This is bounded by a standard dominating function in view of \eref{FA2}.
\end{proof}

\begin{corollary}[Proof of $\D_1$]
\label{ord2c1}
 There is a standard dominating function $C_{14}$ such that
\begin{equation}
  t^{5/2} \| A'(t)\|_6^2 + \int_0^t
   s^{5/2} \| B'(s)\|_6^2 ds
                         \le C_{14}(t)    \label{eD1}
\end{equation}
for all $0\leq t<T$.
\end{corollary}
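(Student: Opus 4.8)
The plan is to run the same argument that proved $\B_1$ in Corollary \ref{ord1c1}, but one order higher: apply the Gaffney-Friedrichs-Sobolev inequalities \eref{di35a1} and \eref{di35c1} (the $n=1$ special cases) to convert the $L^6$ norms of $A'(t)$ and $B'(t)$ into $L^2$ norms of $B'(t)$, $A''(t)$, $d_A^*B'(t)$, the commutator $[B(t)\wedge A'(t)]$, and terms of the form $\lambda(t)$ times lower-order $L^2$ norms. Concretely, I would start from
\[
\kappa^{-2}\|A'(t)\|_6^2 \le \|B'(t)\|_2^2 + \lambda(t)\|A'(t)\|_2^2
\]
and
\[
\kappa^{-2}\|B'(t)\|_6^2 \le \|[B(t)\wedge A'(t)]\|_2^2 + \|d_A^*B'(t)\|_2^2 + \lambda(t)\|B'(t)\|_2^2 ,
\]
then multiply the first by $t^{5/2}$, the second by $s^{5/2}$ and integrate, and replace $\lambda(t)\,t$ by its standard-dominating-function bound \eref{FA2t} (using that $t\lambda(t)$ is nondecreasing, exactly as in Corollary \ref{ord1c1}).

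Next I would identify each resulting piece with a quantity already controlled. The term $t^{5/2}\|B'(t)\|_2^2 + \int_0^t s^{5/2}\|A''(s)\|_2^2\,ds$ is precisely the left side of \eref{eC1}, hence bounded by $C_{13}(t)$. The term $t\lambda(t)$ multiplying $t^{3/2}\|A'(t)\|_2^2 + \int_0^t s^{3/2}\|B'(s)\|_2^2\,ds$ is $t\lambda(t)$ times $C_{11}(t)$ from \eref{eA1}, again a standard dominating function by \eref{FA2t}. Using $d_A^*B'(t) = -(A''(t) + Q_2(t))$ from \eref{A12} with $n=2$ (so $d_A^* B' = -A'' - \bar c_{21}[A'\lrc B]$), the $\|d_A^*B'\|_2^2$ contribution splits into $\|A''\|_2^2$ (already handled) plus $\|[A'\lrc B]\|_2^2 \lesssim \| |A'|\,|B| \|_2^2$. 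That commutator term, weighted by $s^{5/2}$ and integrated, is exactly the quantity bounded in \eref{125} inside the proof of Lemma \ref{ord1c3}, so it is $\le \tilde{\tilde C}_{12}(t)$. The same estimate covers $\|[B(t)\wedge A'(t)]\|_2^2$ since $|[B\wedge A']| \lesssim |B|\,|A'|$ pointwise.

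The only slightly delicate point is the weight bookkeeping: one must check that the powers of $s$ match up so that the integral $\int_0^t s^{5/2} \| |A'(s)|\,|B(s)| \|_2^2\,ds$ distributes as $(s^{3/2}\|A'\|_6^2)(s^{1/4}\|B\|_2)(s^{3/4}\|B\|_6)$ — precisely the split used in Lemma \ref{ord1c3} — and that the boundary term at $s=0$ vanishes, which follows from finiteness of the right-hand side exactly as in Lemma \ref{lem50}. I do not expect a genuine obstacle here; the estimate is a routine ``one order up'' repetition of Corollary \ref{ord1c1}, with every new ingredient ($C_{11}$, $C_{12}$, $C_{13}$, $\tilde{\tilde C}_{12}$, \eref{FA2t}) already in hand. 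The mild care needed is only to assemble the pieces so the final bound is manifestly of the form $\hat C(t,\rho(t))$ with $\hat C$ continuous, nondecreasing, and vanishing at the origin, which it is since it is built from sums and products of such functions.
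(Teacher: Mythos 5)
Your proposal is correct and follows essentially the same route as the paper's proof: multiply the GFS inequality \eref{di35a1} by $t^{5/2}$ and bound via \eref{eC1}, \eref{eA1} and \eref{FA2t}, then substitute $d_A^*B'=-A''-\bar c_{21}[A'\lrc B]$ from \eref{A12} into \eref{di35c1} and control the resulting terms by \eref{eC1}, \eref{125} and \eref{eA1}. The only superfluous step is your worry about a boundary term at $s=0$; no integration by parts is needed here, so Lemma \ref{lem50} plays no role in this particular corollary.
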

\begin{proof} Multiply the GFS inequality \eref{di35a1} by $t^{5/2}$ to find
\begin{align*}
\kappa^{-2}t^{5/2} \, \| A'(t)\|_6^2 &\leq    t^{5/2} \,\|B'(t) \|_2^2 +   t  \l(t) \, ( t^{3/2} \| A'(t)\|_2^2)  \\
& \leq  C_{13}(t) + t\l(t) \,  C_{11}(t)
\end{align*}
by \eref{eC1}   and \eref{eA1}. This is bounded by a standard dominating function in view of \eref{FA2t}.

For the second term in \eref{eD1}
observe that the identity \eref{A12} reduces,
for $n =2$, to the identity $d_A^* B'=-A''-\bar c_{21} [A' \lrc B]$.
Replace $d_A^* B'$ by this in the GFS inequality \eref{di35c1} to find
\begin{align*}
\kappa^{-2} \| B'(t)\|_6^2 &\leq  \|\, [B\wedge A']\, \|_2^2 + 2 \|A''(t) \|_2^2
                            + 2 \bar c_{21}^2 \|\,  [A' \lrc B]\,  \|_2^2+\l(t) \| B'(t)\|_2^2 .
\end{align*}
 It follows  that for some constant $\bar c$ that only depends on the manifold and the bundle,
\begin{align*}
\int_0^t
 s^{5/2} \| B'(s)\|_6^2 ds
 & \le \bar c \int_0^t
  s^{5/2} \Bigl\{ \|A''(s) \|_2^2 +  \| \,|A'(s)|\,|B(s)|\, \|_2^2  +\l(s) \|B'(s)\|_2^2 \Bigr\}\, ds\\
  &\le   \bar c \, \Big\{\, C_{13}(t) +   \tilde {\tilde C}_{12}(t)
  +t\l(t) \int_0^t s^{3/2} \|B'(s)\|_2^2 ds\Big\} \\
  & \le   \bar c \, \Big\{\, C_{13}(t) +   \tilde {\tilde C}_{12}(t)  +t\l(t) C_{11}(t)\Big\}
\end{align*}
by \eref{eC1}, \eref{125},  and \eref{eA1}.
This is bounded by a standard dominating function in view of \eref{FA2t}.
\end{proof}

\bigskip
This completes the proof of Theorem \ref{MainTh}  for $n =1$ when $A(t)$ is smooth.

\subsection{Bounds on lower order terms.}

The induction mechanism in the next section will give us  information about the initial behavior of
the time derivatives of $A$ and $B$. We will use this information with the help of the following
bounds.

\begin{lemma} \label{lembds} (Bounds on lower order terms)   For all $n \ge 1$
there exist constants $d_{n,r}$ independent of $M$ and $A$ such that
\begin{align}
\| P_n(t)\|_2^2
&\le  d_{n,1}  c^2\sum_{i=1}^{n-1} \| A^{(i)}(t)\|_2  \| A^{(i)}(t)\|_6
                                         \|A^{(n-i)}(t) ]\|_6^2                                        \label{ib100}\\
\|P_{n}(t)\|_{6/5}^2
 &\le d_{n,2} c^2  \sum_{i=1}^{n-1} \| A^{(i)}(t)\|_6  \| A^{(i)}(t)\|_2\|A^{(n-i)}(t) \|_2^2  \label{ib101}\\
\|Q_n(t)\|_2^2
&\le d_{n,3}c^2  \sum_{i=1}^{n-1} \| A^{(i)}(t)\|_6^2 \| B^{(n-1-i)}(t)\|_2\| B^{(n-1-i)}(t)\|_6 \label{ib102} \\
\|\hat Q_{n}(t)\|_{6/5}^2
&\le d_{n,4}c^2  \sum_{i=1}^{n-2} \| A^{(i)}(t)\|_2 \| A^{(i)}(t)\|_6 \|  B^{(n-1-i)}(t)\|_2^2  \label{ib103} \\
\|R_n(t)\|_2^2
&\le  d_{n,5}  c^2 \sum_{i=1}^{n-2} \|A^{(i)}(t)\|_2 \| A^{(i)}(t)\|_6 \|A^{(n-i)}(t)  \|_6^2     \label{ib105} \\
\|S_n(t)\|_2^2
&\le   d_{n,6} \Big(\sum_{i =1}^n \|\, [A^{(i)}(t) \wedge B^{(n-i)}(t)]\, \|_2^2
               + \sum_{i=1}^{n-1} \|\, [ A^{(i)} (t)\wedge P_{n-i}(t)]\,\|_2^2 \Big). \label{ib107}
\end{align}
 Note: It will be clear from the proof that $d_{n,2}=d_{n,1}$ and $d_{n,4}\leq d_{n,3}$.
\end{lemma}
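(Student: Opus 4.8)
The plan is to estimate each of the six quantities $P_n$, $P_n$ (in $L^{6/5}$), $Q_n$, $\hat Q_n$, $R_n$, $S_n$ term-by-term, using only the definitions from Proposition \ref{propA1} together with two elementary facts: the pointwise bound $|[u,v]|_{\frak k}\le c\,|u|_{\frak k}|v|_{\frak k}$ coming from $c=\sup\{\|ad\,x\|:|x|_{\frak k}\le1\}$, and H\"older's inequality. Since all of $P_n,Q_n,R_n$ are finite sums of commutator terms with fixed combinatorial coefficients, the constants $d_{n,r}$ will simply absorb the number of summands and the squares of the $c_{ni},\bar c_{ni},\tilde c_{ni},\hat c_{ni}$; this is why the $d_{n,r}$ depend only on $n$ and not on $M$ or $A$.

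\medskip

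First I would treat \eqref{ib100}. From $P_n(t)=\sum_{i=1}^{n-1}c_{ni}[A^{(i)}\wedge A^{(n-i)}]$ and the pointwise commutator bound, $\|P_n\|_2^2\le (\text{const})\,c^2\sum_{i}\|\,|A^{(i)}|\,|A^{(n-i)}|\,\|_2^2$ after using $\|f+g\|_2^2\le 2\|f\|_2^2+2\|g\|_2^2$ repeatedly (the factor $2^{n}$ goes into $d_{n,1}$). For a single term, H\"older with exponents $3$ and $6$ gives $\|\,|A^{(i)}|\,|A^{(n-i)}|\,\|_2^2\le \|A^{(i)}\|_3^2\|A^{(n-i)}\|_6^2$, and then the interpolation $\|A^{(i)}\|_3^2\le\|A^{(i)}\|_2\|A^{(i)}\|_6$ (itself H\"older with exponents $2$ and $6$ applied to $|A^{(i)}|=|A^{(i)}|^{1/2}|A^{(i)}|^{1/2}$) yields exactly the claimed form. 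The bound \eqref{ib105} for $R_n$ is literally the same computation with the sum running to $n-2$. For \eqref{ib101} one instead pairs the $L^6$ factor off the way the exponent $6/5$ demands: $\|\,|A^{(i)}|\,|A^{(n-i)}|\,\|_{6/5}^2\le\|A^{(i)}\|_3^2\|A^{(n-i)}\|_2^2$ (H\"older with $3$ and $2$, since $1/3+1/2=5/6$), then interpolate $\|A^{(i)}\|_3^2\le\|A^{(i)}\|_6\|A^{(i)}\|_2$; this is structurally identical to \eqref{ib100} with the roles of the two norms on the first factor swapped, which is the content of the remark that $d_{n,2}=d_{n,1}$.

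\medskip

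For \eqref{ib102}, $Q_n=\sum_{i=1}^{n-1}\bar c_{ni}[A^{(i)}\lrc B^{(n-1-i)}]$: the interior product is again bounded pointwise by $c\,|A^{(i)}||B^{(n-1-i)}|$, so $\|Q_n\|_2^2\le(\text{const})c^2\sum_i\|\,|A^{(i)}|\,|B^{(n-1-i)}|\,\|_2^2$, and now H\"older with exponents $6$ and $3$ puts the $A$-factor in $L^6$ and the $B$-factor in $L^3$, after which $\|B^{(n-1-i)}\|_3^2\le\|B^{(n-1-i)}\|_2\|B^{(n-1-i)}\|_6$ closes it. For \eqref{ib103} note $\hat Q_n$ has exactly the same form as $Q_n$ but with coefficients $\bar c_{ni}$ replaced by $\bar c_{(n+1)i}$ and the sum truncated at $i=n-2$ (see \eqref{di33}); one targets the $L^{6/5}$ norm by H\"older with exponents $3$ and $2$, giving $\|A^{(i)}\|_3^2\|B^{(n-1-i)}\|_2^2$, then interpolating $\|A^{(i)}\|_3^2\le\|A^{(i)}\|_2\|A^{(i)}\|_6$ — whence $d_{n,4}\le d_{n,3}$, since it is the same estimate with a shorter sum and a relabelled coefficient.

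\medskip

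Finally \eqref{ib107}: from \eqref{A14}, $S_n(t)=[B\wedge A^{(n)}]+\sum_{i=1}^{n-1}\hat c_{ni}[(B^{(i)}-P_i)\wedge A^{(n-i)}]$, so $S_n$ is a fixed linear combination of terms of the form $[A^{(i)}\wedge B^{(n-i)}]$ (using antisymmetry $[u\wedge v]=-[v\wedge u]$ of a 1-form against a 2-form to rewrite $[B\wedge A^{(n)}]$ and $[B^{(i)}\wedge A^{(n-i)}]$) together with terms $[A^{(i)}\wedge P_{n-i}]$. Squaring the $L^2$ norm of this finite sum with the inequality $\|\sum_k f_k\|_2^2\le (\#\text{terms})\sum_k\|f_k\|_2^2$ and absorbing the count and the $\hat c_{ni}^2$ into $d_{n,6}$ gives \eqref{ib107} directly, with no further H\"older needed — this one is deliberately left in the raw bracketed form because the subsequent section will expand $\|[A^{(i)}\wedge B^{(n-i)}]\|_2$ and $\|[A^{(i)}\wedge P_{n-i}]\|_2$ using the initial-behavior bounds obtained inductively. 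The only mild subtlety throughout — and the one place to be careful rather than obstructed — is keeping track of which factor lands in which $L^p$ so that the right-hand sides come out in the precise asymmetric form stated (each product split as $\|\cdot\|_2\|\cdot\|_6$ times a squared $L^6$ or $L^2$ norm of the other factor), since it is that exact shape, not a cruder symmetric one, that the induction in the next section consumes. No genuine obstacle is expected; the lemma is pure bookkeeping of H\"older exponents.
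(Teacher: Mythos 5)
Your proposal is correct and follows essentially the same route as the paper's proof: bound each sum termwise, apply the pointwise commutator bound with the constant $c$, use H\"older with the exponent pairs $(3,6)$, $(6,3)$, or $(3,2)$ as dictated by the target norm, and close with the interpolation $\|f\|_3^2\le\|f\|_2\|f\|_6$, leaving \eqref{ib107} in raw bracketed form after the triangle inequality. The only quibble is your description of $\hat Q_n$: it has the \emph{same} coefficients $\bar c_{ni}$ as $Q_n$ with the top ($i=n-1$) term dropped, not relabelled coefficients $\bar c_{(n+1)i}$ — but this is immaterial since the coefficients are absorbed into $d_{n,4}$ anyway.
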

\begin{proof}
The Lemma is a simple application of H\"older's inequality.  From \eref{A11} we have
\begin{align*}
\|P_n(t)\|_2^2 &=  \| \sum_{i=1}^{n-1} c_{ni} [A^{(i)}(t)\wedge A^{(n-i)}(t) ]\, \|_2^2
 \le  d_{n,1} \sum_{i=1}^{n-1}  \|\,[A^{(i)}(t)\wedge A^{(n-i)}(t) ]\|_2^2 \\
&\le d_{n,1}  c^2\sum_{i=1}^{n-1} \| A^{(i)}(t)\|_3^2  \|A^{(n-i)}(t) ]\|_6^2  \\
&\le  d_{n,1}  c^2\sum_{i=1}^{n-1} \| A^{(i)}(t)\|_2  \| A^{(i)}(t)\|_6 \|A^{(n-i)}(t) ]\|_6^2 .
\end{align*}
This proves \eref{ib100}. For the second estimate we have
\begin{align*}
\|P_{n}(t)\|_{6/5}^2 &=  \|\sum_{i=1}^{n-1} c_{ni} [A^{(i)}(t)\wedge A^{(n-i)}(t) ] \|_{6/5}^2 \\
&\le   d_{n,2} \sum_{i=1}^{n-1} \| \,  [A^{(i)}(t)\wedge A^{(n-i)}(t) ]\,  \|_{6/5}^2 \\
&\le   d_{n,2} c^2  \sum_{i=1}^{n-1} \| A^{(i)}(t)\|_3^2 \|A^{(n-i)}(t) \|_2^2 \\
&\le   d_{n,2} c^2  \sum_{i=1}^{n-1} \| A^{(i)}(t)\|_6  \| A^{(i)}(t)\|_2\|A^{(n-i)}(t) \|_2^2.
\end{align*}

Similarly, from \eref{A12}
\begin{align*}
\|Q_{n}(t)\|_2^2 &= \| \sum_{i=1}^{n-1} \bar c_{ni}[A^{(i)}(t) \lrc B^{(n-1-i)}(t)]\, \|_2^2
 \le d_{n,3} \sum_{i=1}^{n-1} \|\, [A^{(i)}(t) \lrc B^{(n-1-i)}(t)]\, \|_2^2\\
&\le d_{n,3}c^2  \sum_{i=1}^{n-1} \| A^{(i)}(t)\|_6^2 \| B^{(n-1-i)}(t)\|_3^2 \\
&\le d_{n,3}c^2  \sum_{i=1}^{n-1} \| A^{(i)}(t)\|_6^2 \| B^{(n-1-i)}(t)\|_2\| B^{(n-1-i)}(t)\|_6.
\end{align*}
This proves \eref{ib102}.
From the definition \eref{di33} we find, for $n \ge 3$
\begin{align*}
\|\hat Q_{n}(t)\|_{6/5}^2
&= \|\sum_{i=1}^{n-2} \overline c_{ni} [A^{(i)}(t) \lrc B^{(n-1-i)}(t)]\, \|_{6/5}^2 \\
& \le d_{n,4} \sum_{i=1}^{n-2} \|\,  [A^{(i)}(t) \lrc B^{(n-1-i)}(t)]\, \|_{6/5}^2 \\
& \le d_{n,4}c^2  \sum_{i=1}^{n-2} \| A^{(i)}(t)\|_3^2 \|  B^{(n-1-i)}(t)\|_2^2  \\
& \le d_{n,4}c^2  \sum_{i=1}^{n-2} \| A^{(i)}(t)\|_2 \| A^{(i)}(t)\|_6 \|  B^{(n-1-i)}(t)\|_2^2
\end{align*}
proving \eref{ib103}.

From \eref{A13} we have
\begin{align*}
\| R_n(t)\|_2^2  &= \|\sum_{i=1}^{n-2} \tilde{c}_{ni} [A^{(i)}(t)\lrc A^{(n-i)}(t)]\,  \|_2^2
 \le d_{n,5}   \sum_{i=1}^{n-2} \|\, [A^{(i)}(t)\lrc A^{(n-i)}(t)]\,  \|_2^2  \\
&\le d_{n,5}  c^2 \sum_{i=1}^{n-2} \|A^{(i)}(t)\|_2 \| A^{(i)}\|_6 \|A^{(n-i)}(t)  \|_6^2,
\end{align*}
proving \eref{ib105}.

Finally, from \eref{A14} we have
\begin{align*}
\|S_n(t)\|_2^2  &=  \|\, [B(t)\wedge A^{(n)}(t)]
                  +\sum_{i=1}^{n-1} \hat c_{ni}  \, [(B^{(i)}(t) -P_i(t))\wedge A^{(n-i)} (t)]\, \|_2^2 \\
& \leq d_{n,6}'   \Big\{  \|\, [B(t)\wedge A^{(n)}(t)]\, \|_2^2
           +\sum_{i=1}^{n-1} \|\, [A^{(i)}(t) \wedge (B^{(n-i)}(t)-P_{n-i}(t) ) ] \,\|_2^2 \Big\} \\
           &\le d_{n,6} \Big(\sum_{i =1}^n \|\, [A^{(i)} \wedge B^{(n-i)}]\, \|_2^2
               + \sum_{i=1}^{n-1} \| \, [A^{(i)} \wedge P_{n-i}]\,\|_2^2 \Big) ,
 \end{align*}
proving \eref{ib107}.
\end{proof}

\subsection{Proof of the induction step.}

In Section \ref{secn=1} we proved  the  four inequalities of Theorem  \ref{MainTh}    for $n=1$.
 In this subsection we will assume that $k \ge 2$ and that the four  inequalities
 \eref{An1}, \eref{BnL6}, \eref{Bn1}, \eref{AnL6}
 of Theorem \ref{MainTh}  hold   for $1\le n <k$.
 We will   prove that they then also hold for $n=k$. For this purpose, we will need to show that
 the integrals involving $X_n$ and $Y_n$ in the inequalities \eref{Ak2} and \eref{Bk2} are finite
 under this induction hypothesis.  As in Section \ref{secn=1}, we will initially assume that $A(t)$ is smooth over $(0,T)\times M$.

\begin{lemma} \label{lemind1} If in Theorem \ref{MainTh}   the inequalities \eref{An1}, \eref{BnL6}, \eref{Bn1}, \eref{AnL6}
hold for  $1 \le n < k$ then
\begin{align}
\int_0^t
s^{2k -\frac12} X_k(s) ds & \leq \bar C_{k1}(t)  \ \ \ \ \text{and}      \label{ib130} \\
\sup_{0< t < T} t^{2k +\frac12} X_k(t) & \leq    \bar C_{k1}(t)           \label{ib131}
\end{align}
for some standard dominating function $\bar C_{k1}$.
\end{lemma}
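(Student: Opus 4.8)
The plan is to estimate the three pieces of $X_k=\|P_k\|_2^2+2\kappa^2\|\hat Q_{k+1}\|_{6/5}^2+\|R_k\|_2^2$ separately. In each case I would first apply the H\"older bounds of Lemma~\ref{lembds} (namely \eqref{ib100}, \eqref{ib103} and \eqref{ib105}) to reduce to a finite sum of products of $L^2$ and $L^6$ norms of the time derivatives $A^{(i)}(s)$, $B^{(j)}(s)$, and then feed those norms to the induction hypotheses $(\A_n)$--$(\D_n)$, $1\le n<k$. The point to notice at the outset is that every index occurring in $P_k$, $\hat Q_{k+1}$ and $R_k$ --- namely $i$ and $k-i$ with $1\le i\le k-1$ --- lies in $\{1,\dots,k-1\}$, so the induction hypothesis is legitimately available for each factor.

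It is convenient to record the following consequences of the induction hypotheses: $s^{\,i-1/4}\|A^{(i)}(s)\|_2\le\sqrt{C_{i1}(s)}$ from $(\A_i)$; $s^{\,i+1/4}\|A^{(i)}(s)\|_6\le\sqrt{C_{i4}(s)}$ from $(\D_i)$; $s^{\,j+1/4}\|B^{(j)}(s)\|_2\le\sqrt{C_{j3}(s)}$ from $(\C_j)$; and the integrated bounds $\int_0^t s^{\,2j-1/2}\|A^{(j)}(s)\|_6^2\,ds\le C_{j2}(t)$ from $(\B_j)$ and $\int_0^t s^{\,2j-1/2}\|B^{(j)}(s)\|_2^2\,ds\le C_{j1}(t)$ from $(\A_j)$. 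Since a finite sum, a finite product, or a square root of standard dominating functions is again one (each such operation preserves continuity, monotonicity in each variable, and vanishing at the origin), all the displayed right-hand sides are standard dominating functions, in particular non-decreasing. An exponent count shows that each summand appearing in $\|P_k\|_2^2$, $\|\hat Q_{k+1}\|_{6/5}^2$ and $\|R_k\|_2^2$ behaves like $s^{-(2k+1/2)}$ up to a standard dominating factor. Hence, multiplying by $s^{\,2k+1/2}$ and distributing this weight as $s^{\,i-1/4}\cdot s^{\,i+1/4}\cdot s^{\,2(k-i)+1/2}$ over the factors of each summand, and using the pointwise bounds above --- with the ``heavy'' factor $s^{\,2(k-i)+1/2}\|A^{(k-i)}(s)\|_6^2\le C_{(k-i)4}(s)$ in the $P_k$ and $R_k$ terms, resp.\ $s^{\,2(k-i)+1/2}\|B^{(k-i)}(s)\|_2^2\le C_{(k-i)3}(s)$ in the $\hat Q_{k+1}$ term --- one obtains $t^{\,2k+1/2}X_k(t)\le\bar C_{k1}(t)$, which is \eqref{ib131} (and, since $\bar C_{k1}$ is non-decreasing, also the stated supremum bound).

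For the integral estimate \eqref{ib130} the pointwise bounds alone fall exactly one power of $s$ short of integrability near $0$, so I would instead split the weight as $s^{\,2k-1/2}=s^{\,2i}\cdot s^{\,2(k-i)-1/2}$ inside each summand. The ``light'' pair is then bounded by $s^{\,2i}\|A^{(i)}(s)\|_2\|A^{(i)}(s)\|_6\le\sqrt{C_{i1}(s)}\,\sqrt{C_{i4}(s)}\le\sqrt{C_{i1}(t)C_{i4}(t)}$ for $s\le t$, a bounded standard dominating function which pulls out of the time integral; what is left inside is $\int_0^t s^{\,2(k-i)-1/2}\|A^{(k-i)}(s)\|_6^2\,ds\le C_{(k-i)2}(t)$ for the $P_k$ and $R_k$ terms, by $(\B_{k-i})$, and $\int_0^t s^{\,2(k-i)-1/2}\|B^{(k-i)}(s)\|_2^2\,ds\le C_{(k-i)1}(t)$ for the $\hat Q_{k+1}$ term, by $(\A_{k-i})$. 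Summing over $i$ gives $\int_0^t s^{\,2k-1/2}X_k(s)\,ds\le\bar C_{k1}(t)$ with $\bar C_{k1}$ a finite sum of products of the $C$'s above; taking for $\bar C_{k1}$ the sum of the dominating functions produced in this paragraph and the preceding one proves both \eqref{ib130} and \eqref{ib131}.

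There is no real analytic obstacle here. The H\"older step is exactly Lemma~\ref{lembds}, and the essential content is the exponent arithmetic together with the observation that, in each summand, one of the factors must be estimated in the integrated sense --- the integrated parts of $(\A_n)$ and $(\B_n)$ are genuinely stronger than what one gets by inserting their pointwise parts into the time integral --- rather than pointwise. The only care needed is the checking of the degenerate small-$k$ cases ($R_2=0$, and the sums defining $\hat Q_3$ and $P_2$ each consisting of a single cross term in $A'$ and $B'$), which are handled verbatim by the same estimates with $n=1<k=2$.
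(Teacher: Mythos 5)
Your proposal is correct and follows essentially the same route as the paper: reduce via the H\"older bounds \eqref{ib100}, \eqref{ib103}, \eqref{ib105} of Lemma \ref{lembds}, split the weight as $s^{2k-\frac12}=s^{i-\frac14}\cdot s^{i+\frac14}\cdot s^{2(k-i)-\frac12}$, absorb the light factors pointwise via $\A_i$ and $\D_i$, and integrate the heavy factor via $\B_{k-i}$ and $\A_{k-i}$ (respectively bound it pointwise via $\D_{k-i}$ and $\C_{k-i}$ after multiplying by an extra power of $s$ for \eqref{ib131}). Your observation that the integrated parts of the induction hypotheses are genuinely needed --- the pointwise parts alone give only the non-integrable $s^{-1}$ --- is exactly the point of the paper's argument.
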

\begin{proof}
 For the  proof of \eref{ib130}  it suffices to show that
\begin{align}
\int_0^t s^{2k -\frac12} \Big( \| P_k(s)\|_2^2 +
2\kappa^2 \|\hat Q_{k+1}(s)\|_{6/5}^2
                                                            +  \| R_k(s) \|_2^2\Big) ds  \leq \tilde  C_{k1}(t)        \label{ib133}
  \end{align}
by virtue of the definition \eref{di62} for $X_k$.
In view of  the inequalities
 \eref{ib100}, \eref{ib105} with $n =k$ and  \eref{ib103}  with $n = k+1$,  we need only show  that
\[
 \int_0^t s^{2k -\frac12} \|A^{(i)}(s)\|_2 \| A^{(i)}(s)\|_6
                     \Big(\|A^{(k-i)}(s)\|_6^2 + \| B^{(k-i)}(s)\|_2^2 \Big)   ds  \leq \tilde C_{k1}(t)
\]
for $1 \le i \le k-1$ and for some standard dominating functions $\tilde C_{k1}$.
 But for $1 \leq i \leq k-1$, the inductive hypotheses $\A_i$ and  $\D_i$
 of Theorem \ref{MainTh}  hold. Hence
\begin{equation} \label{na8}
\begin{split}
s^{2k-\frac 12} & \| A^{(i)} (s) \|_2 \; \| A^{(i)} (s) \|_6 \; \left\{\| A^{(k-i)}(s) \|_6^2   +   \| B^{(k-i)} (s) \|_2^2  \right\}   \\
& =  ( s^{i-\frac 14} \| A^{(i)} (s) \|_2 ) \;
(s^{i+\frac 14} \| A^{(i)} (s) \|_6)  \\
& \qquad \cdot \left\{ s^{2(k-i)-\frac 12} \,\| A^{(k-i)}(s) \|_6^2   +  s^{2(k-i)-\frac 12} \, \| B^{(k-i)} (s) \|_2^2  \right\}  \\
&\leq  \sqrt{  C_{i1}(t) } \,
\sqrt{ C_{i4}(t) }\,  \; \left\{ s^{2(k-i)-\frac 12} \,\| A^{(k-i)}(s) \|_6^2   +  s^{2(k-i)-\frac 12} \, \| B^{(k-i)} (s) \|_2^2  \right\}.
\end{split}
\end{equation}
The  factor in braces is integrable over $(0,t)$  by the inductive hypotheses $\B_{k-i}$ and $\A_{k-i}$ of
Theorem \ref{MainTh},  since  $k-i < k$. This proves \eref{ib130}.

For the proof of \eref{ib131} multiply the last line of \eref{na8} by $s$ and set $s=t$ to find
   \begin{align*}
t^{2k+\frac 12} & \| A^{(i)} (t) \|_2 \; \| A^{(i)} (t) \|_6
      \; \left\{\| A^{(k-i)}(t) \|_6^2   +   \| B^{(k-i)} (t) \|_2^2  \right\}     \\
& \le  \sqrt{  C_{i1}(t) } \, \sqrt{ C_{i4}(t) }\,
 \; \left\{ t^{2(k-i)+\frac 12} \,\| A^{(k-i)}(t) \|_6^2   +  t^{2(k-i)+\frac 12} \, \| B^{(k-i)} (t) \|_2^2  \right\} \\
 &\leq \sqrt{  C_{i1}(t) } \, \sqrt{ C_{i4}(t) }\,\left\{ C_{(k-i)4}(t) + C_{(k-i)3}(t)  \right\}
\end{align*}
where we have used the inductive hypothesis $\D_{k-i}$ of Theorem \ref{MainTh}  for the first summand in braces, and the inductive hypothesis
   $\C_{k-i}$ of Theorem \ref{MainTh},   for the second term.  These hold because $k - i < k$.
 Using the inequalities \eref{ib100}, \eref{ib103} and \eref{ib105} as before, we conclude that
 \begin{align}
t^{2k +\frac12}\Big( \| P_k(t)\|_2^2 +2\kappa^2
\|\hat Q_{k+1}(t)\|_{6/5}^2
                                                            +  \| R_k(t) \|_2^2\Big) \leq  \tilde C_{k1}(t).       \label{ib137}
 \end{align}
 This completes the proof of Lemma \ref{lemind1}.
 \end{proof}

\begin{proposition}[$\A_k$ holds]
\label{p.induct1}
 Let $A(t)$ be a smooth solution to the Yang-Mills heat equation over $(0,T)\times M$
with finite action and satisfying either \eref{N2} or \eref{D1} when $M \ne \R^3$.
 Assume that \eref{An1}, \eref{BnL6}, \eref{Bn1}, \eref{AnL6}
 hold for  $1 \le n < k$.
Then there exists a standard dominating function  $C_{k1}$ such that $\A_k$ holds.
\end{proposition}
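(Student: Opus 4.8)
The plan is to feed the integral inequality \eref{Ak2} with $n=k$ into the induction hypothesis. That inequality asserts, whenever its right side is finite, that
\[
t^{2k -\frac{1}{2}}\|A^{(k)}(t)\|_2^2 + \int_0^t s^{2k -\frac 12}\|B^{(k)}(s)\|_2^2\,ds
\le \Big\{\big(2k-\tfrac12\big)\int_0^t s^{2k -\frac 32}\|A^{(k)}(s)\|_2^2\,ds + \int_0^t s^{2k -\frac 12}X_k(s)\,ds\Big\}\,e^{\psi_k(t)},
\]
so the whole task reduces to dominating the two time integrals in braces and the exponential factor by standard dominating functions.

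For the first integral I would note that the weight is $2k-\frac32 = 2(k-1)+\frac12$. Since $k\ge 2$ we have $1\le k-1<k$, so the inductive hypothesis $\C_{k-1}$ of Theorem \ref{MainTh} is available, and it says precisely that $\int_0^t s^{2(k-1)+\frac12}\|A^{(k)}(s)\|_2^2\,ds \le C_{(k-1)3}(t)$. This is the one point that needs a little care: the top-order derivative $A^{(k)}$ does occur inside this integral, but it occurs there with the lower weight $2k-\frac32$ rather than $2k-\frac12$, and at that lower weight it is already controlled by the weaker estimate $\C_{k-1}$ rather than by the $\A_k$ we are trying to prove. This same bound also verifies the finiteness hypothesis \eref{vs90f} that was needed to justify \eref{Ak2} in the first place.

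For the second integral, Lemma \ref{lemind1}, whose hypotheses are exactly those assumed here, gives $\int_0^t s^{2k-\frac12}X_k(s)\,ds \le \bar C_{k1}(t)$ via \eref{ib130}. Finally, by \eref{FA2n} of Lemma \ref{lemFA1} one has $\psi_k(t)\le \lambda_M t + 2(\gamma+c_k)\rho(t)^2$, so $e^{\psi_k(t)}$ is dominated by a standard dominating function. Combining the three estimates, the right side of \eref{Ak2} is at most $\{(2k-\tfrac12)C_{(k-1)3}(t)+\bar C_{k1}(t)\}\,e^{\lambda_M t + 2(\gamma+c_k)\rho(t)^2}$, which is a sum and product of standard dominating functions and hence itself bounded by such a function; taking $C_{k1}$ to be that function finishes the proof for smooth $A(t)$.

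The main obstacle is not located in this proposition at all: it has been absorbed into Lemma \ref{lemind1}, where the interaction estimates \eref{ib100}, \eref{ib103}, \eref{ib105} of Lemma \ref{lembds} must be combined with the full strength of the inductive hypotheses $\A_i,\D_i$ (for the factors involving order $i$) together with $\B_{k-i},\A_{k-i}$ (for the factors involving order $k-i$), for $1\le i\le k-1$, with the powers of $s$ distributed so that every factor is separately integrable. Once that bookkeeping is granted, the present proposition is a short, essentially formal consequence.
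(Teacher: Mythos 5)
Your proof is correct and follows essentially the same route as the paper: apply \eref{Ak2} with $n=k$, bound the first integral by $C_{(k-1)3}(t)$ via the inductive hypothesis $\C_{k-1}$ (noting $2k-\tfrac32=2(k-1)+\tfrac12$), bound the second by $\bar C_{k1}(t)$ via Lemma \ref{lemind1}, and control $e^{\psi_k(t)}$ by \eref{FA2n}. Your additional observation that $\C_{k-1}$ also supplies the finiteness hypothesis \eref{vs90f} needed to invoke \eref{Ak2} is a correct and welcome point of care that the paper leaves implicit.
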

\begin{proof}
Take $n=k$ in \eref{Ak2} to find
\begin{align*}
t^{2k -\frac{1}{2}} &\|A^{(k)}\|_2^2
        + \int_0^t
                        s^{2k -\frac 12} \|B^{(k)}(s) \|_2^2 ds        \notag\\
      &\le \Big\{(2k-\frac 12) \int_0^t
                                         s^{2k -\frac 32} \|A^{(k)}(s)\|_2^2 ds
      + \int_0^t
                     s^{2k -\frac 12} X_k(s) ds\Big\}e^{\psi_k(t)}    \\
      &\leq \Big\{(2k-\frac 12) \, C_{(k-1)3}(t) +  \bar C_{k1}(t)\Big\} e^{\psi_k(t)}
  \end{align*}
where we have used the inductive hypothesis   $\C_{k-1}$
to bound the first term on the right, and Lemma \ref{lemind1}  to bound the second term.
 Using \eref{FA2n} it follows that  there is a standard dominating function $C_{k1}$
 for which $\A_k$ holds.
\end{proof}

\begin{proposition}[$\B_k$ holds]
\label{p.induct2}  Let $A(t)$ as in Proposition \ref{p.induct1}.
   If in Theorem \ref{MainTh}, \eref{An1}, \eref{BnL6}, \eref{Bn1}, \eref{AnL6}
 hold for $n <k$
 then $\B_k$ holds for some standard dominating function $C_{k2}$.
\end{proposition}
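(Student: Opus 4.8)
The plan is to follow the pattern of the proof of $\B_1$ in Corollary \ref{ord1c1}, now exploiting the fact that $\A_k$ has just been established in Proposition \ref{p.induct1}. There are two quantities to control, the pointwise term $t^{2k-\frac12}\|B^{(k-1)}(t)\|_6^2$ and the time integral $\int_0^t s^{2k-\frac12}\|A^{(k)}(s)\|_6^2\,ds$, and each will be bounded by one of the Gaffney--Friedrichs--Sobolev inequalities of Lemma \ref{lemGFS}.

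First I would treat the time integral. Multiplying the GFS inequality \eqref{di35b} (with $n=k$) by $s^{2k-\frac12}$ and integrating over $(0,t)$ produces four pieces on the right. The $\|R_k\|_2^2$ and $\|P_k\|_2^2$ integrals are each dominated by $\int_0^t s^{2k-\frac12}X_k(s)\,ds\le\bar C_{k1}(t)$ by Lemma \ref{lemind1}, since $\|P_k\|_2^2$ and $\|R_k\|_2^2$ are nonnegative summands of $X_k$. The $\int_0^t s^{2k-\frac12}\|B^{(k)}(s)\|_2^2\,ds$ term is $\le C_{k1}(t)$ by $\A_k$. The remaining term $\int_0^t s^{2k-\frac12}\lambda(s)\|A^{(k)}(s)\|_2^2\,ds$ I would handle by writing $\lambda(s)=1+\gamma\|B(s)\|_2^4$, using $s^{1/2}\|B(s)\|_2^2\le\rho(t)$ from \eqref{FA0} and $s^{2k-\frac12}=s\cdot s^{2k-\frac32}$ to reduce it to $(t+\gamma\rho(t)^2)\int_0^t s^{2k-\frac32}\|A^{(k)}(s)\|_2^2\,ds$, which is $\le(t+\gamma\rho(t)^2)C_{(k-1)3}(t)$ by the inductive hypothesis $\C_{k-1}$. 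Each of the three resulting bounds is a standard dominating function.

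For the pointwise term I would multiply the GFS inequality \eqref{di35d} (with $n=k-1$) by $t^{2k-\frac12}$. The $\|A^{(k)}(t)\|_2^2$ term is $\le C_{k1}(t)$ by $\A_k$, and the $\lambda(t)\|B^{(k-1)}(t)\|_2^2$ term, rewritten as $(t\lambda(t))\cdot t^{2k-\frac32}\|B^{(k-1)}(t)\|_2^2$, is $\le(t\lambda(t))C_{(k-1)3}(t)$ by $\C_{k-1}$ and \eqref{FA2t}. It then remains to bound $t^{2k-\frac12}\|Q_k(t)\|_2^2$ and $t^{2k-\frac12}\|S_{k-1}(t)\|_2^2$. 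I would expand these via the estimates \eqref{ib102} and \eqref{ib107} of Lemma \ref{lembds} and distribute the weight $t^{2k-\frac12}$ over the factors of each summand. A generic summand $\|A^{(i)}(t)\|_6^2\,\|B^{(k-1-i)}(t)\|_2\,\|B^{(k-1-i)}(t)\|_6$ carries the total weight $(2i+\frac12)+((k-1-i)+\frac14)+((k-i)-\frac14)=2k-\frac12$, hence is dominated by $C_{i4}(t)\sqrt{C_{(k-1-i)3}(t)}\sqrt{C_{(k-i)2}(t)}$ using $\D_i$, $\C_{k-1-i}$ and $\B_{k-i}$ --- all at levels strictly below $k$ --- with the degenerate case $k-1-i=0$ covered by \eqref{FA0} and the already-proven $\B_1$. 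The second sum in \eqref{ib107}, involving $P_{k-1-i}$, is handled the same way after one further H\"older splitting of $P_{k-1-i}$, producing products of three factors $\|A^{(j)}(t)\|_6$ with all indices $<k$, each governed by the relevant $\D_j$.

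Combining the two parts bounds $\kappa^{-2}\bigl(t^{2k-\frac12}\|B^{(k-1)}(t)\|_6^2+\int_0^t s^{2k-\frac12}\|A^{(k)}(s)\|_6^2\,ds\bigr)$ by a finite sum of standard dominating functions, and $C_{k2}$ may be taken to be $\kappa^2$ times that sum. The main obstacle is the bookkeeping in the previous paragraph: for each summand of $Q_k$, $S_{k-1}$ (and of the $P$- and $R$-terms feeding into the time integral) one must verify that the powers of $t$ balance exactly, so that no residual negative power of $t$ survives, and route each index to the inductive inequality ($\A$, $\B$, $\C$, $\D$ below level $k$, together with $\A_k$) controlling it --- taking particular care with those terms in which a time derivative of $B$ drops to order zero and the bounds \eqref{FA0} and $\B_1$ must be used in place of a $\C$- or $\B$-hypothesis.
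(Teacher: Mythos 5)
Your proposal is correct and follows essentially the same route as the paper's proof: the time integral is handled via the GFS inequality \eref{di35b} with $n=k$ (using Lemma \ref{lemind1}, $\A_k$, and $\C_{k-1}$ with the $t\lambda(t)$ bound \eref{FA2t}), and the pointwise term via \eref{di35d} with $n=k-1$, distributing the weight $t^{2k-\frac12}$ over the factors of $Q_k$ and $S_{k-1}$ exactly as in the paper (your exponent split $(2i+\tfrac12)+(k-i-\tfrac34)+(k-i-\tfrac14)$ is identical to theirs). Your explicit handling of the degenerate $k-1-i=0$ case via \eref{FA0} and $\B_1$ matches the paper's use of $\C_0$ and $\B_1$.
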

\begin{proof}
From \eref{di35d} with $n$ replaced by $k-1$ we find
\begin{align}
\kappa^{-2}  \| B^{(k-1)}(t)\|_6^2 &     \le  \| S_{k-1}\|_2^2 + 2\|Q_{k} \|_2^2 + 2 \|A^{(k)}\|_2^2
            + \lambda(t) \|B^{(k-1)}\|_2^2 .         \label{di45d}
\end{align}
To prove boundedness of the first term in  $\B_k$
it suffices therefore to show that
\begin{align} \label{di35f}
t^{2k -\frac12} \Big(\| S_{k-1}\|_2^2 + 2\|Q_{k} \|_2^2 + 2 \|A^{(k)}\|_2^2
            + \lambda(t) \|B^{(k-1)}\|_2^2 \Big) \leq \tilde C_{k2}(t).
\end{align}
Concerning the second term in \eref{di35f},  observe that, for $1 \le i <k$, there holds
\begin{align*}
t^{2k -\frac12} & \| A^{(i)}(t)\|_6^2 \| B^{(k-1-i)}(t)\|_2\| B^{(k-1-i)}(t)\|_6 \\
&=\Big(t^{2i+ \frac12} \| A^{(i)}(t)\|_6^2\Big) \Big( t^{k-i -\frac34}  \| B^{(k-1-i)}(t)\|_2\Big)
 \Big(t^{k-i -\frac14}\| B^{(k-1-i)}(t)\|_6\Big) \\
&\leq C_{i4}(t)  \sqrt{C_{(k-1-i)3}(t) \, C_{(k-i)2}(t)},
\end{align*}
where we have used the inductive assumption  $D_i$ with $i < k$
in the first factor, the inductive assumption $\C_{k -1 -i}$ with $k-1-i<k$
in the second factor, and the inductive assumption $\B_{k-i}$ with $k-i<k$
in the the third factor. It follows from \eref{ib102}
 that $t^{2k -\frac12} \|Q_k(t)\|_2^2$ is bounded on $(0, T)$.

 By \eref{ib107} the first sum in $\|S_{k-1}(t)\|_2^2$ has similar
  bounds   as the terms in $\|Q_k(t)\|_2^2$ since
  $\|\, [ A^{(i)} \wedge B^{(k-1-i)}]\, \|_2^2\le c^2 \| A^{(i)}\|_6^2  \| B^{(k-1-i)}\|_2  \| B^{(k-1-i)}\|_6$,
  just as in the proof of \eref{ib102}.
             Therefore we need only address the terms of the form
$\|\, [A^{(i)}(t)\wedge P_{k-1-i}(t)]\, \|_2^2$   in \eref{ib107}  for $1 \le i \le k-2$. Replace $n$ by $k-1-i$ in the definition \eref{A11} to find
\begin{align*}
P_{k-1-i}(s) = \sum_{j=1}^{k-i-2} c_{(k-1-i)j}   [A^{(j)}(s)\wedge A^{(k-1-i-j)}(s) ]
\end{align*}
In view of \eref{ib107} it suffices to show that
\begin{align*}
t^{2k -\frac12}\|\, [  A^{(i)}(t) \wedge [ A^{(j)}(t)\wedge A^{(k-1-i-j)}(t) ]\,]\, \|_2^2
\end{align*}
is bounded on $(0, T)$ for  $1 \le i \le k-2$ and $ 1 \le j \le k-i-2$.        But
\begin{equation*}
\begin{split}
t^{2k-\frac 12} & \|\, |A^{(i)}(t)|\, |A^{(j)}(t)|\,|A^{(k-1-i-j)}(t)|   \,\|_2^2 \\
& \leq   \Big(t^{2i+\frac 12} \| A^{(i)}(t)\|_6^2\Big) \,\Big(t^{2j+\frac 12}\|A^{(j)}(t)\|_6^2\Big)
 \,\Big(t^{2k-2-2i-2j+\frac 12} \| A^{(k-1-i-j)}(t)\|_6^2\Big)\\
& \leq \, C_{i4} (t)  \,  C_{j4}(t) \, C_{(k-1-i-j)4}  (t)
\end{split}
\end{equation*}
by the induction hypothesis \eref{AnL6} for various values of $n <k$,
since $i,j\leq k-2$ and $k-1-i-j\leq k-3$. This gives us the boundedness of the first term in \eref{di35f}.

For the third term in \eref{di35f}, we use the inequality  $\A_k$ of Theorem \ref{MainTh},
which has already been proven in Proposition \ref{p.induct1}, to find
\[
t^{2k-\frac 12}  \| A^{(k)} (t) \|_2^2 \leq    \, C_{k1}(t).
\]

Finally,
\[
t^{2k - \frac12} \lambda(t)\|B^{(k-1)}(t)\|_2^2
=\Big(t \lambda(t)\Big) \Big( t^{2(k-1) +\frac12}  \|B^{(k-1)}(t)\|_2^2 \Big),
\]
which is a product of a bounded function, in accordance with \eref{FA2t} and another bounded function,
in accordance with the induction hypothesis $\C_{k-1}$.
Their product is bounded by a standard dominating function  by the usual argument.

We now turn our attention to the integral term of $\B_k$.
We need to prove that
\begin{align}
\int_0^t  s^{2k-\frac 12} \|A^{(k)}(s) \|_6^2 ds \leq \tilde C_{k2}(t)    \label{di45b}
\end{align}
for some standard dominating function $\tilde C_{k2}$.
By the inequality \eref{di35b} it suffices to find $\tilde C_{k2}$ such that
\begin{align*}
\int_0^t     s^{2k-\frac 12}\Big(  \lambda(s) \|A^{(k)}(s)\|_2^2 + 2 \|B^{(k)}(s) \|_2^2  +
                     \| R_k(s) \|_2^2 + 2\|P_k(s)\|_2^2    \Big) ds \leq  \tilde C_{k2}(t).
 \end{align*}
Now
\begin{align*}
\int_0^t  s^{2k-\frac 12}\lambda(s) \|A^{(k)}(s)\|_2^2 ds
&=\int_0^t\Big( s \lambda(s)\Big) s^{2k -\frac32} \|A^{(k)}(s)\|_2^2ds\\
&\leq t \lambda(t) \, C_{(k-1)3} (t)
\end{align*}
by  the inductive hypothesis $\C_{k-1}$.
Moreover $\int_0^t s^{2k-\frac 12}\|B^{(k)}(s) \|_2^2  ds \leq C_{k1}(t)$ by $\A_k$,
whose validity has been proven in Proposition \ref{p.induct1}.
The remaining integrals are finite by \eref{ib133}.
This proves $\B_k$ holds.
\end{proof}

\begin{lemma}  \label{lemind3}
If in Theorem \ref{MainTh}  the inequalities \eref{An1}, \eref{BnL6}, \eref{Bn1}, \eref{AnL6}
hold  for $1 \le n<k$ then
\begin{align*}
\int_0^t
s^{2k +\frac12} Y_k(s) ds \leq \bar C_{k3}(t)
\end{align*}
for some standard dominating function $\bar C_{k3}$ with $Y_k$  defined by \eref{di63}.
\end{lemma}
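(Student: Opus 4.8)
The plan is to expand $Y_k(t)=\|Q_{k+1}(t)\|_2^2+\kappa^2\|P_{k+1}(t)\|_{6/5}^2+\|S_k(t)\|_2^2$ (the definition \eqref{di63}) and to bound $\int_0^t s^{2k+\frac12}(\,\cdot\,)\,ds$ for each of the three pieces separately. For each piece I would insert the explicit H\"older estimates of Lemma \ref{lembds} (with $n=k+1$ for $P$ and $Q$, with $n=k$ for $S$), so that the integrand becomes a finite sum of explicit products of $L^2$ and $L^6$ norms of time derivatives of $A$ and $B$, and then distribute the weight $s^{2k+\frac12}$ among the factors so that each factor is either bounded pointwise or integrable in $s$. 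The estimates available for this are: the inductive hypotheses $\A_n,\B_n,\C_n,\D_n$ for $1\le n<k$; the already-proved $\A_k$ and $\B_k$ (Propositions \ref{p.induct1} and \ref{p.induct2}); and, for order-zero factors, $t^{1/2}\|B(t)\|_2^2\le\rho(t)$ (from \eqref{FA0}) and $t^{3/2}\|B(t)\|_6^2\le C_{12}(t)$ (from $\B_1$, Corollary \ref{ord1c1}). Every factor occurring carries a time derivative of $A$ or $B$ of order $\le k$, so this list suffices.

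For $\|Q_{k+1}\|_2^2$, the estimate \eqref{ib102} with $n=k+1$ is a finite sum over $1\le i\le k$ of $\|A^{(i)}\|_6^2\|B^{(k-i)}\|_2\|B^{(k-i)}\|_6$. For $1\le i\le k-1$ I would write
\[
s^{2k+\frac12}\|A^{(i)}\|_6^2\|B^{(k-i)}\|_2\|B^{(k-i)}\|_6=\bigl(s^{2i-\frac12}\|A^{(i)}\|_6^2\bigr)\bigl(s^{(k-i)+\frac14}\|B^{(k-i)}\|_2\bigr)\bigl(s^{(k-i)+\frac34}\|B^{(k-i)}\|_6\bigr),
\]
bound the last two factors pointwise by $\sqrt{C_{(k-i)3}(t)}$ and $\sqrt{C_{(k-i+1)2}(t)}$ using $\C_{k-i}$ and $\B_{k-i+1}$ (available since $1\le k-i\le k-1$ and $2\le k-i+1\le k$), and integrate $s^{2i-\frac12}\|A^{(i)}(s)\|_6^2$ against $ds$ by $\B_i$. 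For $i=k$ the splitting is instead $\bigl(s^{2k-\frac12}\|A^{(k)}\|_6^2\bigr)\bigl(s^{\frac14}\|B\|_2\bigr)\bigl(s^{\frac34}\|B\|_6\bigr)$, with the last two factors bounded by $\sqrt{\rho(t)}$ and $\sqrt{C_{12}(t)}$ and the first integrated by $\B_k$. The first sum $\sum_{i=1}^k\|[A^{(i)}\wedge B^{(k-i)}]\|_2^2$ appearing in \eqref{ib107} for $\|S_k\|_2^2$ is, by the same H\"older steps that produced \eqref{ib102}, dominated by exactly these quantities, so it needs no separate treatment.

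For $\kappa^2\|P_{k+1}\|_{6/5}^2$, the estimate \eqref{ib101} with $n=k+1$ is a finite sum over $1\le i\le k$ of $\|A^{(i)}\|_6\|A^{(i)}\|_2\|A^{(k+1-i)}\|_2^2$; writing $j=k+1-i$ I would bound $s^{2j-\frac12}\|A^{(j)}\|_2^2$ pointwise by $C_{j1}(t)$ (by $\A_j$, which for $j=k$ is the proved $\A_k$), leaving $C_{j1}(t)\int_0^t\bigl(s^{i-\frac14}\|A^{(i)}\|_6\bigr)\bigl(s^{i-\frac34}\|A^{(i)}\|_2\bigr)\,ds$, to which Cauchy--Schwarz in $s$ applies and produces $\bigl(\int_0^t s^{2i-\frac12}\|A^{(i)}\|_6^2\,ds\bigr)^{1/2}\bigl(\int_0^t s^{2i-\frac32}\|A^{(i)}\|_2^2\,ds\bigr)^{1/2}\le\sqrt{C_{i2}(t)}\,\sqrt{C_{(i-1)3}(t)}$ by $\B_i$ and $\C_{i-1}$ (using $\C_0$, i.e.\ $\int_0^t s^{1/2}\|A'\|_2^2\,ds\le\rho(t)$, when $i=1$, and $\B_k$ when $i=k$). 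This is the device already used for the term coming from $P_2$ in Lemma \ref{ord1c3}.

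Finally, in the second sum of \eqref{ib107} the terms $\|[A^{(i)}\wedge P_{k-i}]\|_2^2$ with $1\le i\le k-1$ vanish unless $k-i\ge2$, in which case $P_{k-i}=\sum_m c_{(k-i)m}[A^{(m)}\wedge A^{(k-i-m)}]$, so these reduce to $L^2$ norms of triple products $[A^{(i)}\wedge[A^{(j)}\wedge A^{(l)}]]$ with $i+j+l=k$ and $i,j,l\ge1$, hence all three indices in $[1,k-2]$. H\"older's inequality ($\tfrac16+\tfrac16+\tfrac16=\tfrac12$) gives $\|\,|A^{(i)}|\,|A^{(j)}|\,|A^{(l)}|\,\|_2^2\le\|A^{(i)}\|_6^2\|A^{(j)}\|_6^2\|A^{(l)}\|_6^2$, and I would split
\[
s^{2k+\frac12}\|A^{(i)}\|_6^2\|A^{(j)}\|_6^2\|A^{(l)}\|_6^2=\bigl(s^{2i-\frac12}\|A^{(i)}\|_6^2\bigr)\bigl(s^{2j+\frac12}\|A^{(j)}\|_6^2\bigr)\bigl(s^{2l+\frac12}\|A^{(l)}\|_6^2\bigr),
\]
the exponents summing to $2k+\frac12$ since $i+j+l=k$; bounding the last two factors pointwise by $C_{j4}(t)$ and $C_{l4}(t)$ via $\D_j$ and $\D_l$ (legitimate since $j,l\le k-2$) and integrating the first via $\B_i$ gives $C_{j4}(t)C_{l4}(t)C_{i2}(t)$. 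Summing the finitely many contributions and using that sums, products and square roots of standard dominating functions are again standard dominating functions yields $\bar C_{k3}(t)$. The only point demanding care is the exponent bookkeeping displayed above; the single genuine subtlety is that the top-order factors ($A^{(k)}$, and $\|B\|_6$ among the order-zero factors) must be controlled through $\A_k$, $\B_k$, $\B_1$ rather than through $\C_k$, $\D_k$, which are not yet available at this stage of the induction.
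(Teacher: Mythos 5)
Your proposal is correct and follows essentially the same route as the paper's proof: the same reduction of $Y_k$ via Lemma \ref{lembds} to the four types of integrals, the same distribution of the weight $s^{2k+\frac12}$ among the factors, and the same appeals to the inductive hypotheses together with the already-established $\A_k$ and $\B_k$ (the paper folds your separate $i=k$ case of the $Q_{k+1}$ sum into the general citation of $\C_{k-i}$ and $\B_{k-i+1}$, which amounts to the same bounds $\rho(t)$ and $C_{12}(t)$). The exponent bookkeeping you display matches the paper's exactly.
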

\begin{proof}
In view of the definition \eref{di63} of $Y_k$ we need to show that there is a standard
 dominating function $\bar C_{k3}$ such that
\begin{align}
\int_0^t
 s^{2k+ \frac12} \Big(\|Q_{k+1}(s) \|_2^2 + \kappa^2 \|P_{k+1}(s)\|_{6/5}^2
                  + \| S_k(s)\|_2^2\Big) ds \leq \bar C_{k3}(t).            \label{ind151}
\end{align}
By the bounds \eref{ib102}, \eref{ib101}, \eref{ib107} it suffices to show that each of the
following  integrals is bounded  by a standard dominating function.
\begin{align}
\int_0^t  s^{2k+ \frac12} \| A^{(i)}(s)\|_6^2 \| B^{(k-i)}(s)\|_2\| B^{(k-i)}(s)\|_6 ds, \ \ \
                                                                         1 \le i \le k                           \label{ind153}\\
\int_0^t s^{2k+ \frac12}  \| A^{(i)}(s)\|_6  \| A^{(i)}(s)\|_2\|A^{(k+1-i)}(s) \|_2^2ds, \ \ \
                                                                        \ 1\le i \le k                          \label{ind154}\\
\int_0^t s^{2k+ \frac12}\| \, [A^{(i)}(s) \wedge B^{(k-i)}(s)\|_2^2  ds,\ \ \ \   1\le i \le k  \label{ind155}\\
\int_0^t s^{2k+ \frac12}\| \, [A^{(i)}(s) \wedge P_{k-i}(s)]\, \|_2^2  ds. \ \ \ \ 1\le i <k. \label{ind156}
\end{align}
For \eref{ind153} observe that
\begin{align*}
s^{2k + \frac 12} &\| A^{(i)}(s)\|_6^2 \| B^{(k-i)}(s)\|_2\| B^{(k-i)}(s)\|_6 \\
& =   (s^{2i- \frac 12} \|A^{(i)}(s)\|_6^2)
     \, (s^{k-i+\frac 14}\|B^{(k-i)}(s)\|_2)                 \, (s^{k-i+\frac 34}\| B^{(k-i)}(s)\|_6)   \\
& \leq   (s^{2i-\frac 12} \|A^{(i)}(s)\|_6^2)\, \sqrt{ C_{(k-i)3}(t) \,  \  C_{(k-i+1)2}(t) }
\end{align*}
by the inductive hypothesis $\C_{k-i}$ of Theorem \ref{MainTh},  since $k-i < k$, and by $\B_{k-i+1}$, because $k-i+1 \le k$ for $i= 1,\dots,k$.
The integrability of the first factor is also assured by $\B_i$, which holds for $i\le k$ by Proposition \ref{p.induct2}.
Therefore the integral in \eref{ind153} is finite for $1\le i \le k$.

The integral in \eref{ind154} can be estimated as follows.
\begin{align*}
s^{2k+ \frac 12} &\| A^{(i)}(s)\|_6  \| A^{(i)}(s)\|_2\|A^{(k+1-i)}(s) \|_2^2         \notag \\
& = (s^{i-\frac 14}  \|  A^{(i)}(s)\|_6 ) \, (s^{i - \frac 34} \|  A^{(i)}(s)\|_2)
                      \, (s^{2k-2i+ \frac 32}\| A^{(k+1-i)}(s)  \|_2^2 )                            \notag \\
& \leq    (s^{i-\frac 14}  \|  A^{(i)}(s)\|_6 ) \, (s^{i - \frac 34} \|  A^{(i)}(s)\|_2) \,   C_{(k-i+1)1}(t)
\end{align*}
by $\A_{k-i+1}$, which holds for $i = 1,\dots, k$ by the hypotheses of this lemma
and    Proposition \ref{p.induct1}.
Therefore, by H\"older's inequality for the time integral, we have
\begin{align*}
\int_0^t s^{2k+ \frac12}  &\| A^{(i)}(s)\|_6  \| A^{(i)}(s)\|_2\|A^{(k+1-i)}(s) \|_2^2ds \\
&\le  \left(\int_0^t    s^{2i-\frac 12}  \|  A^{(i)}(s)\|_6^2\, ds\right)^{\frac 12}
             \left(\int_0^t  s^{2i - \frac 32} \|  A^{(i)}(s)\|_2^2\, ds \right)^{\frac 12} \,   C_{(k-i+1)1}(t) \\
& \leq  \sqrt{ C_{i2}(t)  \, C_{(i-1)3}(t) } \, C_{(k-i+1)1}(t)
\end{align*}
wherein the first integral is dominated by $\B_i$ of  Theorem \ref{MainTh},  which is valid for all $i\le k$
by the  hypotheses of this lemma and Proposition \ref{p.induct2},
 and the second integral is dominated
in accordance with $\C_{i-1}$, which is valid for $i\le k$ because $i-1 < k$.
Hence  the integral in \eref{ind154} is bounded by a standard dominating function.

The integral in \eref{ind155} can be treated exactly  as the integral in \eref{ind153},
since our use of H\"older's inequality in deriving \eref{ib102} applies equally well here.

To estimate the integral in \eref{ind156} replace $n$ by $k-i$ in the definition \eref{A11} to find
\begin{align*}
P_{k-i}(s) = \sum_{j=1}^{k-i-1} c_{(k-i)j}   [A^{(j)}(s)\wedge A^{(k-i-j)}(s) ]
\end{align*}
From this we see that it suffices to show that
\begin{align*}
\int_0^t s^{2k+\frac12} \|\, [A^{(i)}(s) \wedge [A^{(j)}(s)\wedge A^{(k-i-j)}(s) ]\, ]\,   \|_2^2 ds \leq \tilde C_{k3}(t)
\end{align*}
for some standard dominating function $\tilde C_{k3}$ for $ 1 \le i < k$
and $1\le j \le k-i-1$. But, by H\"older's inequality,
\begin{equation*}
\begin{split}
s^{2k+ \frac 12} &\|\, |A^{(i)}(s)|\, |A^{(j)}(s)|\,|A^{(k-i-j)}(s)| \,\|_2^2 \\
&\leq  \Big(s^{2i-\frac12} \|A^{(i)}(s) \|_6^2\Big) \, \Big(s^{2j+\frac 12}  \| A^{(j)}(s) \|_6^2 \Big)
\Big(s^{2k-2i-2j+\frac 12}  \| A^{(k-i-j)}(s) \|_6^2 \Big)\\
&\leq  \Big(s^{2i-\frac 12} \|A^{(i)}(s) \|_6^2\Big) \,   C_{j4}(t)  C_{(k-i-j)4}(t)
\end{split}
\end{equation*}
by $\D_j$ and $\D_{k-i-j}$, both of which hold  in accordance with the hypotheses of this lemma,
since both subscripts are strictly less than $ k$.
The integrability of the first factor  follows from $\B_i$, which holds because $ i <k$.

This completes the proof of Lemma \ref{lemind3}.
\end{proof}

\begin{proposition}[$\C_k$ holds]
 \label{p.induct3}
 Let $A(t)$ as in Proposition \ref{p.induct1}.
   If in Theorem \ref{MainTh},    \eref{An1}, \eref{BnL6}, \eref{Bn1}, \eref{AnL6}
 hold  for $1 \le n <k$   then $\C_k$ holds.
\end{proposition}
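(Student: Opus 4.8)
The plan is to feed the induction hypothesis, together with $\A_k$ (already established in Proposition \ref{p.induct1}) and Lemma \ref{lemind3}, into the integral inequality \eref{Bk2} of Corollary \ref{corl50} taken with $n=k$. That inequality asserts, provided its right side is finite, that
\[
t^{2k+\frac 12}\|B^{(k)}(t)\|_2^2 + \int_0^t s^{2k+\frac 12}\|A^{(k+1)}(s)\|_2^2\,ds \le \Big\{(2k+\tfrac12)\int_0^t s^{2k-\frac12}\|B^{(k)}(s)\|_2^2\,ds + \int_0^t s^{2k+\frac12} Y_k(s)\,ds\Big\}\, e^{\psi(t)},
\]
so the task reduces to bounding each ingredient on the right by a standard dominating function, which simultaneously supplies the finiteness needed to invoke the corollary.

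First I would observe that $\int_0^t s^{2k-\frac12}\|B^{(k)}(s)\|_2^2\,ds$ is precisely the second term on the left side of $\A_k$; since $\A_k$ has been proven in Proposition \ref{p.induct1} under the current hypotheses, this integral is at most $C_{k1}(t)$. Next, $\int_0^t s^{2k+\frac12} Y_k(s)\,ds \le \bar C_{k3}(t)$ by Lemma \ref{lemind3}, whose hypothesis is exactly that \eref{An1}, \eref{BnL6}, \eref{Bn1}, \eref{AnL6} hold for $1\le n<k$. Finally, $e^{\psi(t)}$ is bounded by a standard dominating function by \eref{FA2} of Lemma \ref{lemFA1}. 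Multiplying the three bounds and using that products of standard dominating functions are again standard dominating yields a standard dominating function $C_{k3}(t)$ for which $\C_k$ holds.

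Since every ingredient is already available, there is no real obstacle here; the proposition is essentially a bookkeeping step that assembles $\A_k$, Lemma \ref{lemind3}, and the a priori bound on $\psi$. The only point deserving a moment's care is the finiteness of the right side of \eref{Bk2}, but this is immediate from the three estimates just cited.
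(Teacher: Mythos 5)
Your proof is correct and follows essentially the same route as the paper: apply \eqref{Bk2} with $n=k$, bound the first integral by $C_{k1}(t)$ via $\A_k$ from Proposition \ref{p.induct1}, bound the $Y_k$ integral by Lemma \ref{lemind3}, and control $e^{\psi(t)}$ via \eqref{FA2}. Nothing is missing.
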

 \begin{proof}
 Setting $n = k$ in \eref{Bk2}, we see that it suffices to show that
\begin{align*}
\Big\{(2k +\frac12) \int_0^t
     s^{2k -\frac12} \|B^{(k)}(s) \|_2^2 ds
         + \int_0^t
          s^{2k +\frac 12} Y_k(s) ds\Big\} e^{\psi(t)}  \leq C_{k3}(t)
  \end{align*}
  for some standard dominating function $C_{k3}$.
By Lemma \ref{lemind3} the second integral is bounded by $\bar C_{k3}(t)$.
The first integral is also bounded by a standard bounding function
since $\A_k$ holds,  as was proven in Proposition \ref{p.induct1}.
This proves that $\C_k$ holds in view of \eref{FA2}.
\end{proof}

\begin{proposition}[$\D_k$ holds] \label{p.induct4}
  Let $A(t)$ as in Proposition \ref{p.induct1}.
    If \eref{An1}, \eref{BnL6}, \eref{Bn1}, \eref{AnL6}
  hold for  $1 \le n <k$   then $\D_k$ holds.
\end{proposition}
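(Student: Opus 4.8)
The plan is to run the same argument that proved $\D_1$ in Corollary \ref{ord2c1}, now using the general Gaffney-Friedrichs-Sobolev inequalities \eref{di35b} and \eref{di35d} in place of their $n=1$ forms, and feeding in the two facts that have just been established at level $k$, namely $\A_k$ and $\C_k$ (Propositions \ref{p.induct1} and \ref{p.induct3}), together with the bounds on the lower-order terms $X_k$ and $Y_k$ obtained in Lemmas \ref{lemind1} and \ref{lemind3}. No exponential-weight machinery is required here, since the GFS inequalities are applied pointwise in $t$ and then, for the integral term, simply integrated.

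For the pointwise term of $\D_k$ I would multiply the GFS inequality \eref{di35b} (with $n=k$) by $t^{2k+\frac12}$, obtaining
\[
\kappa^{-2}\,t^{2k+\frac12}\|A^{(k)}(t)\|_6^2 \le t^{2k+\frac12}\bigl(\|R_k(t)\|_2^2 + 2\|P_k(t)\|_2^2\bigr) + 2\,t^{2k+\frac12}\|B^{(k)}(t)\|_2^2 + \bigl(t\lambda(t)\bigr)\bigl(t^{2k-\frac12}\|A^{(k)}(t)\|_2^2\bigr).
\]
The first group on the right is dominated by a standard dominating function by \eref{ib137} of Lemma \ref{lemind1} (it is bounded by a multiple of $t^{2k+\frac12}X_k(t)$); the term $t^{2k+\frac12}\|B^{(k)}(t)\|_2^2$ is bounded by $C_{k3}(t)$ by $\C_k$; and in the last product $t\lambda(t)$ is bounded by \eref{FA2t} while $t^{2k-\frac12}\|A^{(k)}(t)\|_2^2\le C_{k1}(t)$ by $\A_k$, so the product is again a standard dominating function by the usual argument.

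For the integral term I would multiply the GFS inequality \eref{di35d} (with $n=k$) by $s^{2k+\frac12}$ and integrate over $(0,t)$, which gives
\[
\kappa^{-2}\!\int_0^t\! s^{2k+\frac12}\|B^{(k)}(s)\|_6^2\,ds \le \int_0^t\! s^{2k+\frac12}\bigl(\|S_k(s)\|_2^2 + 2\|Q_{k+1}(s)\|_2^2\bigr)ds + 2\!\int_0^t\! s^{2k+\frac12}\|A^{(k+1)}(s)\|_2^2\,ds + t\lambda(t)\!\int_0^t\! s^{2k-\frac12}\|B^{(k)}(s)\|_2^2\,ds.
\]
The first integral is bounded by a multiple of $\int_0^t s^{2k+\frac12}Y_k(s)\,ds$, hence by $\bar C_{k3}(t)$ by Lemma \ref{lemind3}; the second is bounded by $C_{k3}(t)$ by $\C_k$; and the third is at most $t\lambda(t)\,C_{k1}(t)$ by $\A_k$, with $t\lambda(t)$ bounded via \eref{FA2t}. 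Adding the two parts produces $\D_k$ with a standard dominating function $C_{k4}$.

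I do not expect a genuine obstacle: the real work lies in the preparatory lemmas. The only point needing care is to use the GFS inequalities in their ``half-Laplacian'' forms \eref{di35b} and \eref{di35d} -- that is, to replace $\|d_A A^{(k)}\|_2^2$ and $\|d_A^* B^{(k)}\|_2^2$ by means of the pointwise identities \eref{A11} and \eref{A12} -- so that the right-hand sides contain only $L^2$ norms of $A^{(k)}$, $B^{(k)}$ and $A^{(k+1)}$, already controlled by $\A_k$ and $\C_k$, and the lower-order polynomials $P_k, R_k, Q_{k+1}, S_k$, already controlled by Lemmas \ref{lemind1} and \ref{lemind3}.
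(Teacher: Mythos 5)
Your proposal is correct and follows essentially the same route as the paper's proof: both multiply the GFS inequalities \eref{di35b} and \eref{di35d} (with $n=k$) by the appropriate power of $t$ (resp. $s$, then integrate), and dispose of the resulting terms via $\A_k$, $\C_k$, \eref{FA2t}, \eref{ib137} and Lemma \ref{lemind3} in exactly the way the paper does. The only cosmetic difference is that the paper adds the two inequalities before estimating, whereas you treat the pointwise and integral parts separately.
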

\begin{proof}
From \eref{di35b} and \eref{di35d} we find
\begin{align}
&\kappa^{-2} t^{2k+\frac12}\|A^{(k)}(t)\|_6^2  +\kappa^{-2}\int_0^ts^{2k+1}\|B^{(k)}(s)\|_6^2 ds   \notag\\
&\le   t^{2k+\frac12} \Big( \lambda(t) \|A^{(k)}(t)\|_2^2 + 2 \|B^{(k)}(t) \|_2^2
                               + \| R_k(t) \|_2^2 + 2\|P_k(t)\|_2^2\Big)  \label{di35b'}\\
                               &+\int_0^t s^{2k + \frac12}\Big(\lambda(s) \|B^{(k)}(s)\|_2^2  + 2 \|A^{(k+1)}(s)\|_2^2
                      + \| S_k(s)\|_2^2 + 2\|Q_{k+1} (s)\|_2^2 \Big) ds   \label{di35d'}
\end{align}
In order to prove $\D_k$  we need to show that this sum is bounded by a standard bounding
 function. Concerning the line \eref{di35b'}, the identity
\begin{align*}
 t^{2k+ \frac12} \lambda(t) \| A^{(k)}(t)\|_2^2
=\Big(t\lambda(t)\Big) \Big(  t^{2k -\frac12}  \| A^{(k)}(t)\|_2^2\Big),
\end{align*}
together with \eref{FA2t} and the already established bound $\A_k$ show this term is bounded by a standard
dominating function.
Moreover,
$$
t^{2k+ \frac12} \|B^{(k)}(t) \|_2^2 \leq C_{k3}(t)
$$
by $\C_k$,  which has already been proven in Proposition \ref{p.induct3}.
Further, $ t^{2k+ \frac12} ( \| R_k(t) \|_2^2 + 2\|P_k(t)\|_2^2)$ is suitably dominated, as has been shown in
\eref{ib137}. Thus the line \eref{di35b'} is bounded by a standard dominating function.

With respect to the line \eref{di35d'} observe that
\begin{align*}
\int_0^t s^{2k +\frac12} \lambda(s) \|B^{(k)}(s)\|_2^2 ds
&=\int_0^t \Big( s\lambda(s)\Big)  \Big( s^{2k -\frac12} \|B^{(k)}(s)\|_2^2\Big) ds \\
&\leq t\lambda(t) \, C_{k1}(t)
\end{align*}
by $\A_k$, which has been proved in  Proposition \ref{p.induct1}.
Furthermore
\[
\int_0^t  s^{2k +\frac12}  \|A^{(k+1)}(s)\|_2^2 ds \leq  C_{k3}(t)
\]
by $\C_k$, which has been proved in Proposition \ref{p.induct3}.
Thus in view of \eref{di35d'} we need only show that
\[
\int_0^t  s^{2k +\frac12} \Big( \| S_k(s)\|_2^2 + 2\|Q_{k+1} (s)\|_2^2\Big) ds \leq \bar C_{k4}(t)
\]
for some standard dominating function $\bar C_{k4}$. But this has already been shown in \eref{ind151}.
This concludes the proof of Proposition \ref{p.induct4}.
\end{proof}

\bigskip
\noindent
\begin{proof}[Proof of Theorem \ref{MainTh}]
    All of the inequalities  \eref{An1} -\eref{AnL6}
    have been established by induction under the
    assumption that the solution $A(\cdot)$  has finite action and under the technical assumption
    that the solution is smooth. The first assumption is necessary  because the bounds
    are given in terms of the action $\rho(t)$. The second assumption is needed to
     justify the computations. Here the additional hypothesis that $\|A_0\|_{H_{1/2}}$ is
      small enters because it ensures, as in Theorem \ref{thmFA}, that there is a gauge function
      $g_0 \in \G_{3/2}$ which transforms the solution to a smooth solution.
      Having such a gauge function enables the following argument. Let $A(\cdot)$ denote the
      finite action solution specified in Theorem \ref{MainTh} and let
      $\hat A(t) = A(t)^{g_0} \equiv g_0^{-1} A(t) g_0 + g_0^{-1} dg_0$ be the smooth solution
      obtained, as in Theorem \ref{thmFA}  and satisfying either \eref{N2} or \eref{D1}
      when $M \ne \R^3$.
      Since $g_0$ is time independent we have
      $(d/dt)^n \hat A(t) = g_0^{-1}  A^{(n)}(t) g_0$ for $n \ge 1$ (but not for $n =0$).
      Similarly, $(d/dt)^n \hat B(t) = g_0^{-1} B^{(n)} g_0$ for  $ n \ge 0$. Hence
      $\|(d/dt)^n \hat A(t)\|_2 = \| A^{(n)}(t)\|_2$ and $\|(d/dt)^n \hat B(t)\|_2 = \| B^{(n)}(t)\|_2$.
      Moreover $\p_j^{\hat A(t)} (g^{-1} \w g) = g^{-1} (\p_j^{A(t)} \w) g$ for any $\kf$ valued
      p-form $\w$ on $M$. Taking e.g. $\w = A^{(n)}(t)$,  this shows that
      $\| \p_j^{\hat A(t)} (d/dt)^n\hat A(t)\|_2 =   \|\p_j^{A(t)} A^{(n)}(t)\|_2$ and in particular
      $\| (d/dt)^n \hat A(t) \|_{H_1^{\hat A(t)}} = \| A^{(n)}(t)\|_{H_1^{A(t)}}$.
      (In Notation  \ref{ginvsob} we have suppressed $t$ in the subscripts.)
      In this way all of the quantities estimated in Theorem \ref{MainTh} and
       Corollary \ref{Maincor} can be estimated instead for the same gauge invariant
       functionals  of  the smooth solution $\hat A(\cdot)$. Since all of the dominating functions
       $C_{nj}$  are also gauge invariant, the inequalities of Theorem \ref{MainTh}
       and Corollary \ref{Maincor}, having been established for $\hat A$ apply equally to $A$.
This completes the proof of Theorem \ref{MainTh}.
\end{proof}

\bigskip
\noindent
\begin{proof}[Proof of Corollary \ref{Maincor}]
In the proofs of the inequalities \eref{BnL6} and \eref{AnL6}
 of Theorem \ref{MainTh}
we used the bounds \eref{di35a}, \eref{di35b}, \eref{di35c}
 and \eref{di35d} to bound the $L^6$ norms of $A^{(n)}(t)$ and $B^{(n)}(t)$.
 But the same right hand sides also bound the $H_1^A$ norms of these quantities.
 Thus if in \eref{di45d} one replaces $\kappa^{-2}\|B^{(k-1)}(t)\|_6^2$ by
 $(1/2) \|B^{(k-1)}(t)\|_{H_1^A}^2$ and one replaces in \eref{di45b} $\|A^{(k)}(s)\|_6^2$
 by  $(\kappa^2/2)\|A^{(k)}(s)\|_{H_1^A}^2$ then the proof of Proposition \ref{p.induct2}
 proves  that  the inequality \eref{BnH} of Corollary \ref{Maincor}
 holds for $n =k$. Similarly,  one need only replace the $L^6$ norms on the left hand side
 of \eref{di35b'} plus  \eref{di35d'} by  $H_1^A$ norms to find correct inequalities
 which yield the inequality \eref{AnH}
 of Corollary \ref{Maincor} with $n =k$,   via the proof of Proposition \ref{p.induct4}.
 No further induction  is needed  because the $L^2$ and $L^6$ bounds needed
  in these two  proofs have already been proven in Theorem \ref{MainTh}.
 \end{proof}

\bigskip

\begin{remark}\label{remptwise}{\rm  (Pointwise bounds) In \cite{CG2}  we derived pointwise bounds on
$A'(t,x)$ and $B(t,x)$ by a Neumann domination technique in the case $A(0)$ was in $H_1(M)$.
In that instance we took $M$ to be a compact three manifold with convex boundary.
 Pointwise bounds for $B(t,x)$ were derived  in \cite{G70}  in the case $A(0)$ is in $H_{1/2}(M)$
 and $M$ is either all of $\R^3$ or is a bounded convex set in $\R^3$ with smooth boundary.
 It seems likely that these techniques could yield pointwise bounds on all of the derivatives
 $A^{(n)}(t,x)$ and $B^{(n)}(t, x)$ with the help of the results in this paper if $M=\R^3$.
 We have not pursued this. But if $M \ne \R^3$ then some steps in the Neumann domination
  technique break down  because of boundary value problems for derivatives of $B$.
  For example if one  wishes to obtain pointwise bounds on $B'(t,x)$ when the solution
  $A(\cdot)$ satisfies  Dirichlet
  boundary conditions then the technique requires that $(d_A^*B')_{tan} =0$.
  But this boundary condition need not hold  when the solution $A(\cdot)$ merely satisfies
  Dirichlet boundary conditions.  Moreover failure to obtain the behavior of $\|B'(t)\|_{L^\infty(M)}$
  as $ t \downarrow 0$ leads, in turn, to failure to obtain pointwise bounds on $A''$,
  even though the required boundary conditions hold for $A''$.
}
\end{remark}

\bibliographystyle{amsplain}
\bibliography{ymh}

\Addresses

\end{document}